\definecolor{y}{HTML}{fff7bc}
\newcommand{\hilite}[1]{\colorbox{y}{\(#1\)}}
\newcommand*\bigcdot{\mathpalette\bigcdot@{.75}}
\newcommand*\bigcdot@[2]{\mathbin{\vcenter{\hbox{\scalebox{#2}{$\m@th#1\bullet$}}}}}
\newcommand\withappendix{}
\newcommand\appcref[1]{\cref{#1}}
\newcommand\appcref[1]{the appendix}
\tikzset{%
  zeroarrow/.style = {-stealth,densely dashed},
  onearrow/.style = {-stealth,solid},
  c/.style = {circle,draw,solid,minimum width=1em, minimum height=1em},
  r/.style = {rectangle,draw,solid,minimum width=1em, minimum height=1em}
  0 my edge/.style={densely dashed, my edge, gray,font=\ttfamily},
  my edge/.style={-{Stealth[]}},
}
\newcommand{\multippl}[0]{\textsc{MultiPPL}}
\newcommand{\dice}[0]{\textsc{Dice}}
\newcommand{\stan}[0]{\textsc{Stan}}
\renewcommand\x{x} %
\tikzset{>=latex}
\lstdefinestyle{basic}{
 columns=[c]fixed,
 basicstyle=\footnotesize\ttfamily,
 keywordstyle=\bfseries,
 upquote=true,
 commentstyle=,
 breaklines=true,
 showstringspaces=false,
 breakatwhitespace=false,
 captionpos=b,
 keepspaces=true,
 numbers=left,
 numbersep=5pt,
 showspaces=false,
 showtabs=false,
 tabsize=2,
 mathescape=true,
 keywords={exact,flip,observe},
 keywordstyle=\color{indigo50},
 morekeywords={exact,flip,return,observe},
}
\lstdefinelanguage{custom}{
  columns=[c]fixed,
}
\title{Multi-Language Probabilistic Programming}
\begin{abstract}
There are many different probabilistic programming languages that are
specialized to specific kinds of probabilistic programs. From a usability and
scalability perspective, this is undesirable: today, probabilistic programmers
are forced up-front to decide which language they want to use and cannot
mix-and-match different languages for handling heterogeneous programs.
  To rectify this, we seek a foundation for sound interoperability for
probabilistic programming languages: just as today's Python programmers can
resort to low-level  C programming for performance, we argue that probabilistic
programmers should be able to freely mix different languages for meeting the
demands of heterogeneous probabilistic programming  environments. As a first
step towards this goal, we introduce \textsc{MultiPPL}, a probabilistic
multi-language that enables programmers to interoperate between two different
probabilistic programming languages: one that leverages a  high-performance
exact discrete inference strategy, and one that uses approximate importance
sampling.  We give a syntax and semantics for \textsc{MultiPPL}, prove soundness
of its inference algorithm, and provide empirical evidence that it enables
programmers to perform inference on complex heterogeneous probabilistic
programs and flexibly exploits the strengths and weaknesses of two languages
simultaneously.%
\end{abstract}
\begin{document}

\maketitle

\section{Introduction}
Scalable and reliable probabilistic inference remains a significant barrier for
applying and using probabilistic programming languages (PPLs) in practice. The
core of the inference challenge is that there is no universal approach:
different kinds of inference strategies are specialized for different
kinds of probabilistic programs. For example, \textsc{Stan}'s inference strategy
is highly effective on continuous and differentiable programs such as
hierarchical Bayesian models, but struggles on programs with high-dimensional
discrete structure such as graph and network reachability~\citep{carpenter2017stan}. On the other extreme, languages
like \textsc{Dice} and \textsc{ProbLog} scale well on purely-discrete problems,
but the price for this scalability is that they must forego any support
whatsoever of continuous probability
distributions~\citep{holtzen2020scaling,fierens2015inference}.
In an ideal world, a probabilistic programmer would not have to commit to one language
or the other: they could use a \dice{}-like language for high-performance scalable inference on the
discrete portion of the program, a \stan{}-like language for the portion to which
it is well-suited, and be able to transfer data and control-flow between these
two languages for heterogeneous programs.

This raises a key question: how should we orchestrate the handoff between two
probabilistic programming languages whose underlying semantics may be radically
different and seemingly incompatible? This question of sound language
interoperability has been extensively explored in the context of non-probabilistic
languages~\citep{matthews2007operational,patterson2022semantic,new2016fully,tobin2011languages,dagand2018foundations,scherer2018fab,tov2010stateful},
where the goal is to prove properties such as type-soundness and termination
in a multi-language setting.
As a starting point, \citet{matthews2007operational} introduced an effective model for capturing the
interaction between two languages by \emph{language embedding}:
the syntax and operational semantics of a Scheme-like language and an
ML-like language are unified into
a multi-language and syntactic
boundary terms are added to mediate transfer of control and data between the two languages.
Using this embedding approach, they were able to establish the type-soundness of the
multi-language. Their approach
relied on a careful enforcement of soundness properties on the boundaries, for
instance inserting dynamic guards or contracts to ensure soundness when
typed ML values are transferred to untyped Scheme.

We introduce the notion of \emph{sound inference interoperability}:
whereas sound interoperability of traditional languages
ensures that multi-language programs are type-sound,
sound \emph{inference} interoperability ensures
that probabilistic multi-language programs
correctly represent the intended probability distribution.
Our main goal will be to establish sound inference interoperability of two
PPLs: the first is called \disc{} and is similar to \dice{}, and the second is
called \cont{} and it makes use of importance-sampling-based inference. These
two languages are a nice pairing: \disc{} provides scalable exact discrete inference
at the expense of expressivity, as it does not support continuous random
variables and unbounded loops. On the other hand, \cont{} provides significant
expressivity (it supports continuous random variables) at the cost of relying on importance-sampling-based
approximate inference.
Following \citet{matthews2007operational}, we embed both \disc{}
and \cont{} into a multi-language we call \multippl{}.
Together, these two languages cover a broad spectrum of interesting
probabilistic programs that are difficult to handle today.
We will show in \cref{sec:motivation} and
\cref{sec:eval} that examples from networking and probabilistic
graphical models benefit from the ability to flexibly use different languages
and inference algorithms within a unified multi-language.
Traditional multi-language semantics establishes sound interoperability
by proving type-soundness of the combined multi-language~\citep{matthews2007operational}.
Analogously, we establish sound inference interoperability between \disc{} and \cont{},
guaranteeing that well-typed \multippl{} programs correctly represent
the intended probability distribution. Our contributions are as follows:
\begin{itemize}[leftmargin=*]
	\item We introduce \multippl{},
	      a multi-language in the style of \citet{matthews2007operational}
	      that enables interoperation between a discrete exact probabilistic
	      programming language \disc{} and a continuous approximate probabilistic programming language \cont{}.
	\item In \cref{sec:multippl}
	      we construct two models of \multippl{}
	      by combining appropriate semantic domains for \disc{} and \cont{} programs:
	      a high-level model capturing the probability distribution
	      intended by a given \multippl{} program,
	      and a low-level model
	      capturing the details of our particular implementation strategy.
	      We then prove that these two semantics agree, establishing
	      correctness of the implementation~(\cref{thm:toplevel-soundness}).
	      We identify two key requirements for ensuring sound inference interoperability between
	      exact and approximate programs: \disc{} programs must additionally enforce
	      \emph{sample consistency} for ensuring \disc{} values pass safely into \cont{},
	      and \cont{} programs must additionally perform
	      \emph{importance weighting} for ensuring the safety of \disc{} conditioning.

	\item In \cref{sec:eval} we validate the practical effectiveness of \multippl{} through our
	      provided implementation. We evaluate the efficacy of \multippl{} by
	      modeling complex independence structures through real-world problems in the
	      domain of networking and probabilistic graphical models. We provide insights
	      into \multippl{}'s approach to probabilistic inference and characterize the
	      nuanced landscape that interoperation produces.

\end{itemize}

\section{Overview}\label{sec:motivation}

\begin{figure}[t]
	\begin{subfigure}[b]{0.4\linewidth}
		\centering
		\scalebox{0.8}{
			\begin{tikzpicture}[x=1cm,y=1cm]
				\node[] (start) at (-2,0) {};
				\node[nuc] (r1) at (0,0) {R1};
				\node[nuc] (r2) at (2,1) {R2};
				\node[nuc] (r3) at (2,-1) {R3};
				\node[nuc] (r4) at (4,0) {R4};
				\draw[->] (start) -- (r1);
				\draw[->] (r1) -- (r2);
				\draw[->] (r1) -- (r3);
				\draw[->] (r3) -- (r4);
				\draw[->] (r2) -- (r4);
			\end{tikzpicture}
		}
		\vspace{0.2cm}
	\end{subfigure}
	\hfill
	\begin{subfigure}[b]{0.5\linewidth}
		\begin{lstlisting}[mathescape=true]
function step() {
  let lambda = uniform(0, 5) in
  let numPackets = poisson(lambda) in
  for i in 0..numPackets {
    forwardPacket(R1)
  }
}
\end{lstlisting}
	\end{subfigure}
	\caption{A small network and a fragment of a probabilistic program encoding of
		the packet arrival problem.}
	\label{fig:motiv}
\end{figure}
\begin{figure}
	\begin{tabular}{rlrcl}
		\disc & Expressions     & $\eM,\eN$       & ::=   & $\eco X$ | $\etrue$ | $\efalse$ | $\eM\eand\eN$ | $\enot\eM$                                                          \\
		      &                 &                 & |     & $\eunitelt$ | $\epair\eM\eN$ | $\efst~\eM$ | $\esnd~\eM$                                                              \\
		      &                 &                 & |     & $\eret~\eM$ | $\eletin {\eco X}\eM\eN$ | $\eite\scoe\eM\eN$                                                           \\
		      &                 &                 & |     & $\eflip~\scoe$ | $\eobserve~\eM$ | $\esample[\scoe]$                                                                  \\
		      & Types           & $\eA,\eB$       & ::=   & $\eunit$ | $\ebool$ | $\eA\etimes\eB$                                                                                 \\
		      & Contexts        & $\eDelta$       & ::=   & $\eco{X_{1}} : \eco{A_{1}},\dots,\eco{X_n} : \eco{A_n}$                                                               \\
		\\
		\cont & Expressions     & $\scoe$         & ::=   & $\sco x$ | $\strue$ | $\sfalse$ | $r$ | $\seone\sadd\setwo$ | $\sneg\scoe$ | $\seone\smul\setwo$ | $\seone\sle\setwo$ \\
		      &                 &                 & |     & $\sunitelt$ | $\spair\seone\setwo$ | $\sfst~\scoe$ | $\ssnd~\scoe$                                                    \\
		      &                 &                 & |     & $\sret~\scoe$ | $\sletin {\sco x}{\seone}{\setwo}$ | $\site{\seone}{\setwo}{\sethr}$                                  \\
		      &                 &                 & |     & $\sco{d}$ | $\sco{\texttt{obs}(\scoe_{o}, d)}$  | $\sexact[\eco M]$                                                   \\
		      & Distributions   & $\sco{d}$       & ::=   & $\sflip~\scoe$ | $\sunif~\seone~\setwo$ | $\spois~\scoe$                                                              \\
		      & Types           & $\ssigma,\stau$ & ::=   & $\sunit$ | $\sbool$ | $\sreal$ | $\ssigma\stimes\stau$                                                                \\
		      & Contexts        & $\sGamma$       & ::=   & $\sco{x_1} : \sco{\tau_1},\dots,\sco{x_n} : \sco{\tau_n}$                                                             \\
		      & Number literals & $r$             & $\in$ & $\R$                                                                                                                  \\
	\end{tabular}

	\caption{Syntax of \multippl{}. In \disc{}, we require ${\sco e} \in [0,1]$ for
		$\eflip$. In \cont{}, the syntax of distributions $\sco d$ denotes probability distributions. In $\sobs$ these terms condition, otherwise they are immediately sampled.}
	\label{fig:syntax}
\end{figure}

We argue that it is often the case that realistic probabilistic programs consist of sub-programs
that are best handled by \emph{different} probabilistic programming languages. Consider for example the
\emph{packet arrival}
situation visualized in \cref{fig:motiv}.
In this example, at each time step,
network packets are arriving according to a Poisson distribution, a fairly
standard setup in discrete-time queueing theory~\citep{meisling1958discrete}.
Then, each packet is forwarded through the network, whose topology is visualized
as a directed graph. The goal is to query for various properties about the
network's behavior: for instance, the probability of a packet reaching the end of the network,
or of a packet queue overflowing. This example task is inspired by prior
work on using probabilistic programming languages to perform network
verification~\citep{gehr2018BayonetProbabilistic,smolka2019scalable}.

The situation in \cref{fig:motiv} is a small illustrative example of packet arrival,
but programs like it are extremely challenging for today's PPLs because they
mix different kinds of program structure. Lines 2 and 3 manipulate continuous and
countably-infinite-domain random variables, which precludes the use of \dice{}.
However, graph reachability and queue behavior are complex discrete distributions, which
are difficult for \stan{} due to their inherent non-differentiability and
high-dimensional discrete structure. In order to scale on this example, we would like to be able to use
an inference algorithm like \stan{}'s for lines 2 and 3, and an inference algorithm like
\dice{}'s for lines 4--6.

Our approach to designing a language capable of handling
situations like that described in \cref{fig:motiv} is to enable the
programmer to seamlessly transition between programming in two different PPLs: \cont{}, an
expressive language that supports sampling-based inference and continuous random
variables, and \disc{}, a restricted discrete-only language that supports
scalable exact discrete inference. Following \citet{matthews2007operational}, we
describe a \emph{probabilistic multi-language} that embeds both
languages into a single unified syntax: see \cref{fig:syntax}. In \cref{sec:syntax}
we discuss the intricacies of the syntax in \cref{fig:syntax} in full detail,
including typing judgments found in \cref{fig:high-level-types} and the appendix; here we briefly
note its high-level structure and discuss examples.

These languages delineate our two syntactic categories:
\begin{enumerate}[noitemsep,leftmargin=*]
	\item \disc{} terms, shown in \LE{purple}, that support discrete probabilistic
	      operations such as Bernoulli random variables and Bayesian conditioning.  The
	      syntax is standard for an ML-like functional language with the addition of
	      probabilistic constructs: $\eflip~\scoe$ introduces a Bernoulli random
	      variable that is $\etrue$ with probability $\scoe \in [0,1]$ and $\efalse$
	      otherwise; the construct $\eobserve~\eM$ conditions on $\eM$. Notably, \disc{}
	      lacks introduction forms for continuous random variables or real numbers and
	      so in order to define the Bernoulli-distributed random variable using
	      $\eflip$, we must rely on interoperation to construct our distribution.

	\item \cont{} terms, shown in \LS{orange}, additionally support standard continuous
	      operations and sampling capabilities from two distributions inexpressible in
	      \disc{}: a Uniform distribution $\sunif~\seone~\setwo$ over the interval
	      $[\seone,\setwo]$, with $\seone,\setwo \in \R$, and a Poisson distribution
	      $\spois$ with rate $\scoe \in \R$ being greater than zero. The
	      syntax $\sco{\texttt{obs}(\scoe_{\sco o}, d)}$ %

	      denotes conditioning on the event that a sample drawn from distribution $\sco d$
	      is equal to $\scoe_{\sco o}$. %

\end{enumerate}

\begin{wrapfigure}{r}{0.40\linewidth}
	\vspace{-3em}
	\begin{lstlisting}[mathescape=true,caption={\textsc{TwoCoins}},label={lst:twocoins},escapechar=|]
$\sletin{\sco \theta}{\sunif~0~1}{}$                         |\label{line:twocoins:unif}|
${\sco\openbanana} \eletin{\eco X}{\eflip~\sco\theta}{}$  |\label{line:twocoins:x}|
${\phantom\openbanana} \eletin{\eco Y}{\eflip~\sco\theta}{}$ |\label{line:twocoins:y}|
${\phantom\openbanana} \eobserve~\eco{X \lor Y}~\einkw$ |\label{line:twocoins:obs}|
${\phantom\openbanana} \eret~{\eco X} ~ \sco{\closebanana_S}$ |\label{line:twocoins:ret}|
\end{lstlisting}
\end{wrapfigure}

Mediating between the \disc{} and \cont{} sublanguages are the boundaries
$\esample[\scoe]$ and $\sexact[\eM]$: the boundary $\esample[\scoe]$ allows
passing from \cont{} to \disc{}, and the boundary $\sexact[\eM]$ allows passing
from \disc{} to \cont{}. This style of presentation is similar to \citet{patterson2022InteroperabilityRealizability}.

\cref{lst:twocoins} shows an example program in our multi-language which
passes a uniformly-sampled real value $\sco \theta$ from \cont{} into \disc{}
and uses the resulting value as a prior for sampling two independent Bernoulli
random variables.
The outer-most language is \cont{}. On \cref{line:twocoins:unif}, $\sco{\theta}$
is bound to a sample drawn from the uniform distribution on the unit interval.
Then, on Lines \ref{line:twocoins:x}--\ref{line:twocoins:ret}, we begin evaluation of
a \disc{} program inside the boundary term \sexact{}. We flip two coins
$\eco{X}$ and $\eco{Y}$ (\cref{line:twocoins:x,line:twocoins:y}, respectively)
in the \disc{} sub-language, whose prior parameters are both $\sco\theta$.
On \cref{line:twocoins:obs}, we observe that one of the two coins was true,
taking advantage of syntactic sugar where $\eobserve$ is bound to a discarded
variable name.
\cref{line:twocoins:ret} brings us to the final line of our program, where we
query for the probability that $\eco X$ is true.
The next
two sub-sections will explain our approach to bridging
the two languages.

\subsection{\disc{} and \cont{} inference}
Before we describe the intricacies of language interoperation, we first provide
some high-level intuition for how we wish to perform inference on \disc{} and
\cont{} independently.
First, we give a denotational semantics for \multippl{} that we denote $\dbracket{-}$
which associates each \multippl{} term with a probability distribution on
\multippl{} values (see \cref{sec:multippl} for a formal definition of these
semantics). Here we will briefly illustrate these semantics by example: the semantics
$\dbracket{\eflip~\sco p}$ produces a Bernoulli distribution that is true with probability $\sco p \in [0,1]$;
the semantics $\dbracket{\sunif~\seone~\setwo}$ produces a uniform distribution on the interval $[\seone, \setwo] \in \R$.

The goal of inference is to efficiently evaluate the denotation of a probabilistic
program. While \disc{} and \cont{} share a unified denotation, they have
very different approaches to inference.
The key advantage of our multi-language approach is that
we can specialize the design of \cont{} and \disc{} to take full advantage of
structural differences between their underlying inference algorithms: for
\disc{} we will use an exact inference strategy based on knowledge compilation
similar to \dice{}~\citep{holtzen2020scaling}, and for \cont{} we will rely on
approximate inference via sampling.
In the next two subsections we give a high-level overview of these standard
approaches.

\newcommand*\bddpredicate[1]{\framebox{{\footnotesize \textsf{#1}}}}

\subsubsection{Exact inference via knowledge compilation}
\label{sec:discinf}
\begin{figure}
	\begin{subfigure}[b]{0.4 \linewidth}
		\centering
		\begin{lstlisting}[mathescape=true,escapechar=|]
$\eletin{\eco X}{\eflip~0.4}{}$  |\label{line:etwocoins:x}|
$\eletin{\eco Y}{\eflip~0.3}{}$ |\label{line:etwocoins:y}|
$\eobserve~\eco{X \lor Y}~\einkw$ |\label{line:etwocoins:obs}|
$\eret~{\eco X}$ |\label{line:etwocoins:ret}|
\end{lstlisting}
		\caption{Example \disc{} program.}\label{fig:etwocoins:program}
	\end{subfigure}
	\quad
	\begin{subfigure}[b]{0.45\linewidth}
		\centering
		\footnotesize

		\begin{tikzpicture}[node distance=0.5cm and 0.025cm]
			\node[c] (x) {$f_X$};
			\node[draw,rectangle] (x-one) [below left=of x] {$\BDDTrue$};
			\node[draw,rectangle] (x-zero) [below right=of x] {$\BDDFalse$};

			\draw[onearrow]  (x) -- (x-one)  node [midway,left,xshift=0pt,yshift=2pt] {0.4};
			\draw[zeroarrow] (x) -- (x-zero) node [midway,right,xshift=0pt,yshift=2pt] {0.6};
		\end{tikzpicture}
		\qquad
		\begin{tikzpicture}[node distance=0.5cm and 0.025cm]\footnotesize
			\node[c] (x) {$f_X$};
			\node[c] (y) [below right=of x] {$f_Y$};
			\node[draw,rectangle] (x-one) [below left=of x,xshift=-2pt] {$\BDDTrue$};
			\node[draw,rectangle] (y-one) [below left=of y,xshift=-2pt] {$\BDDTrue$};
			\node[draw,rectangle] (y-zero) [below right=of y] {$\BDDFalse$};

			\draw[zeroarrow] (x) --  (y) node [midway,right,xshift=0pt,yshift=2pt] {0.6};
			\draw[onearrow] (x) -- (x-one) node [midway,left,xshift=0pt,yshift=2pt] {0.4};
			\draw[onearrow] (y) -- (y-one) node [midway,left,xshift=0pt,yshift=2pt] {0.3};
			\draw[zeroarrow] (y) -- (y-zero) node [midway,right,xshift=0pt,yshift=2pt] {0.7};
		\end{tikzpicture}
		\caption{BDD representations of formulas.}\label{fig:etwocoins:bdds}
	\end{subfigure}
	\caption{Motivating example showing the compilation of the \disc{} program in \ref{fig:etwocoins:program} to BDDs in \ref{fig:etwocoins:bdds}.
		On the left of \ref{fig:etwocoins:bdds} is a BDD representing the distribution formula $\varphi = f_X$; on the right
		is the BDD representing the accepting formula $\alpha = f_X \lor f_Y$. \bddpredicate{T} and \bddpredicate{F} represent true and false values, respectively.
	}\label{fig:etwocoins}
\end{figure}
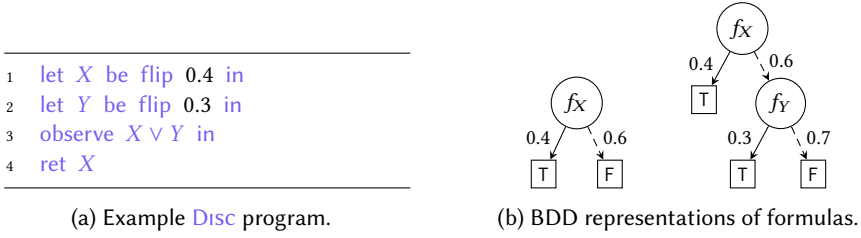

Here, we illustrate the principles of exact inference in \disc{} via example;
\cref{sec:multippl} provides a formal treatment of these semantics.
In \cref{fig:etwocoins:program}, we reproduce the \disc{} program compiled in
Lines \ref{line:twocoins:y}--\ref{line:twocoins:ret} of \cref{lst:twocoins}, but
instantiate the priors of $\eco Y$ and $\eco Z$ with numeric literals 0.4 and
0.3, respectively.
Our example in \cref{fig:etwocoins:program} denotes the
probability distribution of Bernoulli 0.3, given that one of the two weighted
coin flips is true; its semantics is $\dbracket{\cref{fig:etwocoins:program}}(\true) = \frac{0.4}{0.58} \approx 0.689$.

The exact inference strategy used by \disc{} is to perform probabilistic
inference via weighted model
counting~\citep{darwiche2002knowledge,sang2005performing,chavira2008probabilistic},
following \citet{holtzen2020scaling}.
The key idea is to interpret the probabilistic program
as a weighted Boolean formula whose models are in one-to-one
correspondence with paths through the program, and where each path is associated
with a weight that matches the probability of that path in the program.
Concretely,
a \emph{weighted Boolean formula} is a
pair $(\varphi,w)$ where $\varphi$ is a Boolean formula and $w$ is a
\emph{weight function} that associates literals (assignments to variables) in
$\varphi$ with real-valued weights. Then, the \emph{weighted model count}
of a weighted Boolean formula is the weighted sum of models:
\begin{align}
	\WMC(\varphi, w) = \sum_{\{m \vDash \varphi\}} \prod_{\{\ell \in m\}} w(\ell).
\end{align}
To perform \disc{} inference by reduction to weighted model counting, we
associate each \disc{} program with a pair of Boolean formulae in a manner similar to
\citet{holtzen2020scaling}: (1) an \emph{accepting formula} $\alpha$ that encodes the
paths through the program that does not violate observations; and (2)
a \emph{distribution formula} $\varphi$ such that $\WMC(\varphi \land \alpha)$
is the unnormalized probability of the program returning $\true$.
For instance, we would compile \cref{fig:etwocoins:program} into accepting formula $\varphi = f_X$
and $\alpha = f_X \lor f_Y$, where $f_X$ is a Boolean variable that represents the outcome
of $\eflip~0.4$ and $f_Y$ represents the outcome of $\eflip~0.3$. Then, the weight
function is $w(f_X) = 0.4, w(\overline{f_X}) = 0.6, w(f_Y) = 0.4, w(f_Y) = 0.3, w(\overline{f_Y}) = 0.7$.
Then, we can compute the semantics of \cref{fig:etwocoins:program} by performing
weighted model counting:
\begin{align*}
	\dbracket{\cref{fig:etwocoins:program}}(\true) =
	\frac{\WMC(\varphi \wedge \alpha, w)}{\WMC(\alpha, w)} =
	\frac{\WMC(f_X, w)}{\WMC(f_X \lor f_Y, w)} =
	\frac{0.4}{0.4 + 0.6 \cdot 0.3} = \frac{0.4}{0.58} \approx 0.689
\end{align*}

The weighted model counting task is well-studied, and there is an array of
high-performance implementations for solving
it~\citep{holtzen2020scaling,chavira2008probabilistic,sang2005performing}.  One
approach that is particularly effective is \emph{knowledge compilation}, which
compiles the Boolean formula into a representation for which weighted model
counting can be performed efficiently (typically, polynomial-time in the size of
the compiled representation). A common target for this compilation process is
\emph{binary decision diagrams} (BDDs), shown in \cref{fig:etwocoins:bdds}.
A BDD is a rooted DAG whose internal nodes are labeled with
Boolean variables and whose leaves are labeled with either true or false values.
A BDD is read top-down: solid edges denote true assignments to variables,
and dashed edges denote false assignments.
Once a Boolean
formula is compiled to a BDD, inference can be performed in polynomial time (in the
size of the BDD) by performing a bottom-up traversal of the DAG.

While highly effective for discrete probabilistic inference tasks with finite domains,
inference via knowledge compilation has a critical weakness: it cannot support continuous
random variables or unbounded discrete random variables due to the requirement that each
program be associated with a (finite) Boolean formula. Hence, the design of \disc{}
must be carefully restricted to only permit programs that can be compiled
to Boolean formulae, which is why it does not contain syntactic support for these features.

\subsubsection{Approximate inference via sampling}
\label{sec:introsamp}
A powerful alternative to exact inference is approximate inference via
sampling. The engine that drives sampling-based inference is the
expectation estimator.
The expectation estimator is widely used as a foundation for approximate
inference strategies for probabilistic
programs~\citep{carpenter2017stan,mansinghka2014VentureHigherorder,staton2016SemanticsProbabilistic,ramsey2002stochastic,murray2018DelayedSampling,culpepper2017ContextualEquivalence}.
We will use it to give an inference algorithm for \cont{}.
Concretely, suppose we want to use the
expectation estimator to approximate the semantics of the \cont{} program
$\dbracket{\sflip ~ 1/4}$. To do this, we can draw $N=100$ samples
from the program: in roughly $1/4$ of these samples, the program will output $\true$.
This approach is known as \emph{direct sampling}, and is one way of utilizing the expectation
estimator to design approximate inference algorithms.

Formally, let $\Omega$ be a sample space, $\Pr$ a probability density function, and
let $X : \Omega \rightarrow \real$ be a real-valued random variable out of the sample
space. Then, the expectation of $X$ is defined as
$\E_{\Pr}[X] = \int \Pr(\omega)X(\omega) d\omega$.
The \emph{expectation estimator} approximates
the expectation of a random variable $X$ by drawing $N$ samples from $\Pr$:
\begin{align}
	\label{eq:expect}
	\E_{\Pr}[X] \approx \frac{1}{N} \sum_{\x \sim \Pr}^N X(\x).
\end{align}

There are many more advanced
approaches to sampling-based inference beyond direct sampling such as
Hamiltonian Monte-Carlo~\citep{carpenter2017stan,neal2011mcmc}; at their core,
all these approximate inference algorithms follow the same principle of drawing some number of samples
from the program and using that to estimate the semantics.

When compared with the exact inference strategy described in \cref{sec:discinf},
sampling-based inference has the key advantage that it only requires the ability
to sample from the probabilistic program: each time a random quantity is
introduced, it can be dealt with by eagerly sampling. This makes sampling an
ideal inference algorithm for implementing flexible and expressive languages
with many features: unlike \disc{}, it is straightforward to add interesting
features like continuous random variable and unbounded loops to \cont{} without
wholesale redesigning of its inference algorithm. This gain in expressivity
comes at the cost of precision: unlike \disc{}, \cont{} is only able to provide
an approximation to the final expectation.

\begin{figure}
	\begin{subfigure}[b]{0.37\linewidth}
		\begin{lstlisting}[mathescape=true,escapechar=|]
$\sletin{\sco x}{\sflip~0.20}{}$|\label{line:stwocoins:sample}|
$\sco\openbanana \eletin{\eco Y}{\eflip~0.25}{}$
$\phantom\openbanana \eobserve~\eco{\esample[x] \lor Y}~\einkw$  |\label{line:stwocoins:obs}|
$\phantom\openbanana \eret~{\eco Y}\sco{\closebanana_S}$
\end{lstlisting}
		\caption{Motivating example.}
		\label{fig:interop1:a}
	\end{subfigure}
	\begin{subfigure}[b]{0.3\linewidth}
		\begin{lstlisting}[mathescape=true,escapechar=|]
$\sco\openbanana \eletin{\eco Y}{\eflip~0.25}{}$
$\phantom\openbanana \eobserve~\eco{\etrue \lor Y}~\einkw$|\label{line:stwocoins:obs:b}|
$\phantom\openbanana \eret~{\eco Y}\sco{\closebanana_S}$
\end{lstlisting}
		\caption{Sampled $\LS{x} = \true$.}
		\label{fig:interop1:b}
	\end{subfigure}
	\begin{subfigure}[b]{0.3\linewidth}
		\begin{lstlisting}[mathescape=true,escapechar=|]
$\sco\openbanana \eletin{\eco Y}{\eflip~0.25}{}$
$\phantom\openbanana \eobserve~\eco{\efalse \lor Y}~\einkw$|\label{line:stwocoins:obs:c}|
$\phantom\openbanana \eret~{\eco Y}\sco{\closebanana_S}$
\end{lstlisting}
		\caption{Sampled $\LS{x} = \false$.}
		\label{fig:interop1:c}
	\end{subfigure}

	\caption{Interpreting \cont{} values in \disc{}.}
	\label{fig:interop1}
\end{figure}

\subsection{Sound interoperation}\label{sec:overview:interop}
Now we move on to our main goal, establishing sound interoperation between
the underlying inference strategies of \disc{} and \cont{} by identifying two key invariants that must be maintained
when transporting \disc{} and \cont{} values across boundaries: \emph{importance
	weighting} and \emph{sample consistency}.
At first, there appears to be a straightforward way of establishing interoperation
between these two languages: when a \cont{} value $\LS{v_s}$ is interpreted in
a \disc{} context, it is lifted in a Dirac-delta distribution on $\LS{v_s}$.
\Cref{fig:interop1:a} gives an illustration of this scenario: first, on \cref{line:stwocoins:sample}
we sample a value (either $\true$ or $\false$) for $\LS{x}$. Then, on \cref{line:stwocoins:obs} we
interpret $\LS{x}$ within an exact context. \Cref{fig:interop1:b,fig:interop1:c}
show the two possible liftings: the sampled value is given its straightforward interpretation
in the exact context.
When a \disc{} value $\LE{v_e}$ is interpreted in a \cont{} context,
one can draw a sample $\LS{v_s}$ from the \emph{exact} distribution denoted by $\LE{v_e}$.
However, we will show in the next two sub-sections that a naive approach that fails to preserve key
invariants will result in incorrect inference results, and that one must maintain
careful invariants in order to ensure soundness of inference across boundaries. %

\subsubsection{Importance weighting}\label{sec:importance-weighting} Let us more carefully
consider the situation shown in \cref{fig:interop1:a}. First, we observe that
the desired semantics is $\dbracket{\cref{fig:interop1:a}}(\true) = 0.25 / 0.4 = 0.625$. Suppose we were to follow
a naive multi-language inference procedure of drawing 100 samples
by eagerly evaluating values for $\LS{x}$. Following \cref{sec:introsamp}, approximately 20 of these samples will yield the program in
\cref{fig:interop1:b} and approximately 80 will yield the program in \cref{fig:interop1:c}.
Observe that
$\dbracket{\cref{fig:interop1:b}}(\true) = 0.25$ and $\dbracket{\cref{fig:interop1:c}}(\true) = 1$. So
our naive estimate $\dbracket{\cref{fig:interop1:a}}(\true)$ would be:
\begin{align}
	\label{eq:wrong}
	\dbracket{\cref{fig:interop1:a}}(\true) \stackrel{?}{\approx}
	\frac{20}{100} \dbracket{\cref{fig:interop1:b}}(\true) + \frac{80}{100} \dbracket{\cref{fig:interop1:c}}(\true)
	= 0.85
\end{align}
Something went wrong -- we expected the result of \cref{eq:wrong} to be 0.625. This naive
sampling approach significantly over-estimated $\dbracket{\cref{fig:interop1:a}}(\true)$.
The issue is that, in this naive approach, the observation that occurs on
\cref{line:stwocoins:obs}~(\cref{fig:interop1:a}) is not taken into account when
sampling a value for $\LS{x}$: samples where $\LS{x}=\true$ are under-sampled
relative to their true probability, and samples where $\LS{x}=\false$ are
over-sampled.

This example illustrates that a naive approach to interoperation is unsound. To fix it,
one approach is to adjust the \emph{relative importance} of each sample: we will
still sample $\LS{x}=\true$ roughly 20\% of the time, but we will decrease the overall
importance of this sample. The key idea comes from \emph{importance sampling}, which is a
refinement of the expectation estimator given in \cref{eq:expect} but enables estimating
an expectation $\E_{p}[X]$ by sampling from a \emph{proposal distribution} $q$:
\begin{align}\label{eq:is}
	\mathbb{E}_{p}[X] = \int X(\x)p(\x)d\x = \int X(\x)\frac{p(\x)}{q(\x)}q(\x) d\x = \E_{q} \left[X(x) \frac{p(\x)}{q(\x)}\right].
\end{align}

The above holds as long as the proposal $q$ \emph{supports} $p$ (i.e., satisfies
the property that, for all $\x$, if $p(x) > 0$ then $q(x) > 0$). The ratio
$p(\x)/q(\x)$ is called the \emph{importance weight} of the sample $\x$:
intuitively, if $\x$ is more likely according to the true distribution $p$ than
the proposal $q$, the importance ratio will be greater than 1; similarly, if
$\x$ is less likely according to $p$ than $q$, its weight will be less than 1.
In this instance, the proposal $q$ is semantics of the program with all observe
statements deleted, and $p$ is $\dbracket{\cref{fig:interop1:a}}$.

Unfortunately, already having access to $p$ defeats the purpose of
approximating. In addition, our programs $p$ always incorporate a normalization
constant $Z$, such that
\begin{equation}\label{eq:phat}
	p(\x) = \hat{p}(\x) / Z,
\end{equation}
with $\hat{p}$ being the unnormalized distribution. Summing the probability of
$\hat{p}$ for all $x$ in the domain of $p$ yields
$Z=\int \hat{p}(\x)d\x$. Computing this normalization constant is
expensive, and amounts to calculating $p$ directly.  In our setting, calculating
this normalization constant is identical to the
denotation of \cref{line:stwocoins:obs} in the exact setting. To avoid solving
for this in our importance sampler, we can incorporate \cref{eq:phat} into our expectation
\cref{eq:is} and jointly approximate our query alongside $Z$,

\begin{align}\label{eq:snis}
	\mathbb{E}_{p}[X] = \int X(\x)p(\x) d\x = \frac{\int X(\x)\hat{p}(\x)d\x }{\int \hat{p}(\x)d\x } = \frac{\int X(\x)\frac{\hat{p}(\x)}{q(\x)}q(\x)d\x }{\int \frac{\hat{p}(\x)}{q(\x)}q(\x)d\x } = \frac{\E_{\x \sim q} \left[X(x) \frac{\hat{p}(\x)}{q(\x)}\right]}{\E_{\x \sim q} \left[\frac{\hat{p}(\x)}{q(\x)}\right]}.
\end{align}

The above is called a \emph{self-normalized importance sampler}~\cite{robert1999monte}.
Here, in the denominator, we construct the normalizing constant for $q$ to be the ratio of the unnormalized $\hat{p}$ to $q$: the \cref{line:stwocoins:obs:b} in
\cref{fig:interop1:b} when $\sco x = \true$ and the \cref{line:stwocoins:obs:c}
in \cref{fig:interop1:c} when $\sco x = \false$. Notice that the probability of evidence encoded by $\eobserve$ statements in \cref{fig:interop1:b} and \cref{fig:interop1:b} are
$\dbracket{\eco{\true \lor Y}}(\true) = 1$ and
$\dbracket{\eco{\false \lor Y}}(\true) = 0.25$, respectively.

Sampling 100 draws of $\sco x$, again, with 20 samples yielding the program in
\cref{fig:interop1:b} and 80 samples yielding the program in \cref{fig:interop1:c}, \cref{eq:snis} now returns our expected result:
\begin{align*}
	\dbracket{\cref{fig:interop1:a}}(\true) \approx
	\frac{\frac{20}{100} \, 1 \cdot \dbracket{\cref{fig:interop1:b}}(\true) + \frac{80}{100} \, 0.25 \cdot \dbracket{\cref{fig:interop1:c}}(\true)}{\frac{20}{100} \, 1 + \frac{80}{100} \, 0.25 }
	= \frac{20\cdot 0.25 + 80 \cdot 0.25}{20 + 20} = 0.625
\end{align*}

\subsubsection{Sample consistency}\label{sec:sample-consistency}
\begin{wrapfigure}{r}{0.3\linewidth}
	\begin{lstlisting}[mathescape=true,caption={\textsc{Sample consistency}},label={lst:interop3},escapechar=|]
$\eletin{\eco X}{\eflip~0.5}{}$
$\eco\openbanana \sletin{\sco y}{\sexact[\eco X]}{}$
$\phantom\openbanana \sletin{\sco z}{\sexact[\eco X]}{}$
$\phantom\openbanana \sret~{\sco{y \land z}}\eco{\closebanana_E}$
\end{lstlisting}
\end{wrapfigure}

Importance weighting is not all that is necessary to ensure sound interoperability:
we must also ensure that \disc{} values are safely interpreted with a \cont{} context.
Consider the example in \cref{lst:interop3}. There are two observations to make
about this program. The first is that we embed a \cont{} program into a \disc{}
context; this results in a sampler that evaluates all \cont{} fragments while
preserving the semantics of all \disc{} variables in order to produce a sample.
The next thing to notice is that a \disc{} program denotes a distribution; in
the semantics of \cont{}, when we come across a distribution a sample is
immediately drawn from it.

Again, we can propose a naive strategy for performing inference on this program:
one where we draw a new sample each time
we encounter a distribution.  Notice that Line 2 holds a reference $\eco X$ to
$\eflip~0.5$, denoting a Bernoulli distribution. When we evaluate this boundary,
with probability $1/2$ we sample $\sco y = \true$; suppose we sample $\sco y = \true$. We encounter this reference,
again, on Line 3 and suppose we sample $\sco z = \false$.
Finally, on Line 4, we evaluate the Boolean expression, resulting in $\false$,
which is lifted into the Dirac-delta distribution in \disc{}. Running this
program $n$ number of times, we will expect to see the expectation of
$\sco{y \land z}$ with $\sco y$ and $\sco z$ as two independent draws of the
fair Bernoulli distribution.
At this point, something strange has occurred: by referencing a single variable
in \disc{}, we have simulated two independent $\eflip$s.

Intuitively, the sampled value for $\LS{z}$ must be
\emph{the same} as the sampled value for $\LS{y}$. Operationally, to ensure this
is the case, any samples drawn across at the $\sexact$ boundary additionally
constraining \disc{} program's accepting criteria so that all subsequent samples
remain consistent.

\section{\multippl: Multi-Language Probabilistic Programming}
\label{sec:multippl}

In this section we present \multippl{},
a multilanguage that supports both exact and sampling-based inference.
\cref{sec:syntax,sec:typing} describe the syntax of \multippl{} programs
and \multippl{}'s type system.
We then present two semantic models of \multippl{}.
First, \cref{sec:high-level-model} presents a high-level model $\erasesem-$
capturing the probability distribution generated by a \multippl{}
program; this model specifies the intended behavior of our implementation.
Second, \cref{sec:low-level-model} presents a low-level model $\implsem-$
capturing our particular inference strategy; taking the intuition we have built up in \cref{sec:overview:interop} and providing the precise way in which our implementation
combines knowledge compilation with importance sampling.
Finally,
\cref{sec:soundness} connects these two models:
we show that $\implsem-$ soundly refines $\erasesem-$,
establishing sound inference interoperability between \disc{} and \cont{}
with respect to our inference strategy.

\subsection{Syntax}\label{sec:syntax}

The syntax of \multippl{} is given in \cref{fig:syntax}.
\multippl{} is a union of two sublanguages,
\disc{} and \cont{}, that support exact and sampling-based inference.
To streamline
the presentation of the models in \cref{sec:high-level-model,sec:low-level-model},
each sublanguage is then subdivided into
pure and effectful fragments.

The sampling-based sublanguage \cont{} is a first-order probabilistic
programming language with Booleans, tuples, and real numbers. In \cont{}, the
pure fragment includes not only the basic operations on Booleans and pairs, but
also arithmetic operations on real numbers. The effectful fragment
additionally includes primitive operations $\sunif~\seone~\setwo$ for
generating uniformly-distributed real numbers in the interval $[\seone,\setwo]$,
$\spois~\scoe$ for generating Poisson-distributed integers with rate $\scoe$,
and $\sobs$-expressions denoting conditioning operators for these
distributions.

The exact sublanguage \disc{}, reminiscent of \dice{}~\cite{holtzen2020scaling},
is a discrete first-order probabilistic programming language with Booleans and tuples.
The pure fragment of \disc{} includes the basic operations on Booleans and
pairs, while the effectful fragment includes constructs for sequencing and
branching, as well as the primitive operations $\eflip~\scoe$ -- for
generating Bernoulli-distributed Booleans with parameter $\scoe$ of type $\sreal$,
and $\eobserve~\eM$ -- for conditioning on an event $\eM$.

The \disc{}
branching construct $\eite\scoe\eM\eN$ requires the guard $\scoe$ to be a
\cont{} term.
This is not an essential restriction, but rather
required for sound inference interoperability
with respect to the specific implementation strategy we have chosen.
As sketched in \cref{sec:discinf,sec:introsamp},
standard sampling-based inference maintains a weight for the current trace,
while exact inference maintains a weight map and an accepting formula.
In our implementation, we wanted a language whose inference algorithm
would stay as close to these traditional
inference algorithms as possible while avoiding incorrect weighting schemes.
To do this while maintaining safe inference interoperability,
one must have the rather subtle invariant
that $\eitekw$ expressions in the \disc{}
sublanguage have then- and else- branches that importance-weight
their respective traces by the same amount.
The syntactic restriction on $\eite\scoe\eM\eN$ is a simple way of
ensuring this is always the case: probabilistic choice is removed, and only one branch need ever be considered.
In our implementation, we also permit $\eitekw$ expressions
where both branches are boundary-free \disc{} programs, as exact inference
for such programs can be performed just as in \citet{holtzen2020scaling}, without
touching the importance weight.
These special cases could
be avoided by maintaining
an auxiliary Boolean formula tracking a \emph{path condition},
which encodes during inference the then- and else- branches
of $\eitekw$ expressions taken to reach a given subterm.
This would allow arbitrary $\eitekw$ expressions in the \disc{}
sublanguage, at the expense of additional overhead of maintaining
this path condition during inference.
In our design of \multippl{}, we have decided to restrict the
syntax of the language rather than impose a performance cost;
in practice, this has been sufficient to express
all of the examples in \cref{sec:eval}.

\subsection{Typing}
\label{sec:typing}

\begin{figure}
	{\small
		\fbox{$\eA\convrel\stau$}
		\begin{mathpar}
			\inferrule{~}{\eunit \convrel \sunit}
			\and
			\inferrule{~}{\ebool \convrel \sbool}
			\and
			\inferrule{\eA\convrel \stau \\ \eB\convrel\ssigma}
			{\eA\etimes\eB\convrel \stau\stimes\ssigma}
		\end{mathpar}}
	\caption{\label{fig:convrel}Rules for convertibility between \disc{} types $\eA$ and \cont{} types $\stau$.}
\end{figure}

\begin{figure}
	{\small
		\fbox{$\ecomptydefault\eM\eA$}
		\begin{mathpar}
			\inferrule{\epuretydefault\eM\eA}{\ecomptydefault{\eret~\eM}{\eA}}
			\and
			\inferrule{
				\ecomptydefault\eM\eA
				\\
				\ecompty\sGamma{\eDelta,\eco X\ofty \eA}\eN\eB}
			{\ecomptydefault{\eletin{\eco X}\eM\eN}{\eB}}
			\and
			\inferrule {
				\spuretydefault\scoe\sbool
				\\
				\ecomptydefault\eM\eA
				\\
				\ecomptydefault\eN\eA
			} {\ecomptydefault{\eite\scoe\eM\eN}{\eA}}
			\and
			\inferrule {\spuretydefault\scoe\sreal}{\ecomptydefault{\eflip~\scoe}{\ebool}}
			\and
			\inferrule {\epuretydefault\eM\ebool} {\ecomptydefault{\eobserve~\eM}{\eunit}}
			\and
			\inferrule {\scomptydefault{\scoe}{\stau}
				\\ \eA\convrel\stau} {\ecomptydefault{\esample[\scoe]}{\eA}}
		\end{mathpar}

		\vspace{1cm}
		\fbox{$\scomptydefault\scoe\stau$}
		\begin{mathpar}
			\inferrule{\spuretydefault{\scoe}\stau}{\scomptydefault{\sret~{\scoe}}{\stau}}
			\and
			\inferrule{
				\scomptydefault{\seone}\ssigma
				\\
				\scompty{\sGamma,\sco x\ofty \ssigma}{\eDelta}{\setwo}\stau}
			{\scomptydefault{\sletin {\sco x}{\seone}{\setwo}}{\stau}}
			\and
			\inferrule {
				\spuretydefault{\seone}\sbool
				\\
				\scomptydefault{\setwo}\stau
				\\
				\scomptydefault{\sethr}\stau
			} {\scomptydefault{\site\seone\setwo\sethr}{\stau}}
			\and
			\inferrule {\spuretydefault\scoe\sreal}{\scomptydefault{\sflip~\scoe}{\sbool}}
			\and
			\inferrule {
				\spuretydefault{\seone}{\sreal}\\
				\spuretydefault{\setwo}{\sreal}
			}
			{\scomptydefault{\sunif~\seone~\setwo}{\sreal}}
			\and
			\inferrule {\spuretydefault\scoe\sreal} {\scomptydefault{\spois~\scoe}{\sreal}}

			\and

			\inferrule {\spuretydefault{\seobs}\sbool  \\ \spuretydefault\seone\sreal}
			{\scomptydefault{\sobserve{\seobs}{\sflip~\seone}}{\sunit}}

			\and

			\inferrule {
				\spuretydefault{\seobs}{\sreal}\\
				\spuretydefault{\seone}{\sreal}\\
				\spuretydefault{\setwo}{\sreal}
			}
			{\scomptydefault{\sobserve{\seobs}{\sunif~\seone~\setwo}}{\sunit}}

			\and

			\inferrule
			{\spuretydefault{\seobs}{\sreal}\\
				\spuretydefault\scoe\sreal}
			{\scomptydefault{\sobserve{\seobs}{\spois~\scoe}}{\sunit}}

			\and

			\inferrule {\ecomptydefault\eM{\eA}
				\\ \eA\convrel\stau} {\scomptydefault{\sexact[\eM]}{\stau}}
		\end{mathpar}}
	\caption{\label{fig:effectful-typing}Typing rules for the effectful fragment of \multippl{}.}
\end{figure}

The syntax of types and typing contexts is given in \cref{fig:syntax}.
\disc{} types $\eA$ include Booleans and pairs;
\cont{} types $\stau$ additionally include a type of real numbers.
A \disc{} typing context $\eDelta$ is a mapping of \disc{} variables to \disc{} types,
and a \cont{} typing context $\sGamma$ is a mapping of \cont{} variables to \cont{} types.
By convention we will denote \disc{} syntactic elements with capital
letters and \cont{} elements with lower-case Greek letters. This section is best read
in color, where we use orange monotype font for \cont{} terms and purple
sans-serif font for \disc{} terms.

\multippl{} contains two sublanguages that each have a pure and effectful part,
so there are correspondingly four forms of typing judgment. For the pure fragments,
\begin{itemize}
	\item $\epuretydefault\eM\eA$ says the pure \disc{} term $\eM$ has \disc{} type $\eA$ in \disc{} context $\eDelta$.
	\item $\spuretydefault\scoe\stau$ says the pure \cont{} term $\scoe$ has \cont{} type $\stau$ in \cont{} context $\sGamma$.
\end{itemize}
These judgments are standard and deferred to the appendix.

The typing judgments for effectful \multippl{} terms
are parameterized by a combined context $\sGamma;\eDelta$,
as an effectful term may mention variables
from both \disc{} and \cont{} via boundaries:
\begin{itemize}
	\item $\ecomptydefault\eM\eA$ says the effectful \disc{} term $\eM$ has \disc{} type $\eA$ in combined context $\sGamma;\eDelta$.
	\item $\scomptydefault\scoe\stau$ says the effectful \cont{} term $\scoe$ has \cont{} type $\stau$ in combined context $\sGamma;\eDelta$.
\end{itemize}
These judgments are defined in \cref{fig:effectful-typing}.
Note that in the rule
for $\eflip~\scoe$, the parameter $\scoe$
can be an arbitrary pure \cont{} term; this allows expressing the \textsc{TwoCoins} example
from \cref{sec:motivation}.
In principle, one could allow arbitrary effectful \cont{} programs $\scoe$ as
parameter to $\eflip$ instead of just pure ones, but we have not found this to
be useful in practice.
The typing judgments for the boundaries $\esample[\scoe]$ and $\sexact[\eM]$
allow converting  \disc{} terms of type $\eA$ into \cont{} terms of type $\stau$
and vice versa,
so long as $\eA$ and $\stau$
are \emph{convertible},
written $\eA\convrel\stau$.
The convertibility relation
is defined in
\cref{fig:convrel}; it simply states that
\disc{} types can be converted into
their \cont{} counterparts in the expected way,
and that the \cont{} type $\sreal$ has no \disc{} counterpart.

\subsection{High-level semantic model} \label{sec:high-level-model}

\newcommand\scomptyclosed[2]{\scompty{\sco{\bigcdot}\,}{\eco{\bigcdot}}{#1}{#2}}

\begin{figure}
	{\footnotesize
		\[
			\begin{aligned}
				\llbrspc\eunit          & = \{\star\}                    \\
				\llbrspc\ebool          & = \{\top,\bot\}                \\
				\llbrspc{\eA\etimes\eB} & = \llbrspc\eA\times\llbrspc\eB
			\end{aligned}
			\hspace{1em}
			\begin{aligned}
				\llbr\sunit                & = \{\star\}                    \\
				\llbr\sbool                & = \{\top,\bot\}                \\
				\llbr\sreal                & = \R                           \\
				\llbr{\ssigma\stimes\stau} & = \llbr\ssigma\times\llbr\stau
			\end{aligned}
			\hspace{1em}
			\begin{aligned}
				 & \llbr{\eDelta} = \prod_{\eco{X}\in\dom\eDelta}\llbrspc{\eDelta(\eco X)}
			\end{aligned}
			\hspace{1em}
			\begin{aligned}
				\llbr{\sGamma} & = \prod_{\sco{x}\in\dom\sGamma}\llbr{\sGamma(\sco x)}
			\end{aligned}
		\]}
	\caption{\label{fig:high-level-types}Interpreting types and typing contexts.}
\end{figure}

\begin{figure}
	{\footnotesize
		\[
			\begin{aligned}
				\puresemarg{\epuretydefault{\eM}{\eA}}                               & : \llbrspc\eDelta\to\llbrspc\eA \\
				\puresemarg{\epuretydefaultsem{\eco X}{\eA}}(\delta)                 & = \delta(\eco X)                \\
				\puresemarg{\epuretydefaultsem{\etrue}{\ebool}}(\delta)              & = \top                          \\
				\puresemarg{\epuretydefaultsem{\efalse}{\ebool}}(\delta)             & = \bot                          \\
				\puresemarg{\epuretydefaultsem{\eM\eand\eN}{\ebool}}(\delta)         & =
				\begin{cases}
					\top, &
					\text{if }
					\puresemarg{\epuretydefaultsem{\eM}{\ebool}}(\delta)
					=
					\puresemarg{\epuretydefaultsem{\eN}{\ebool}}(\delta)
					= \top
					\\
					\bot, & \text{otherwise}
				\end{cases}
				\\
				\puresemarg{\epuretydefaultsem{\enot\eM}{\ebool}}(\delta)            & =
				\begin{cases}
					\bot, &
					\text{if }
					\puresemarg{\epuretydefaultsem{\eM}{\ebool}}(\delta)
					= \top
					\\
					\top, & \text{otherwise}
				\end{cases}
				\\
				\puresemarg{\epuretydefaultsem{\eunitelt}{\eunit}}(\delta)           & = \star                         \\
				\puresemarg{\epuretydefaultsem{\epair\eM\eN}{\eA\etimes\eB}}(\delta) & =
				(\puresemarg{\epuretydefaultsem\eM\eA}(\delta),
				\puresemarg{\epuretydefaultsem\eN\eB}(\delta))                                                         \\
				\puresemarg{\epuretydefaultsem{\efst~\eM}{\eA}}(\delta)              & =
				\pi_1(\puresemarg{\epuretydefaultsem{\eM}{\eA\etimes\eB}}(\delta))
				\\
				\puresemarg{\epuretydefaultsem{\esnd~\eM}{\eB}}(\delta)              & =
				\pi_2(\puresemarg{\epuretydefaultsem{\eM}{\eA\etimes\eB}}(\delta))
			\end{aligned}
			\hspace{2em}
			\begin{aligned}
				\puresemarg{\spuretydefault{\scoe}{\stau}}                                       & : \llbr\sGamma\mbleto{\llbr\stau} \\
				\puresemarg{\spuretydefaultsem{\sco x}{\stau}}(\gamma)                           & = \gamma(\sco x)                  \\
				\puresemarg{\spuretydefaultsem{\strue}{\sbool}}(\gamma)                          & = \top                            \\
				\puresemarg{\spuretydefaultsem{\sfalse}{\sbool}}(\gamma)                         & = \bot                            \\
				\puresemarg{\spuretydefaultsem{r}{\sreal}}(\gamma)                               & = r                               \\
				\puresemarg{\spuretydefaultsem{\seone\sadd\setwo}{\sreal}}(\gamma)               & =
				\puresemarg{\spuretydefaultsem{\seone}\sreal}(\gamma)+
				\puresemarg{\spuretydefaultsem\setwo\sreal}(\gamma)                                                                  \\
				\puresemarg{\spuretydefaultsem{\sneg\scoe}{\sreal}}(\gamma)                      & =
				-\puresemarg{\spuretydefaultsem{\scoe}\sreal}(\gamma)                                                                \\
				\puresemarg{\spuretydefaultsem{\seone\smul\setwo}{\sreal}}(\gamma)               & =
				\puresemarg{\spuretydefaultsem{\seone}\sreal}(\gamma)\cdot
				\puresemarg{\spuretydefaultsem\setwo\sreal}(\gamma)                                                                  \\
				\puresemarg{\spuretydefaultsem{\seone\sle\setwo}{\sbool}}(\gamma)                & =
				\begin{cases}
					\top, & \text{if }\puresemarg{\spuretydefaultsem{\seone}\sreal}(\gamma)
					\le \puresemarg{\spuretydefaultsem\setwo\sreal}(\gamma)                 \\
					\bot, & \text{otherwise}
				\end{cases}                                              \\
				\puresemarg{\spuretydefaultsem{\sunitelt}{\sunit}}(\gamma)                       & = \star                           \\
				\puresemarg{\spuretydefaultsem{\spair\seone\setwo}{\ssigma\stimes\stau}}(\gamma) & =
				(\puresemarg{\spuretydefaultsem{\seone}{\ssigma}}(\gamma),
				\puresemarg{\spuretydefaultsem{\setwo}{\stau}}(\gamma))
				\\
				\puresemarg{\spuretydefaultsem{\sfst~\scoe}{\ssigma}}(\gamma)                    & =
				\pi_1(\puresemarg{\spuretydefaultsem{\scoe}{\ssigma\stimes\stau}}(\gamma))
				\\
				\puresemarg{\spuretydefaultsem{\ssnd~\scoe}{\stau}}(\gamma)                      & =
				\pi_2(\puresemarg{\spuretydefaultsem{\scoe}{\ssigma\stimes\stau}}(\gamma))
				\\
			\end{aligned}
		\]}
	\caption{\label{fig:high-level-pure}Interpreting pure terms.}
\end{figure}

\begin{figure}
	{\footnotesize
		\fbox{$\erasesem{\ecomptydefault{\eM}{\eA}}:\llbr\sGamma\times\llbr\eDelta \to \prmonad{\llbrspc\eA}$}
		\vspace{1em}
		\[
			\begin{aligned}
				\erasesemarg{\ecomptydefaultsem{\eret~\eM}{\eA}}(\gamma,\delta)             & =
				\gret(\puresemarg{\epuretydefaultsem{\eM}{\eA}}(\delta))
				\\
				\erasesemarg{\ecomptydefaultsem{\eletin{\eco X}\eM\eN}{\eB}}(\gamma,\delta) & =
				\left(\begin{aligned}
						       & x \gets \erasesemarg{\ecomptydefaultsem\eM\eA}(\gamma,\delta);                                  \\
						       & \erasesemarg{\ecomptysem\sGamma{\eDelta,\eco X\ofty \eA}\eN\eB}(\gamma,\delta[\eco X\mapsto x])
					      \end{aligned}\right)                                                        \\
				\erasesemarg{\ecomptydefaultsem{\esample[\scoe]}{\eA}}(\gamma,\delta)       & = \valstoe\eA\stau{\erasesemarg{\scomptydefaultsem{\scoe}{\stau}}(\gamma,\delta)} \\
			\end{aligned}
			\hspace{2em}
			\begin{aligned}
				\erasesemarg{\ecomptydefaultsem{\begin{aligned}
							                                 & \eif~\scoe \\
							                                 & \ethen~\eM \\
							                                 & \eelse~\eN
						                                \end{aligned}}{\eA}}(\gamma,\delta)                        & =
				\left(\begin{aligned}
						       & \mathrm{if}~\puresemarg{\spuretydefaultsem\scoe\sbool}(\gamma)      \\
						       & \mathrm{then}~\erasesemarg{\ecomptydefaultsem\eM\eA}(\gamma,\delta) \\
						       & \mathrm{else}~\erasesemarg{\ecomptydefaultsem\eN\eA}(\gamma,\delta)
					      \end{aligned}\right)                                                                     \\
				\erasesemarg{\ecomptydefaultsem{\eflip~\scoe}{\ebool}}(\gamma,\delta)  & = \overline{\flip(\puresemarg{\spuretydefaultsem\scoe\sreal}(\gamma))}  \\
				\erasesemarg{\ecomptydefaultsem{\eobserve~\eM}{\eunit}}(\gamma,\delta) & = \score(\Ind{\puresemarg{\epuretydefaultsem\eM\ebool}(\delta) = \top}) \\
			\end{aligned}
		\]
		\\
		\vspace{2em}
		\fbox{$\erasesem{\scomptydefault{\scoe}{\stau}} : \llbr\sGamma\times\llbr\eDelta \to \prmonad{\llbr\stau}$}
		\vspace{1em}
		\[
			\begin{aligned}
				\erasesemarg{\scomptydefaultsem{\sret~{\scoe}}{\stau}}(\gamma,\delta)                    & = \gret(\puresemarg{\spuretydefaultsem{\scoe}\stau}(\gamma)) \\
				\erasesemarg{\scomptydefaultsem{\sletin {\sco x}{\seone}{\setwo}}{\stau}}(\gamma,\delta) & =
				\left(\begin{aligned}
						       & x\gets \erasesemarg{\scomptydefaultsem{\seone}\ssigma}(\gamma,\delta);                                   \\
						       & \erasesemarg{\scomptysem{\sGamma,\sco x\ofty \ssigma}\eDelta\setwo\stau}(\gamma[\sco x\mapsto x],\delta) \\
					      \end{aligned}\right)
				\\
			\end{aligned}
			\hspace{2em}
			\begin{aligned}
				\erasesemarg{\scomptydefaultsem{\begin{aligned}
							                                 & \sif~\seone   \\
							                                 & \sthen~\setwo \\
							                                 & \selse~\sethr
						                                \end{aligned}}{\stau}}(\gamma,\delta)                      & =
				\left(\begin{aligned}
						       & \mathrm{if}~{\puresemarg{\spuretydefaultsem{\seone}\sbool}(\gamma)}          \\
						       & \mathrm{then}~{\erasesemarg{\scomptydefaultsem{\setwo}\stau}(\gamma,\delta)} \\
						       & \mathrm{else}~{\erasesemarg{\scomptydefaultsem{\sethr}\stau}(\gamma,\delta)}
					      \end{aligned}\right)                                                                 \\
				\erasesemarg{\scomptydefaultsem{\sexact[\eM]}{\stau}}(\gamma,\delta) & = \valetos\eA\stau{\erasesemarg{\ecomptydefaultsem\eM{\eA}}(\gamma,\delta)} \\
			\end{aligned}
		\]
		\vspace{0.5em}
		\[
			\begin{aligned}
				\erasesemarg{\scomptydefaultsem{\sflip~\scoe}{\sreal}}(\gamma,\delta)                            & =
				\overline{\flip(\puresemarg{\spuretydefaultsem\scoe\sreal}(\gamma))}                                                                                                                                                                                                                               \\
				\erasesemarg{\scomptydefaultsem{\sunif~\seone~\setwo}{\sreal}}(\gamma,\delta)                    & = \overline{\unif(
					\puresemarg{\spuretydefaultsem\seone\sreal}(\gamma),
				\puresemarg{\spuretydefaultsem\setwo\sreal}(\gamma))}                                                                                                                                                                                                                                              \\
				\erasesemarg{\scomptydefaultsem{\spois~\scoe}{\sreal}}(\gamma,\delta)                            & =
				\overline{\pois(\puresemarg{\spuretydefaultsem\scoe\sreal}(\gamma))}                                                                                                                                                                                                                               \\
				\erasesemarg{\scomptydefaultsem{\sobserve{\seobs}{\sflip~\seone}}{\sreal}}(\gamma,\delta)        & = \score\!\left(\flip(\puresemarg{\spuretydefaultsem\seone\sreal}(\gamma)) (\puresemarg{\spuretydefaultsem{\seobs}\sbool}(\gamma)) \right)                                                      \\
				\erasesemarg{\scomptydefaultsem{\sobserve{\seobs}{\spois~\seone}}{\sreal}}(\gamma,\delta)        & = \score\!\left(\pois(\puresemarg{\spuretydefaultsem\seone\sreal}(\gamma)) (\puresemarg{\spuretydefaultsem{\seobs}\sreal}(\gamma)) \right)                                                      \\
				\erasesemarg{\scomptydefaultsem{\sobserve{\seobs}{\sunif~\seone~\setwo}}{\sreal}}(\gamma,\delta) & = \score\!\left(\unif(\puresemarg{\spuretydefaultsem\seone\sreal}(\gamma), \puresemarg{\spuretydefaultsem\setwo\sreal}(\gamma)) (\puresemarg{\spuretydefaultsem{\seobs}\sbool}(\gamma)) \right) \\
			\end{aligned}
		\]
	}
	\caption{\label{fig:high-level-effectful} Interpreting effectful terms.
		We use Haskell-style syntactic sugar for the usual monad operations.
	}
\end{figure}

This section defines a high-level model $\erasesem-$ of \multippl{}
to serve as the definition of sound inference interoperability
for the \multippl{} multilanguage.

Setting aside details of any particular inference strategy,
a \multippl{} program $\scomptyclosed\scoe\stau$
should produce a conditional probability distribution over values of type $\stau$.
Following standard techniques for modelling probabilistic programs
with conditioning~\cite{staton2016SemanticsProbabilistic},
we interpret types and typing contexts as measurable spaces,
pure terms as measurable functions,
and effectful terms via a suitable monad.

\cref{fig:high-level-types} gives the interpretations of types and typing contexts.
\disc{} types denote finite discrete measurable spaces
and \cont{} types denote arbitrary measurable spaces.
These interpretations are then lifted to typing contexts in the usual way:
a \disc{} context $\eDelta$ denotes the measurable space of substitutions $\delta$
such that $\delta(\eco x)\in\llbr{\eDelta(\eco x)}$
for all $\eco x\in\dom\eDelta$,
and a \cont{} contexts $\sGamma$ denotes the measurable space of substitutions $\gamma$
such that $\gamma(\sco x)\in\llbr{\sGamma(\sco x)}$
for all $\sco x\in\dom\sGamma$.

\cref{fig:high-level-pure} gives the standard interpretations of pure
terms~\cite{staton2016SemanticsProbabilistic}. Pure \disc{} terms $\epuretydefault\eM\eA$
denote functions $\llbr\eM : \llbr{\eDelta}\to\llbr{\eA}$,
automatically measurable because every \disc{} type denotes a discrete
measurable space.
Pure \cont{} terms $\spuretydefault\scoe\stau$
denote measurable functions $\llbr\scoe : \llbr{\sGamma}\to\llbr{\stau}$.

Following \citet{staton2016SemanticsProbabilistic}, to interpret effectful
terms we make use of the monad $\prmonad A = \giry([0,1]\times A)$, obtained by
combining the writer monad for the monoid $([0,1],\times,1)$ of weights with the
probability monad $\giry$~\citep{giry2006categorical}.  Under this
interpretation, a \multippl{}
program $\scomptyclosed\scoe\stau$ denotes a distribution over pairs $(w,v)$,
where $v$ is a value of type $\stau$ produced by a particular run of $\scoe$ and
$w$ is the weight accumulated by both \cont{} and \disc{} observe expressions.

\cref{fig:high-level-effectful} interprets
effectful \multippl{} terms using $\prmonad$.
A \disc{} term $\ecomptydefault\eM\eA$ is interpreted
as a measurable function $\erasesem\eM:\llbr\sGamma\times\llbr\eDelta\to\prmonad{\llbr\eA}$,
and a \cont{} term $\scomptydefault\scoe\stau$
is interpreted as a measurable function
$\erasesem\scoe:\llbr\sGamma\times\llbr\eDelta\to\prmonad{\llbr\stau}$.
To model the basic probabilistic operations,
the interpretation additionally makes use of the following primitives:
\begin{itemize}[leftmargin=*]
	\item
	      $\overline{(\bigcdot)}:\giry(A)\to\prmonad(A)$ lifts
	      distributions on $A$ into $\prmonad$ by setting $w=1$.
	\item
	      $\score : \weightmonoid\to\prmonad\{\star\}$ sends a weight $w$ to the Dirac
	      distribution $\delta_{(w, \star)}$ centered at $(w,\star)$.
	\item For $p\in \R$, $\mathrm{flip}(p)$ is the Bernoulli distribution on
	      $\{\top,\bot\}$ with parameter $p$ if $p\in [0,1]$ and the Dirac
	      distribution $\delta_\bot$ otherwise.
	\item For $a,b\in \R$, $\unif(a,b)$ is the uniform distribution on $[a,b]$ if
	      $a\le b$ and $\delta_{\min(a,b)}$ otherwise.
	\item For $\lambda\in \R$, $\pois(\lambda)$ is the Poisson distribution with rate
	      $\lambda$ if $\lambda > 0$ and $\delta_0$ otherwise.
\end{itemize}
Boundaries have no effect under this interpretation,
reflecting the idea that changing one's inference strategy
should not change the inferred distribution;
semantic values of \disc{} type are implicitly coerced
into semantic values of \cont{} type and vice versa,
thanks to the following lemma:

\begin{figure}
	{\footnotesize
		\[
			\begin{aligned}
				\implsemarg{\ecomptydefaultsem{\eret~\eM}{\eA}}(\Omega)(\gamma,D)              & =
				\gret(\Omega,\idfn,\puresem{\epuretydefaultsem{\eM}{\eA}}\circ D)
				\\
				\implsemarg{\ecomptydefaultsem{\eletin{\eco X}\eM\eN}{\eB}}(\Omega)(\gamma,D)  & =
				\left(\begin{aligned}
						       & (\Omega_1,f_1,X) \gets \implsemarg{\ecomptydefaultsem\eM\eA}(\Omega)(\gamma,D);          \\
						       & (\Omega_2,f_2,Y) \gets \implsemarg{\eN}(\Omega_1)(\gamma,(D\circ f_1)[\eco X\mapsto X]); \\
						       & \gret(\Omega_2, f_1\circ f_2, Y)
					      \end{aligned}\right)
				\\
				\implsemarg{\ecomptydefaultsem{\eite\scoe\eM\eN}{\eA}}(\Omega)(\gamma,D)       & =
				\left(\begin{aligned}
						       & \mathrm{if}\,{\puresemarg{\spuretydefaultsem\scoe\sbool}(\gamma)}     \\
						       & \mathrm{then}~\implsemarg{\ecomptydefaultsem\eM\eA}(\Omega)(\gamma,D) \\
						       & \mathrm{else}~\implsemarg{\ecomptydefaultsem\eN\eA}(\Omega)(\gamma,D)
					      \end{aligned}\right)     \\
				\implsemarg{\ecomptydefaultsem{\eflip~\scoe}{\ebool}}(\Omega)(\gamma,D)        & =
				\left(\begin{aligned}
						       & p := \ite{\puresemarg{\spuretydefaultsem\scoe\sreal}(\gamma)\in[0,1]}
						      {\puresemarg{\spuretydefaultsem\scoe\sreal}(\gamma)}
						      {0}
						      ;                                                                                          \\
						       & \Omega_{\rm flip} := (\{0,1\},\mu,\{0,1\})\text{ where $\mu(1) = p$};                   \\
						       & \Omega' := \Omega\otimes\Omega_{\rm flip};                                              \\
						       & X := \omega'\mapsto \mathrm{if}~\pi_2(\omega')=1~\mathrm{then}~\top~\mathrm{else}~\bot; \\
						       & \gret(\Omega',\pi_1,X)
					      \end{aligned}\right)
				\\
				\implsemarg{\ecomptydefaultsem{\eobserve~\eM}{\eunit}}(\Omega,\mu,E)(\gamma,D) & =
				\left(\begin{aligned}
						       & F := (\puresem{\epuretydefaultsem\eM\ebool}\circ D)^{-1}(\top); \\
						       & \hilite{\score(\mu|_E(F));}                                     \\
						       & \gret((\Omega,\mu,E\cap F),\idfn, \_ \mapsto \star)
					      \end{aligned}\right)
				\\
				\implsemarg{\ecomptydefaultsem{\esample[\scoe]}{\eA}}(\Omega)(\gamma,D)        & =
				\left(\begin{aligned}
						       & (\Omega',f,x) \gets \implsemarg{\scomptydefaultsem{\scoe}{\stau}}(\Omega)(\gamma,D); \\
						       & \gret(\Omega',f,\_\mapsto \valstoe\eA\stau x)
					      \end{aligned}\right)
			\end{aligned}
		\]}
	\vspace{1.0em}
	{\footnotesize
		\[
			\begin{aligned}
				\implsemarg{\scomptydefaultsem{\sret~{\scoe}}{\stau}}(\Omega)(\gamma,D)                            & =
				\gret(\Omega, \idfn,\puresemarg{\spuretydefaultsem{\scoe}\stau}(\gamma))
				\\
				\implsemarg{\scomptydefaultsem{\sletin {\sco x}{\seone}{\setwo}}{\stau}}(\Omega)(\gamma,D)         & =
				\left(\begin{aligned}
						       & (\Omega_1,f_1,x)\gets\implsemarg{\scomptydefaultsem{\seone}\ssigma}(\Omega)(\gamma,D)  \\
						       & (\Omega_2,f_2,y)\gets\implsemarg{\setwo}(\Omega_1)(\gamma[\sco x\mapsto x],D\circ f_1) \\
						       & \gret(\Omega_2,f_1\circ f_2,y)
					      \end{aligned}\right)
				\\
				\implsemarg{\scomptydefaultsem{\site\seone\setwo\sethr}{\stau}}(\Omega)(\gamma,D)                  & =
				\left(\begin{aligned}
						       & \mathrm{if}~{\puresemarg{\spuretydefaultsem{\seone}\sbool}(\gamma)}            \\
						       & \mathrm{then}~{\implsemarg{\scomptydefaultsem{\setwo}\stau}(\Omega)(\gamma,D)} \\
						       & \mathrm{else}~{\implsemarg{\scomptydefaultsem{\sethr}\stau}(\Omega)(\gamma,D)}
					      \end{aligned}\right)
				\\
				\implsemarg{\scomptydefaultsem{\sflip~\scoe}{\sbool}}(\Omega)(\gamma,D)                            & =
				\left(\begin{aligned}
						       & x\gets \overline{\flip(\puresemarg{\spuretydefaultsem\scoe\sreal}(\gamma))} \\
						       & \gret(\Omega,\idfn,x)
					      \end{aligned}\right)
				\\
				\implsemarg{\scomptydefaultsem{\sunif~\seone~\setwo}{\sreal}}(\Omega)(\gamma,D)                    & =
				\left(\begin{aligned}
						       & x\gets \overline{\unif(\puresemarg{\spuretydefaultsem\seone\sreal}(\gamma),
						      \puresemarg{\spuretydefaultsem\setwo\sreal}(\gamma))}                          \\
						       & \gret(\Omega,\idfn,x)
					      \end{aligned}\right)
				\\
				\implsemarg{\scomptydefaultsem{\spois~\scoe}{\sreal}}(\Omega)(\gamma,D)                            & =
				\left(\begin{aligned}
						       & x\gets \overline{\pois(\puresemarg{\spuretydefaultsem\scoe\sreal}(\gamma))} \\
						       & \gret(\Omega,\idfn,x)
					      \end{aligned}\right)
				\\
				\implsemarg{\scomptydefaultsem{\sobserve{\seobs}{\sflip~\seone}}{\sunit}}(\Omega)(\gamma,D)        & =
				\left(\begin{aligned}
						       & \score(\flip(\puresemarg{\spuretydefaultsem\seone\sreal}(\gamma))\left(\puresemarg{\spuretydefaultsem\seobs\sbool}(\gamma)\right)); \\
						       & \gret(\Omega,\idfn,\star)
					      \end{aligned}\right)
				\\
				\implsemarg{\scomptydefaultsem{\sobserve{\seobs}{\sunif~\seone~\setwo}}{\sunit}}(\Omega)(\gamma,D) & =
				\left(\begin{aligned}
						       & \score(\unif(\puresemarg{\spuretydefaultsem\seone\sreal}(\gamma),
							      \puresemarg{\spuretydefaultsem\setwo\sreal}(\gamma))
						      \left(\puresemarg{\spuretydefaultsem\seobs\sbool}(\gamma)\right));   \\
						       & \gret(\Omega,\idfn,\star)
					      \end{aligned}\right)
				\\
				\implsemarg{\scomptydefaultsem{\sobserve{\seobs}{\spois~\seone}}{\sunit}}(\Omega)(\gamma,D)        & =
				\left(\begin{aligned}
						       & \score(\pois(\puresemarg{\spuretydefaultsem\seone\sreal}(\gamma))
						      \left(\puresemarg{\spuretydefaultsem\seobs\sbool}(\gamma)\right));   \\
						       & \gret(\Omega,\idfn,\star)
					      \end{aligned}\right)
				\\
				\implsemarg{\scomptydefaultsem{\sexact[\eM]}{\stau}}(\Omega)(\gamma,D)                             & =
				\left(\begin{aligned}
						       & ((\Omega',\mu',E'),f,X)\gets \implsemarg{\ecomptydefaultsem\eM{\eA}}(\Omega)(\gamma,D) \\
						       & x\gets \big(\omega'\gets \mu'|_{E'};~ \gret(\valetos\eA\stau{X(\omega')})\big)         \\
						       & \gret((\Omega',\mu',E'\hilite{\cap ~ X^{-1}(x)}), f, x)
					      \end{aligned}\right)
			\end{aligned}
		\]}
	\caption{\label{fig:low-level-effectful}Low-level interpretation of effectful \multippl{} terms.
		Parts crucial for sound inference interoperability are highlighted, appearing in the denotation of $\eobserve~\eM$ and $\sexact[\eM]$. Best read in color.}
\end{figure}

\begin{lemma}[natural embedding]
	If $\eA\convrel\stau$ then $\llbr\eA = \llbr\stau$.
\end{lemma}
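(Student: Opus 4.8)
The plan is to proceed by a straightforward structural induction on the derivation of $\eA \convrel \stau$. Since the convertibility relation of \cref{fig:convrel} has exactly three rules and is syntax-directed (the shape of $\eA$ determines which rule applies), there are exactly three cases, and in each the claim reduces to unfolding the type interpretations of \cref{fig:high-level-types} and invoking the inductive hypothesis.

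For the two base cases, $\eunit \convrel \sunit$ and $\ebool \convrel \sbool$, I would observe that the interpretations coincide definitionally: $\llbr\eunit = \{\star\} = \llbr\sunit$ and $\llbr\ebool = \{\top,\bot\} = \llbr\sbool$. For the inductive case $\eA\etimes\eB \convrel \stau\stimes\ssigma$, which is derived from subderivations of $\eA\convrel\stau$ and $\eB\convrel\ssigma$, I would apply the inductive hypothesis to get $\llbr\eA = \llbr\stau$ and $\llbr\eB = \llbr\ssigma$, and then compute $\llbr{\eA\etimes\eB} = \llbr\eA\times\llbr\eB = \llbr\stau\times\llbr\ssigma = \llbr{\stau\stimes\ssigma}$, where the outer equalities hold by definition of the product interpretation and the middle one is the inductive hypothesis applied componentwise.

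The only point requiring any care — and what counts as the ``main obstacle,'' though it is a mild one — is that the asserted equality is an equality of \emph{measurable spaces}, not merely of underlying sets, so one must also check that the $\sigma$-algebras agree. This is immediate once one notes that no convertible type can mention $\sreal$: the convertibility relation provides no rule producing $\sreal$ on either side, so every space arising in such a derivation is built from $\{\star\}$ and $\{\top,\bot\}$ (each carrying its discrete $\sigma$-algebra) by finite products, and the product measurable-space construction interpreting $\etimes$ and $\stimes$ is literally the same. Hence there is no genuine difficulty; the lemma simply records that the two interpretation functions agree on the convertible fragment, which is exactly what permits the boundaries $\esample[\scoe]$ and $\sexact[\eM]$ to coerce semantic values across languages in \cref{fig:high-level-effectful} without any accompanying conversion on the semantic domains.
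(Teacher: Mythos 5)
Your proof is correct and follows exactly the paper's approach: the paper's entire proof is ``By induction on $\eA\convrel\stau$,'' and your three cases are the full unfolding of that induction. Your additional remark that the equality holds at the level of measurable spaces (not just underlying sets, since $\sreal$ never appears in the convertible fragment and all spaces involved are discrete finite products) is a correct and worthwhile detail the paper leaves implicit.
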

\begin{proof} By induction on $\eA\convrel\stau$.  \end{proof}

\subsection{Low-level model} \label{sec:low-level-model}

This section presents a low-level model $\implsem-$ of \multippl{},
capturing the particular details of our inference strategy.

The interpretations of types, typing contexts,
and the pure fragment of \multippl{} are identical to
the ones given in \cref{sec:high-level-model}.
Where $\implsem-$ differs from $\erasesem-$ is in the interpretation
of effectful terms.  Key to this interpretation is the construction of a
suitable semantic domain for interpreting effectful terms in a way that
faithfully reflects the details of our implementation.
We construct this semantic domain by combining
models of exact and sampling-based inference.

Our model of sampling-based inference is entirely standard,
making use of the monad $\prmonad$
of \cref{sec:high-level-model}.
This monad captures the fact that a sampler
performs inference by
drawing weighted samples from
the distribution defined by a probabilistic program~\cite{staton2016SemanticsProbabilistic}.

Our model of exact inference, on the other hand,
is novel. As explained in \cref{sec:discinf}
and documented in full detail in \citet{holtzen2020scaling},
exact inference via knowledge compilation
performs inference by maintaining two pieces of
state: a \emph{weight map} \(w\) associating Boolean literals
to probabilities, and a Boolean formula \(\alpha\),
called the \emph{accepting formula},
that encodes the paths through the program that do not violate
observe statements. The final result of knowledge compilation
is itself a Boolean formula \(\varphi\); the posterior
distribution can then be calculated by performing weighted
model counting on \(\varphi\land \alpha\) and \(\alpha\) with respect
to the weight map \(w\).

The defining trait of this knowledge compilation strategy
is that it \emph{maintains an exact representation of
	the underlying probability space throughout probabilistic program execution}.
At any given moment during knowledge compilation, there is an underlying
\emph{sample space}: the space of models over the collection of Boolean variables
generated so far.
The purpose of the weight map \(w\) is to represent a \emph{distribution}
over this sample space: the probability of a given model can be computed
by multiplying the weights of all of its literals.
Together, the sample space and the weight map form a \emph{probability space},
which is statefully manipulated throughout the knowledge compilation process.
Upon each encounter of a \(\eflip\) command,
the probability space grows: this is implemented by
generating a fresh Boolean variable to represent
the result of the \(\eflip\)
and extending \(w\) accordingly.
The purpose of the accepting formula \(\alpha\)
is to represent an \emph{event} in this probability space:
the event consisting of those models that satisfy \(\alpha\).
Upon each encounter of an \(\eobserve\) command,
this event shrinks: this is implemented
by conjoining the condition being observed onto \(\alpha\).
Finally, the purpose of the output formula \(\varphi\)
is to represent a \emph{random variable}, which is to say
a Boolean-valued function out of the sample space:
the formula \(\varphi\) represents the random variable
that takes values \(\top\) for those models
that satisfy \(\varphi\) and \(\bot\) otherwise.

What is essential about this setup is that
it \emph{maintains a conditional probability space}
\((\Omega,\mu,E)\), consisting of a sample space \(\Omega\) (the space
of models), a probability measure \(\mu\) on it (represented by the weight map),
and an event \(E\) denoting the result of all \(\eobserve\) statements
so far (represented by the accepting formula),
and that it \emph{produces random variables}.
The fact that these probability spaces, events, and random variables
are represented via weighted Boolean formulas, while crucial
for the efficiency of inference, are details of the implementation that
are irrelevant to ensuring safe inference interoperability.
Because of this, our low-level semantics abstracts over these representation
concerns, choosing instead to work directly with probability spaces
and random variables.
Following \citet{li2023lilac},
we model \disc{} programs as statefully manipulating
tuples \((\Omega,\mu,E)\) and producing random variables \(X\).
For example, we model
running the effectful \disc{} program
\[
	{\eco X}\ofty\ebool \vdash_c
	\left(\begin{aligned}
			 & \elet~{\eco Y}~\ebe~{\eflip~1/2}~\einkw \\
			 & \eobserve~(\eco X\eand\eco Y)~\einkw    \\
			 & \eret~{\eco X}
		\end{aligned}\right) : \ebool
\]
given input probability space $(\Omega,\mu,E)$
and random variable $X : \Omega\to\llbr\ebool$
as follows: \begin{itemize}[leftmargin=*]
	\item $\eflip~1/2$ expands the probability space
	      from $(\Omega,\mu,E)$
	      to $(\Omega\times\llbr\ebool, \mu\otimes\Ber1/2, E\times\llbr\ebool)$,
	      and produces the Boolean random variable $Y : \Omega\times\llbr\ebool\to\llbr\ebool$
	      defined by $Y(\omega,b) = b$.
	      This is implemented by generating a new Boolean variable representing \(Y\).
	      Note that $Y$ is defined in terms of the new sample space $\Omega\times\llbr\ebool$.
	      The function $\pi_1 : \Omega\times\llbr\ebool\to\Omega$
	      says how to convert between the old sample space $\Omega$
	      and the new sample space $\Omega\times\llbr\ebool$:
	      the random variable $X$, defined in terms of the old space $\Omega$,
	      can be converted into a random variable $X\circ\pi_1$
	      defined in terms of the new space $\Omega\times\llbr\ebool$ by precomposition
	      with $\pi_1$.
	      Similarly, the conditioning set $E$, a subset of the old space $\Omega$,
	      can be converted into a conditioning set $\pi_1^{-1}(E) = E\times\llbr\ebool$
	      on the new sample space $\Omega\times\llbr\ebool$.
	      In the implementation, these conversions are no-ops: they amount to the fact
	      that a Boolean formula over Boolean variables \(\Gamma\) can be weakened
	      to a Boolean formula over variables \(\Gamma,x\).
	\item $\eobserve~(\eco X\eand\eco Y)$
	      shrinks the new conditioning set $E\times\llbr\ebool$
	      by intersecting it with the subset $G := \{(\omega,b) \mid X(\omega)= Y(b)=\top\}$
	      of $\Omega\times\llbr\ebool$
	      on which $X$ and $Y$ are both $\top$;
	      this produces a new conditioning set
	      $(E\times\llbr\ebool)\cap G$.
	      This is implemented by conjoining the Boolean formula representing \(X\land Y\)
	      onto the accepting formula.
\end{itemize}
In general, we will interpret \multippl{} programs in a semantic domain that combines
this stateful approach to modelling exact inference
with the standard \(\prmonad\)-based approach to modelling sampling-based inference:
a \multippl{} program $\ecomptydefault\eM\eA$
denotes a function that
receives:
\begin{enumerate}
	\item
	      a concrete instantiation $\gamma\in\llbr{\sGamma}$ for free \cont{}
	      variables
	\item
	      a probability space $\Omega$
	      and a random variable $D\in \Omega\to\llbr{\eDelta}$
	      for free \disc{} variables,
\end{enumerate}
and uses the monad $\prmonad$ to produce a weighted sample
consisting of a new probability space $\Omega'$ and a random variable $X\in\Omega'\to\llbr{\eA}$
of outputs. The old and new probability spaces are connected by a function \(f:\Omega'\to\Omega\),
which says
how to convert random variables and events defined in terms of the old space into
random variables and events defined on the new one.
The following definitions make this idea precise.
\begin{definition}
	A \emph{finite conditional probability space}
	is a triple $(\Omega,\mu,E)$
	where
	(1) $\Omega$ is a finite set;
	(2) $\mu : \Omega\to[0,1]$ is a discrete probability distribution,
	and (3) $E$ is a subset of $\Omega$ called the \emph{conditioning set}.
	Let $\fcps$ be the collection of finite conditional probability spaces.
\end{definition}

\newcommand\fcpsto{\xrightarrow\fcps}
\begin{definition}
	A \emph{map of finite conditional probability spaces}
	$f : (\Omega,\mu,E)\to(\Omega',\mu',E')$
	is a measure-preserving map $f : (\Omega,\mu)\to(\Omega',\mu')$
	such that $E\subseteq f^{-1}(E')$.
	For two finite conditional probability spaces
	$(\Omega,\mu,E)$ and $(\Omega',\mu',E')$,
	let $(\Omega,\mu,E)\fcpsto(\Omega',\mu',E')$
	be the set of maps from
	$(\Omega,\mu,E)$ to $(\Omega',\mu',E')$.

	\emph{Note:} For readability, finite conditional probability spaces
	$(\Omega,\mu,E)$ will be written $\Omega$ unless disambiguation is needed.
\end{definition}
With these two definitions in hand, we can give a precise description
to the semantic domains used to construct our low-level model of effectful \multippl{} terms.
Given a finite conditional probability space $\Omega$ as input,
an effectful \disc{} term $\ecomptydefault\eM\eA$
sends a pair of substitutions
for free \disc{} and \cont{} variables
to a distribution over weighted samples consisting of
a new finite conditional probability
space $\Omega'$ and a random variable $\Omega'\to\llbr\eA$ of outputs:
\[
	\implsemarg{\ecomptydefault{\eM}{\eA}}(\Omega)
	\hspace{0.25em}:\hspace{0.25em}
	\llbr\sGamma\times(\Omega\to\llbrarg\eDelta)
	\to
	\prmonad\left(\coprod_{\Omega'\in\fcps}(\Omega'\xrightarrow{\fcps}\Omega)\times (\Omega'\to\llbr{\eco A}) \right)
\]
The notation \(\coprod_{\Omega'\in\fcps}(\Omega'\xrightarrow{\fcps}\Omega)\times(\Omega'\to\llbr{\eA})\)
denotes an indexed coproduct:
an element of this set
is a tuple \((\Omega',f,X)\) consisting of a new finite conditional probability space \(\Omega'\),
a map of finite conditional probability spaces \(f : \Omega'\to\Omega\)
connecting the old and new sample spaces,
and a random variable \(X\) defined on the new sample space.

Analogously, an effectful \cont{} term
$\scomptydefault\scoe\stau$
sends
a pair of substitutions
to a distribution over weighted samples consisting of
a new finite conditional probability
space $\Omega'$ and a value $v\in\llbr\stau$:
\[
	\implsemarg{\scomptydefault{\scoe}{\stau}}(\Omega)
	\hspace{0.25em}:\hspace{0.25em}
	\llbr\sGamma\times(\Omega\to\llbrarg\eDelta)
	\to
	\prmonad\left(\coprod_{\Omega'\in\fcps} (\Omega'\xrightarrow{\fcps}\Omega)\times \llbr\stau \right)
\]
The semantic equations defining
$\implsemarg{\ecomptydefault{\eM}{\eA}}(\Omega)$
and
$\implsemarg{\scomptydefault{\scoe}{\stau}}(\Omega)$
are given in
\cref{fig:low-level-effectful}.
As in \cref{fig:high-level-effectful}, we continue to use
Haskell-style syntactic sugar for the
$\prmonad$ monad operations.
The interpretation of effectful \cont{} programs is
largely similar to the one given by $\erasesem-$;
the primary difference is the plumbing of probability spaces $\Omega$
and maps $f$ throughout.
The interpretation of effectful \disc{} programs
statefully manipulates the probability space
as sketched earlier:
$\eflip~\scoe$ expands the probability space from $\Omega$
to $\Omega\otimes\Omega_{\rm flip}$,
where $\Omega_{\rm flip}$ is a freshly-generated
probability space supporting a Bernoulli-distributed
random variable with parameter $\scoe$, and
$\eobserve~\eM$ shrinks the conditioning set
from $E$ to $E\cap F$, where $F$ is the subset of the sample space
on which $\eM$ is $\top$.
Maps of conditional probability spaces $f$ are
used to convert random variables from old to new sample spaces throughout.

The interpretation of the \cont{}-to-\disc{} boundary $\esample[\scoe]$ is to
draw a weighted sample $x$ from $\scoe$ and return the constant random variable
at $x$.  Conversely, the interpretation of the \disc{}-to-\cont{} boundary
$\sexact[\eM]$ is to compute the random variable $X$ denoted by $\eM$ and then
return a sample $x$ drawn from the distribution of $X$.  The parts of
\cref{fig:low-level-effectful} shown in bold ensure sound inference
interoperability: in the interpretation of $\sexact[\eM]$, the event $X^{-1}(x)$
is added to the conditioning set to ensure \emph{sample consistency}; in the
interpretation of $\eobserve~\eM$, the statement $\score(\mu|_E(F))$ performs
\emph{importance weighting}, to ensure the weight of the current execution
remains correct relative to other possible executions.\footnote{Here, $\mu|_E$ is the distribution $\mu$ conditioned on the event E.}

\subsection{Soundness} \label{sec:soundness}

This section presents our main theoretical result:
the low-level model $\implsem-$ capturing our inference strategy
soundly refines the high-level model $\erasesem-$; that is,
given a complete \multippl{} program $\scoe$,
weighted samples drawn from $\scoe$
according to our knowledge-compilation- and importance-sampling-based inference strategy
follow the same distribution
as samples drawn according to $\erasesem-$.
To make this precise, we first define what it means to run a complete \multippl{}
program, and what it means for two distributions over weighted samples to be equivalent.

\begin{definition}
	For a closed program $\scomptyclosed\scoe\stau$,
	let $\sevalprog(\scoe) $ be the computation
	\[\left(\begin{aligned}
				 & (\_,\_,x)\gets \implsemarg{\scoe}(\fcpsunit)(\emptyset,\emptyset); \\
				 & \gret~ x
			\end{aligned}\right) : \prmonad{\llbr\stau}\]
	where $\emptyset$ is the empty substitution,
	$\fcpsunit$ is the
	unique 1-point probability space.
	Let $\seraseevalprog(\scoe)$ be the computation
	$\erasesemarg{\scoe}(\emptyset,\emptyset) : \prmonad{\llbr\stau}$.
\end{definition}

\begin{definition}
	Two computations $\mu,\nu : \prmonad A$
	are \emph{equal as importance samplers},
	written $\mu\preq\nu$,
	if for all bounded integrable $k : A\to\R$
	it holds that $\Ex_{(a,x)\sim \mu}[a \cdot k(x)]
		=\Ex_{(b,y)\sim \nu}[b \cdot k(y)]$.
\end{definition}

With these definitions in hand, our soundness theorem states that
our inference strategy agrees with the high-level model
up to equality of importance samplers.

\begin{theorem}[soundness] \label{thm:toplevel-soundness}
	If $\scomptyclosed\scoe\stau$
	then
	$\sevalprog(\scoe) \preq \seraseevalprog(\scoe)$.
\end{theorem}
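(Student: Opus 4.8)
The plan is to derive \cref{thm:toplevel-soundness} from a stronger \emph{bridge lemma} that relates $\implsem-$ and $\erasesem-$ on arbitrary open effectful terms and arbitrary input states, proved by mutual induction on the structure of effectful \disc{} and \cont{} terms (the two judgments are intertwined, since \disc{} embeds \cont{} via $\esample$ and \cont{} embeds \disc{} via $\sexact$). A direct induction on the theorem is impossible: the low-level model threads a finite conditional probability space $(\Omega,\mu,E)$ and represents free \disc{} variables \emph{symbolically} as a random variable $D:\Omega\to\llbr\eDelta$, whereas the high-level model takes a \emph{concrete} substitution $\delta$. The bridge between the two is to marginalize the discrete randomness: sampling $\omega\sim\mu|_E$ and setting $\delta=D(\omega)$ should render the two models equal as importance samplers. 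For a \disc{} term, whose low-level output is a random variable, one additionally marginalizes that output against the final conditional space, whereas for a \cont{} term the output is already concrete and only the environment is marginalized:
\[
\left(\begin{aligned}
  & ((\Omega',\mu',E'),f,X)\gets\implsemarg{\eM}(\Omega)(\gamma,D);\\
  & \omega'\gets\overline{\mu'|_{E'}};\ \gret~X(\omega')
\end{aligned}\right)
\ \preq\
\left(\begin{aligned}
  & \omega\gets\overline{\mu|_E};\\
  & (w,v)\gets\erasesemarg{\eM}(\gamma,D(\omega));\ \gret~v
\end{aligned}\right)
\]
\[
\left(\begin{aligned}
  & (\Omega',f,v)\gets\implsemarg{\scoe}(\Omega)(\gamma,D);\\
  & \gret~v
\end{aligned}\right)
\ \preq\
\left(\begin{aligned}
  & \omega\gets\overline{\mu|_E};\\
  & (w,v)\gets\erasesemarg{\scoe}(\gamma,D(\omega));\ \gret~v
\end{aligned}\right)
\]

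The routine cases ($\eret$/$\sret$, $\eletin$/$\sletin$, $\eite$/$\site$, the pure operations, and the primitive distributions $\eflip,\sflip,\sunif,\spois$) follow by unfolding the definitions in \cref{fig:high-level-effectful,fig:low-level-effectful}, applying the induction hypotheses, and using the monad laws for $\prmonad$ together with Fubini to reorder the sampling of fresh, independent randomness. On these clauses the two models coincide up to the bookkeeping of the maps $f$, which are measure-preserving and hence invisible after marginalization; the pure parameters and guards are handled for free, since the pure fragment is interpreted identically in both models.

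The crux is the three clauses where the models genuinely diverge. For $\eobserve~\eM$, the low-level model deterministically scores by the conditional probability $\mu|_E(F)$ and shrinks the conditioning set to $E\cap F$ (importance weighting), while the high-level model scores by the \emph{random} indicator $\Ind{\llbr\eM(\delta)=\top}$. These agree because, after marginalizing $\omega\sim\mu|_E$, the expected indicator is exactly $\mu|_E(F)$, and because the nesting identity $\mu|_{E\cap F}=(\mu|_E)|_F$ makes the shrunk set on the left coincide with conditioning the marginalizing sample on $F$ on the right, so that the weights telescope correctly across successive observes. For $\sexact[\eM]$, the low-level model draws $x$ from the exact distribution of the random variable computed by $\eM$ and then conditions on $X^{-1}(x)$ (sample consistency); applying the \disc{} half of the induction hypothesis to $\eM$ matches the law of this sampled $x$ with the high-level value, and since this step performs no $\score$ it leaves the current weight untouched while restricting the space so that all later references to the same discrete randomness remain consistent with $x$ --- mirroring the fact that the high-level $\delta$ is fixed once and for all. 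The $\esample[\scoe]$ boundary is the easy direction: it merely draws a \cont{} sample and injects it as a constant \disc{} random variable, so the \cont{} half of the induction hypothesis applies directly.

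Finally, instantiating the \cont{} half at the one-point space $\fcpsunit$ with empty substitutions collapses the marginalization and identifies the two sides with $\sevalprog(\scoe)$ and $\seraseevalprog(\scoe)$, yielding the theorem. I expect the main obstacle to be arranging the inductive argument so that the lemma is genuinely closed under sequencing: in $\eletin{\eco X}\eM\eN$ the low-level model feeds the \emph{un-marginalized} output random variable of $\eM$ into $\eN$, so the let case cannot simply compose the two marginalized statements for $\eM$ and $\eN$ independently --- one must track the joint law and show that the two per-term marginalizations compose into the single marginalization of the combined term. Establishing this requires a disintegration/Fubini argument over the composite $f_1\circ f_2$ of measure-preserving maps, using the nesting $E_2\subseteq f_2^{-1}(E_1)$ and $E_1\subseteq f_1^{-1}(E)$ of conditioning sets, and it is precisely here that the \emph{weighted} conditioning contributed by $\eobserve$ and the \emph{unweighted} conditioning contributed by $\sexact$ must be reconciled.
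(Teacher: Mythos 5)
Your overall strategy is the paper's: strengthen the statement to open terms, proceed by mutual induction on the typing of effectful \disc{} and \cont{} terms, bridge the symbolic environment $D:\Omega\to\llbr\eDelta$ and the concrete substitution $\delta$ by marginalizing $\omega\gets\mu|_E$, handle $\eobserve$ by a score/sample interchange together with $\mu|_{E\cap F}=(\mu|_E)|_F$ (the paper's \cref{lem:score-interchange} plus a nested-conditioning lemma), handle $\sexact$ by the sample-twice-equals-joint identity (\cref{lem:cond-nested-integral}), and finally instantiate at $\fcpsunit$ with empty substitutions. Your analysis of the three crux cases matches the paper's proof almost exactly.

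However, the bridge lemma as you state it is too weak to close the induction, and you concede as much in your last paragraph without resolving it. The paper's strengthened theorem asserts equality of the \emph{joint} law of the evaluated environment and the output: its \disc{} half returns the pair $(D(f(\omega')),X\omega')$ on the low-level side against $(D\omega,x)$ on the high-level side, and its \cont{} half likewise samples $\omega'\gets\mu'|_{E'}$ from the \emph{final} conditional space and returns $(D(f(\omega')),x)$. Your version returns only the output marginal, and your \cont{} half omits the final-space marginalization entirely. Consequently, in $\eletin{\eco X}\eM\eN$ the induction hypothesis for $\eM$ says nothing about how $X$ correlates with $D\circ f_1$, which is precisely what $\eN$ consumes; the same failure occurs for \cont{} sequencing, where the final space has been shrunk by $\sexact$'s sample-consistency conditioning. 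The ``disintegration/Fubini argument over $f_1\circ f_2$'' you propose cannot repair this post hoc: joint-law information is not derivable from two marginal statements, so any such patch amounts to re-proving the strengthened statement inside the let case. The correct move is simply to build $D(f(\omega'))$ into the returned tuple of the lemma itself; with that strengthening, both let cases close by the congruence of $\preq$ under monadic bind (which the paper isolates as a separate lemma), and no disintegration argument is needed. A minor notational point: writing $(w,v)\gets\erasesemarg\eM(\gamma,D\omega);\ \gret~v$ reads as discarding the weight, but in $\prmonad$ the bind already accumulates weights, so the right-hand sides should bind only the value.
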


\cref{thm:toplevel-soundness} is proved by induction
on typing, after suitable strengthening of the theorem statement
from closed to open terms.
The essence of the proof boils down to two key lemmas.
The first lemma allows swapping the order of sampling and scoring,
and is crucial to the correctness of our importance reweighting scheme
in interpreting $\eobserve$:
\begin{lemma} \label{lem:score-interchange}
	If $(\Omega,\mu,E)\in\fcps$ then
	$
		\left(\begin{aligned}
				 & \omega\gets \mu;           \\
				 & \score(\Ind{\omega\in E}); \\
				 & \gret~\omega
			\end{aligned}\right)
		\preq
		\left(\begin{aligned}
				 & \score(\mu(E));     \\
				 & \omega\gets \mu|_E; \\
				 & \gret~\omega
			\end{aligned}\right)
	$.
\end{lemma}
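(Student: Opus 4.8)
The plan is to unfold the definition of equality as importance samplers and reduce the claim to an elementary identity about finite sums. Since $\Omega$ is finite, every distribution in play is discrete and all expectations are finite sums, so boundedness and integrability of the test function are automatic and there are no measure-theoretic subtleties to manage. Fixing an arbitrary bounded integrable $k : \Omega\to\R$, it suffices to show $\Ex_{(a,\omega)\sim L}[a\cdot k(\omega)] = \Ex_{(b,\omega)\sim R}[b\cdot k(\omega)]$, where $L$ and $R$ name the two $\prmonad$-computations on the left- and right-hand sides of the statement.

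First I would evaluate the left-hand side. Using that the writer component of $\prmonad$ multiplies accumulated weights along each bind, while $\gret$ and the lift $\overline{(\bigcdot)}$ both contribute weight $1$, the computation $L$ draws $\omega\sim\mu$ and emits the weighted sample $(\Ind{\omega\in E},\,\omega)$. Hence $\Ex_{(a,\omega)\sim L}[a\cdot k(\omega)] = \sum_{\omega\in\Omega}\mu(\omega)\,\Ind{\omega\in E}\,k(\omega) = \sum_{\omega\in E}\mu(\omega)\,k(\omega)$. Symmetrically, $R$ scores by the constant $\mu(E)$ and then draws $\omega\sim\mu|_E$, emitting $(\mu(E),\,\omega)$, so that $\Ex_{(b,\omega)\sim R}[b\cdot k(\omega)] = \mu(E)\sum_{\omega\in E}\mu|_E(\omega)\,k(\omega) = \mu(E)\sum_{\omega\in E}\tfrac{\mu(\omega)}{\mu(E)}\,k(\omega) = \sum_{\omega\in E}\mu(\omega)\,k(\omega)$, where the normalization constant of $\mu|_E$ cancels the leading $\mu(E)$. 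The two expressions coincide, and since $k$ was arbitrary this yields $L\preq R$.

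The one place requiring care is the degenerate case $\mu(E)=0$, where $\mu|_E$ is undefined and the cancellation above is unjustified. I would dispatch it separately: when $\mu(E)=0$ we have $\mu(\omega)=0$ for every $\omega\in E$, so the left-hand sum vanishes; on the right, $\score(0)$ forces the accumulated weight $b$ to be identically $0$, whence $\Ex_{(b,\omega)\sim R}[b\cdot k(\omega)]=0$ no matter which distribution is used as the conventional value of $\mu|_E$ on a null event. Both sides are therefore $0$ and the equality persists. Apart from this case split, the argument is a direct unfolding of the monad operations, so I expect no substantive obstacle beyond being careful about how weights accumulate through the binds.
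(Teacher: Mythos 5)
Your proof is correct and takes essentially the same route as the paper's: both unfold $\preq$ against an arbitrary test function $k$, reduce each side to a finite sum, and observe the cancellation $\mu(E)\cdot\mu|_E(\omega)=\Ind{\omega\in E}\,\mu(\omega)$. Your explicit case split on $\mu(E)=0$ is if anything slightly more careful than the paper's one-line computation, which disposes of that case by convention --- the paper defines $\mu|_E$ to be a weight-zero point mass when $\mu(E)=0$, so both sides vanish exactly as you argue.
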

The second lemma says that sampling twice --- %
from a marginal on $X$ to get a sample $x$, %
then from the conditional distribution
given $X=x$ --- is the same as sampling once from the joint distribution,
and is crucial to ensuring sample consistency in our implementation of the boundary
$\sexact[\eM]$.
\begin{lemma} \label{lem:cond-nested-integral}
	If $(\Omega,\mu,E)\in\fcps$ and $X : \Omega\to A$ with $A$ finite,
	then \[\left(\begin{aligned}
				 & x\gets \big(\omega\gets \mu; \gret(X\omega)\big); \\
				 & \omega'\gets \mu|_{X^{-1}(x)};                    \\
				 & \gret(x,\omega)
			\end{aligned}\right)
		=
		\left(\begin{aligned}
				 & \omega'\gets \mu;       \\
				 & \gret(X\omega',\omega')
			\end{aligned}\right).\]
\end{lemma}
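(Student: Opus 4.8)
The plan is to reduce the claimed equality of $\prmonad$-computations to an elementary identity about discrete distributions and then verify it by comparing point masses. First I would observe that neither side invokes $\score$: every weight is introduced by $\gret$ and so equals $1$, and monadic bind multiplies weights, so both computations deterministically carry weight $1$. Consequently, the two sides agree in $\prmonad(A\times\Omega)=\giry([0,1]\times(A\times\Omega))$ if and only if their underlying $\giry$-distributions on $A\times\Omega$ coincide. Since $\Omega$ is finite by definition of $\fcps$ and $A$ is finite by hypothesis, $A\times\Omega$ is finite, so it suffices to check that both distributions assign equal mass to each point $(a,\omega_0)$.

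For the right-hand side, drawing $\omega'\sim\mu$ and returning $(X\omega',\omega')$ puts mass $\mu(\omega_0)\cdot\Ind{X\omega_0 = a}$ on $(a,\omega_0)$, since the returned pair equals $(a,\omega_0)$ exactly when $\omega'=\omega_0$ and $X\omega_0=a$. For the left-hand side, the first bind $\omega\gets\mu;\gret(X\omega)$ is the pushforward $X_*\mu$, which draws $x=a$ with probability $\mu(X^{-1}(a))$; the second bind then draws $\omega'$ from $\mu|_{X^{-1}(a)}$, which places mass $\mu(\omega_0)/\mu(X^{-1}(a))$ on $\omega_0$ when $X\omega_0=a$ and $0$ otherwise. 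Multiplying the two, the factor $\mu(X^{-1}(a))$ cancels and the point mass at $(a,\omega_0)$ is again $\mu(\omega_0)\cdot\Ind{X\omega_0=a}$. Hence the two distributions agree at every point, and therefore as measures.

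The step I expect to require the most care is the well-definedness of the conditional $\mu|_{X^{-1}(x)}$, which is undefined when $\mu(X^{-1}(x))=0$, so the cancellation above cannot be performed blindly. This is harmless because $x$ is sampled from the marginal $X_*\mu$, whose support is exactly $\{a : \mu(X^{-1}(a))>0\}$; thus almost surely we condition only on events of positive measure, and points $(a,\omega_0)$ with $\mu(X^{-1}(a))=0$ receive mass $0$ on both sides and may be discarded. (I read the left-hand return value as $\gret(x,\omega')$ rather than $\gret(x,\omega)$, since $\omega$ is unbound; note that $\omega'\sim\mu|_{X^{-1}(x)}$ forces $X\omega'=x$ almost surely, so the returned pair is $(X\omega',\omega')$, matching the right-hand side.)
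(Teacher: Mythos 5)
Your proof is correct and takes essentially the same route as the paper's: since all distributions involved are discrete, both sides are compared pointwise on $A\times\Omega$, and the marginal factor $\mu(X^{-1}(a))$ cancels against the conditional mass exactly as in the paper's one-line summation $\mu(X^{-1}(a))\cdot\mu|_{X^{-1}(a)}(b)=\mu(X^{-1}(a)\cap b)$. Your additional observations are accurate refinements of details the paper elides: the weights are indeed identically $1$ so the equality reduces to the underlying $\giry$-distributions, the null-fiber case is harmless (though note the paper defines $\mu|_E$ to be the weight-zero computation $\przero_\Omega$ when $\mu(E)=0$, rather than leaving it undefined), and the statement's $\gret(x,\omega)$ is indeed a typo for $\gret(x,\omega')$.
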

The full details can be found in \appcref{app:soundness}.

\section{Evaluation}
\label{sec:eval}

In \cref{sec:multippl} we described the theoretical underpinnings of \host{} and
proved it sound. In this section we provide implementation details
and empirical evidence for the utility of \multippl{} by measuring its
scalability on well-known inference tasks and comparing its performance against
existing probabilistic programming systems. We conclude with a discussion of our
evaluation and how these programs relate to the design space of \multippl{} programs.

\subsection{Lightweight Extensions to \multippl{}}

The semantics described in \cref{sec:multippl} provide a minimal model of
multi-language interoperation that is simple and correct. In our
implementation we extend the semantics of \disc{} and \cont{} to support more
features, resulting in a practical and flexible language.

\subsubsection{Extensions to \cont{}}
Importance-sampling languages often include more features than those
described in \cont{}. The grammar for \cont{}, shown in \cref{fig:syntax}, supports three base distributions:
Bernoulli, Uniform, and Poisson distributions. In our implementation
we include many more distributions including Normal, Beta, and Dirichlet
distributions, as well as their corresponding observation expressions.
We also extend \cont{} with unbounded loops and list data structures.

\subsubsection{Extensions to \disc{}}
Our implementation of \disc{} directly leverages the BDD library of
\dice{}~\cite{holtzen2020scaling} and includes support for integers as described
in \citet{holtzen2020scaling}. Integers can be introduced into a \disc{} program
either by embedding a \cont{} integer or through new syntax in \disc{}
representing a discrete distribution. Both terms are translated into one-hot encoded
tuples of Boolean variables: \cont{} integers are translated
dynamically, while discrete categorical distributions are translated by the compiler
statically into the \disc{} grammar shown in \cref{fig:syntax}.

\subsection{Empirical evaluation}

\multippl{} programs encompass a vast design space, including both \cont{} and
\disc{} programs as well any interleaving of these two languages.  To
investigate the efficacy of our implementation and characterize this landscape,
we ask the following questions:

\begin{enumerate}[leftmargin=*]
	\item \emph{Does \host{} capture enough expressive power to represent
		      interesting and practical probabilistic structure while maintaining
		      competitive performance?} We consider four benchmarks with complex
	      conditional independence structures to illustrate the design space of
	      \multippl{} programs. We draw on models in the domains of
	      network analysis~\cite{knight2011Internet,gehr2018BayonetProbabilistic} and
	      Bayesian networks~\cite{beinlich1989ALARM, binder1997Adaptive}.

	\item \emph{How does \host{} compare with contemporary PPLs in using exact
		      and approximate inference with respect to wall-clock time and distance from
		      the exact distribution?} To answer this question, we benchmark against
	      state-of-the-art PPLs which handle both discrete and continuous variables:
	      PSI~\cite{gehr2016PSIExact}, performing exact inference by compilation, and
	      Pyro~\cite{bingham2019Pyro}, using its importance sampling infrastructure for approximate inference.

\end{enumerate}

\subsubsection{Experimental Setup}%

For exact inference, PSI is a best-in-class language that encodes both
discrete and continuous variables using its compiled symbolic approach. For
approximate inference we leverage Pyro's importance sampling
infrastructure. \multippl{} is written in Rust and performs both knowledge
compilation and sampling during runtime evaluation when it encounters a \disc{}
or \cont{} program, respectively. To unify the comparison between these
disparate settings, we delineate our
evaluation criteria along two metrics of sample \textit{efficiency} and sample
\textit{quality}.

The sample \textit{efficiency} of each inference strategy is defined as the
wall-clock time to draw 1000 samples; measured in seconds and recorded in
``Time(s)'' column of the following figures. Comparing the performance of inference algorithms implemented in
different languages is a general challenge. To account for the difference in overhead, we treat \cont{}
as our baseline in the approximate setting.

Sample \textit{quality} is also important and we computed the \emph{L1-distance}
(i.e., the difference of absolute values) between a ground-truth answer, derived
for each task, and the estimated quantity from sampling.
Tasks that only evaluate exact inference always yield an L1-distance of 0: for
these tasks we only report wall-clock time, and we only draw one sample from the
\multippl{} program.

Heuristically, our aim in writing \multippl{} programs is to achieve high
\textit{quality} samples using \disc{} while maintaining reasonable wall-clock
\textit{efficiency} with \cont{}. While this guides the design of our evaluation,
users must decide how this trade-off effects their models on a
case-by-case basis.

All benchmarks involving approximate inference are performed using a fixed
budget of 1000 samples and all statistics collected are averaged over 100
independent trials.\footnote{All evaluations are run on a single thread of an
	AMD EPYC 7543 Processor with 2.8GHz and $500$ GiB of RAM. A software artifact
	is available on Zenodo\cite{stites2025Artifact} and GitHub (\url{https://github.com/stites/multippl})}

\begin{figure}[t]
	\footnotesize
	\centering
	\begin{tabular}{c|ccccccccc}
		\toprule
		\multirow{2}{*}{Model} & \multicolumn{2}{c}{PSI} & \multicolumn{2}{c}{Pyro} & \multicolumn{2}{c}{\host{} (\cont{})} & \multicolumn{2}{c}{\host{}}                                                                                                 \\
		                       & L1                      & Time(s)                  & L1                                    & Time(s)                     & L1                    & Time(s)               & L1                    & Time(s)               \\
		\midrule
		arrival/tree-15        & ---                     & ---                      & 0.365                                 & 12.713                      & 0.355                 & \textbf{0.247       } & \textbf{0.337       } & 0.349                 \\
		arrival/tree-31        & ---                     & ---                      & 0.216                                 & 26.366                      & 0.218                 & \textbf{0.561       } & \textbf{0.179       } & 0.754                 \\
		arrival/tree-63        & ---                     & ---                      & 0.118                                 & 53.946                      & 0.120                 & \textbf{1.469       } & \textbf{0.093       } & 1.912                 \\
		\midrule
		alarm                  & t/o                     & t/o                      & 1.290                                 & 16.851                      & 1.173                 & \textbf{0.433       } & \textbf{0.364       } & 14.444                \\
		insurance              & t/o                     & t/o                      & 0.149                                 & 13.724                      & 0.144                 & \textbf{1.104       } & \textbf{0.099       } & 11.406                \\
		\midrule
		gossip/4               & ---                     & ---                      & \textbf{0.119       }                 & 6.734                       & \textbf{0.119       } & \textbf{0.720       } & 0.118                 & 0.812                 \\
		gossip/10              & ---                     & ---                      & 0.533                                 & 6.786                       & 0.531                 & 1.561                 & \textbf{0.524       } & \textbf{1.373       } \\
		gossip/20              & ---                     & ---                      & 0.747                                 & 7.064                       & \textbf{0.745       } & 3.565                 & 0.750                 & \textbf{2.888       } \\
		\midrule
	\end{tabular}
	\caption{Empirical results of our benchmarks of the arrival, discrete Bayesian
		network, and gossip tasks. ``\host{} (\cont{})'' shows the evaluation of a baseline \cont{} program
		with no boundary crossings into \disc{}, evaluations under the ``\host{}''
		column performs interoperation. ``t/o'' indicates a timeout beyond 30 minutes,
		and ``---'' indicates that the problem is not expressible in PSI because of an
		unbounded loop. %
	}\label{fig:eval:approx}
\end{figure}

\subsubsection{Estimating packet arrival.}\label{sec:eval:arrival}%
Our first evaluation comes from the motivating example of \cref{fig:motiv}.
For this arrival task we are interested in modeling packets traversing a router
network and observe the presence, or absence, of packets at their
destination. Our main interest is in some unobservable router that lives along
the traversal path, and we query the expected number of packets which pass
through this node.  The router network in our evaluation has a tree-based
topology that uses an equal-cost multipath (ECMP)
protocol~\cite{hopp2000Analysis}, as shown in
\cref{fig:eval:arrival:topology}. The ECMP protocol dictates that a packet is
forwarded with uniform probability to all neighboring
routers with equal distance to the goal, as shown in
\cref{fig:eval:arrival:topology}. In this scenario, $n$ packets traverse the
network where $n$ is drawn from
a Poisson distribution with a rate of 3, as described in
\cref{fig:eval:arrival:pseudocode}.
The presence of this Poisson random variable makes this example quite challenging
for many existing PPL inference strategies because the resulting loop
has a statically unbounded number of iterations. We made the following additional
design decisions in making this task:

\begin{figure}[t]
	\begin{subfigure}[b]{0.45\linewidth}
		\centering
		\begin{tikzpicture}[node distance=0.5cm, baseline=-1em]
			\node[draw, circle, style=double,minimum size=0.50cm] at (1.50,1.25) (n0) {};
			\node[draw, circle              ,minimum size=0.50cm] at (0.75,0.75) (n1l) {};
			\node[draw, circle              ,minimum size=0.50cm] at (2.25,0.75) (n1r) {};
			\node[draw, circle              ,minimum size=0.50cm] at (1.00,0.00) (n10l) {};
			\node[draw, circle              ,minimum size=0.50cm] at (0.00,0.00) (n11l) {};
			\node[draw, circle              ,minimum size=0.50cm] at (3.00,0.00) (n11r) {};
			\node[draw, circle, fill=gray   ,minimum size=0.50cm] at (2.00,0.00) (n10r) {};
			\node[draw, rectangle                      ] at (0.00,-0.75) (start) {{\footnotesize start}};
			\draw[->] (n1l)--(n0);
			\draw[->] (n0)--(n1r);
			\draw[->] (n1l)--(n10l);
			\draw[->] (n11l)--(n1l);
			\draw[->] (n1r)--(n10r);
			\draw[->] (n1r)--(n11r);
			\draw[->] (start)--(n11l);
		\end{tikzpicture}
		\subcaption{The arrival network topology.}
		\label{fig:eval:arrival:topology}
	\end{subfigure}
	\qquad
	\begin{subfigure}[b]{0.45\linewidth}
		\begin{algorithmic}
			\State $n \sim \mathrm{Poisson}(\lambda=3)$
			\State $q \gets 0$
			\While{$n > 0$}
			\State $q \gets q + \texttt{network}()$
			\State $n \gets n - 1$
			\EndWhile{}
			\State \Return $q$
		\end{algorithmic}
		\subcaption{Pseudocode describing arrival task.}\label{fig:eval:arrival:pseudocode}
	\end{subfigure}
	\caption{Implementation-generic details for the packet-arrival task. Shown in \ref{fig:eval:arrival:topology}, a packet traverses the network by entering the bottom-left most node, annotated by the start arrow. We observe a successful traversal to the gray-filled node, and we query the double-circle node for its posterior distribution. The PSI, Pyro, and \host{} programs all follow pseudocode shown in \ref{fig:eval:arrival:pseudocode}, where $\texttt{network}$ models the topology.}
	\label{fig:eval:arrival}
\end{figure}
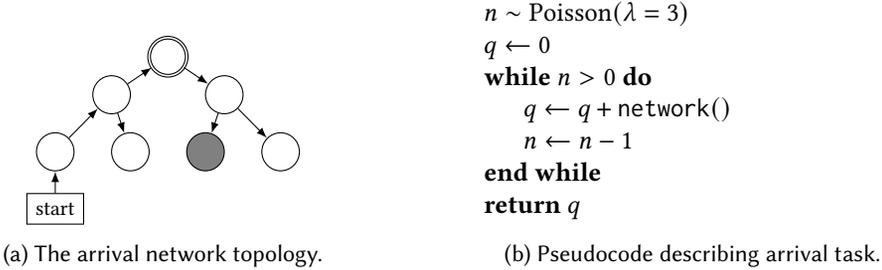
\begin{enumerate}[leftmargin=*]
	\item\textbf{Evidence:} We observe the gray node of the network topology depicted in \cref{fig:eval:arrival:topology}.

	\item\textbf{Query:} We query the expected probability that the packet traverses through a central node of the tree-topology, depicted by the twice-circled node of \cref{fig:eval:arrival:topology}.
	\item\textbf{Boundary decisions:} \cont{} models the Poisson distribution and outer loop. One boundary call is made to the network, defined in \disc{}.
	\item\textbf{Scaling:} We scale this model to topologies of 15, 31, and 63 nodes. %
	\item\textbf{Ground truth:} Our ground truth is defined by writing a \dice{} program for the network, and analytically solving for the expected number of packets.
\end{enumerate}

The rows labeled by ``arrival'' in \cref{fig:eval:approx} summarize the evaluation for this
section.
This table shows that \multippl{}'s samples are significantly higher quality
than Pyro's and the \cont{} program in this experiment. As the topology
increases in size, we see that the \multippl{} program is able to produce
increasingly higher quality samples with respect to L1 distance. This is because
\multippl{} is able to exactly compute packet reachability of a single
traversal in increasingly larger networks using \disc{}, while still able to
express sampling from the Poisson distribution in \cont{}. The \multippl{}
program performing interoperation is an order of magnitude more efficient than
Pyro, however the \cont{} alternative is still most efficient with regard to
wall-clock time and yields similar quality samples as Pyro.
PSI, using its symbolic inference procedure, fails to model the unbounded loop.

\subsubsection{Querying discrete Bayesian networks.}\label{sec:eval:bayesnet}

Bayesian networks~\cite{pearl1988Probabilistic} provide a challenging and
practical source of programs with widespread adoption across numerous domains,
including medicine~\cite{andreassen1991ModelBased,
	onisko2005Probabilistic}, healthcare~\cite{beinlich1989ALARM}, and actuarial sciences~\cite{binder1997Adaptive}.
Even in the purely-discrete setting, Bayesian networks remain a practical challenge when evaluating exact inference strategies due to the complex independence structures intrinsic to this domain.

In this task we study interoperation of our language by modeling two discrete
Bayesian networks: ALARM~\cite{beinlich1989ALARM} and
Insurance~\cite{binder1997Adaptive}. These networks pose a scaling challenge for
exact inference, and form the largest models described in our evaluation: ALARM
contains 509 flip primitives and the Insurance network contains 1008 flip
primitives. Modeling the entirety of the network in \disc{} and sampling 1000
times will result in a time-out for our evaluation, and we must use
interoperation to increase sample quality while keeping sample efficiency
competitive in our benchmark.

The ALARM network models a system for patient monitoring, while the Insurance network estimates the expected costs for a car insurance policyholder. We summarize these tasks as follows:
\begin{enumerate}[leftmargin=*]
	\item\textbf{Evidence:} In both models, we observed one or more leaf nodes. %
	\item\textbf{Query:} We query all root nodes for both of the Bayesian networks. %
	\item\textbf{Boundary decisions:} Variables are defined in \disc{} or \cont{} to heuristically maximize the degree of exact inference permitted while keeping the wall-clock time within 60 seconds.
	\item\textbf{Scaling:} ALARM contains 509 flip primitives and Insurance contains 1008 flip primitives.
	\item\textbf{Ground truth:} The ground truth is defined by an equivalent \dice{} program.
\end{enumerate}

The \cont{} model, with similar sample quality to Pyro, is more efficient than its
\multippl{} counterpart in this evaluation. It is also significantly
more efficient than Pyro and PSI (which timed out on this benchmark). As an importance
sampler, \cont{} and Pyro simply sample each distribution directly, and we see the
Python overhead slowing down the Pyro model.

Our \multippl{} programs demonstrate superior sample quality to the Pyro and \cont{} models. We achieve this by declaring
boundaries that split the ALARM and Insurance networks into sub-networks that are modeled
exactly with \disc{} and keep compiled BDD sizes small. However it should be noted that
placement of boundaries tips the scales in a tradeoff between quality and efficiency.
Optimal interleaving between \cont{} and \disc{} is task-sensitive,
and the \multippl{} programs evaluated only demonstrate a best-effort
approach to modeling.%

\subsubsection{Network takeover with a gossip protocol.}\label{sec:eval:gossip}

\begin{figure}[t]%
	\begin{subfigure}[b]{0.35\linewidth}
		\begin{subfigure}[t]{\linewidth}
			\centering
			\begin{tikzpicture}[node distance=0.5cm, baseline=-1em]
				\node[draw, circle,minimum size=0.5cm] at (0, 0.75) (top) {};
				\node[draw, circle,minimum size=0.5cm] at (0, -0.75) (bottom) {};
				\node[draw, circle,minimum size=0.5cm] at (0.75, 0) (right) {};
				\node[draw, circle,minimum size=0.5cm] at (-0.75, 0) (left) {};
				\draw (top) -- (bottom);
				\draw (top) -- (left);
				\draw (top) -- (right);
				\draw (bottom) -- (left);
				\draw (bottom) -- (right);
				\draw (left) -- (right);
				\node[circle] at (0, -1.5) (phantom) {};

			\end{tikzpicture}
		\end{subfigure}
		\subcaption{Topology of the gossip network.}
		\label{fig:eval:gossip:topology}
	\end{subfigure}%
	\begin{subfigure}[b]{0.63\linewidth}
		\begin{subfigure}[c]{0.40\linewidth}
			\footnotesize
			\begin{algorithmic}
				\State init $\leftarrow$ 0
				\State steps $\sim$ Uniform(4, 8)
				\State {state $\leftarrow$ \\\qquad(true, false, false, false)}
				\State deque $\leftarrow []$
				\For{$n = 1, 2$}
				\State s $\sim$ Discrete($\frac{1}{3}, \frac{1}{3}, \frac{1}{3}$);
				\State deque $\leftarrow$ s$+ 1 ::$ deque
				\EndFor{}
			\end{algorithmic}
		\end{subfigure}
		\begin{subfigure}[c]{0.55\linewidth}
			\footnotesize
			\begin{algorithmic}
				\While{steps $> 0$}
				\State cur $\leftarrow$ head(deque);
				\State deque $\leftarrow$ tail(deque);
				\State {state[cur] $\leftarrow$ state[cur] $\vee$ true}
				\For{$n = 1, 2$}
				\State s $\sim$ Discrete($\frac{1}{3}, \frac{1}{3}, \frac{1}{3}$);
				\State ix $\leftarrow$ if (s < cur) \{ s \} else \{ s + 1 \};
				\State deque $\leftarrow$ deque $++$ [ix]
				\EndFor
				\State steps  $\leftarrow$ steps - 1
				\EndWhile{}
				\State {\Return state}
			\end{algorithmic}
		\end{subfigure}

		\subcaption{Pseudocode for a gossip network task.}
		\label{fig:eval:gossip:pseudocode}
	\end{subfigure}
	\caption{
		Implementation-generic details of the gossip network task. The 4-node topology of the undirected network is shown in \ref{fig:eval:gossip:topology}. Pseudocode to iterate over each time step is provided in \ref{fig:eval:gossip:pseudocode}.  %
	}
	\label{fig:eval:gossip}
\end{figure}
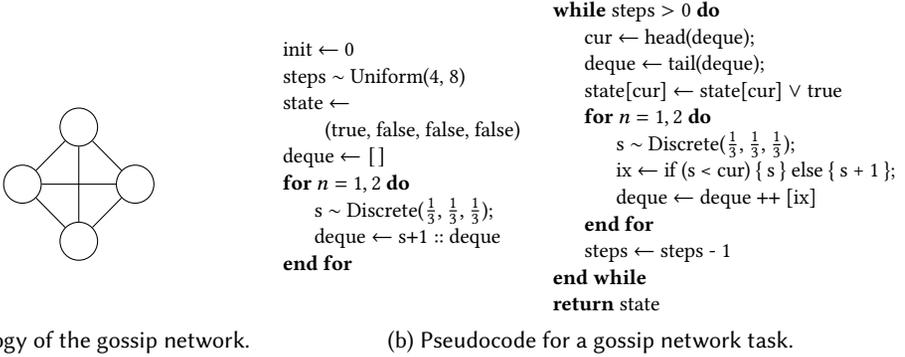

The gossip protocol is a common peer-to-peer communication method for
distributed systems.
In our setting each packet traverses an undirected, fully-connected network using a FIFO
scheduler for transport. %
At each time step, indicated by a tick in the scheduler, a server will schedule
two additional packets to all of its neighbors with each destination drawn
i.i.d. from a uniform distribution.
This task initializes with a \textit{compromised} node which sends two \textit{infected} packets
to its neighbors.
When a server receives an infected packet, it becomes
compromised and can only propagate infected packets for the remainder of the evaluation.
Taken from \citet{gehr2018BayonetProbabilistic}, we sample $n$ time steps from a uniform distribution, %
step $n$ times, then query the model for the expected number of compromised servers.%

This evaluation poses an expressivity challenge to the \disc{}
sub-language, which cannot define the dynamic-length FIFO queue without interoperation with \cont{}.
To handle this requirement, we extend \cont{} to support lists and
define all discrete random variables in \disc{}. At each end of the loop we
update our queue in \cont{}, collapsing any compiled BDDs. %

\begin{enumerate}[leftmargin=*]
	\item\textbf{Evidence:} This task defines a direct sampler and no evidence is given.
	\item\textbf{Query:} The model queries for the expected number of compromised servers after $n$ steps.
	\item\textbf{Boundary decisions:} Discrete variables are in \disc{}, the loop and FIFO queue live in \cont{}.
	\item\textbf{Scaling:} This network scales from 4- to 10- and 20- nodes. %
	\item\textbf{Ground truth:} The PSI model from \citet{gehr2018BayonetProbabilistic} was used to generate the ground truth for a statically defined set of time steps. An enumerative model was also defined to count the number of states. The expectation of these models over the loop is derived analytically.
\end{enumerate}

In \cref{fig:eval:approx}, we see that all terminating evaluations have
similar L1-distances, with Pyro and \cont{} programs producing slightly better quality
samples. The \multippl{} model produces more efficient samples, on average,
which speaks to the minimal overhead of interoperation when knowledge compilation
plays a small role in inference. There is also the possibility that BDDs are
cached and reused, resulting in a small speedup for some intermediate samples
drawn from \disc{}.

As this benchmark comes with a PSI implementation from
\citet{gehr2018BayonetProbabilistic}, we provided a best-effort attempt at
getting this to run including limiting the number of time steps to make the task
more tractable, but we were unable reproduce their results within our 30m
evaluation window.

\subsubsection{Estimating network reliability.}\label{sec:eval:reliability}
The \emph{network reliability} task is interested in a single packet's traversal through a network using a probabilistic routing protocol that is embedded in a larger network.
As a model only involving discrete random variables, we can observe how interoperation effects sample quality and efficiency by looking at programs defined in \cont{}, \disc{}, and in an optimal interoperation configuration.
Consider, again the ECMP protocol from \cref{sec:eval:arrival}. In this task we
modify each router with non-uniform probabilities, as a packet can traverse out
of the sub-network. The sub-network itself is a directed grid, shown in \cref{fig:eval:grid:topology}, with the
probability of traversal being dependent on the packet's origin. Pseudocode for the model is presented in \cref{fig:eval:grid:pseudocode}.

This benchmark observes a packet arriving at the final node in the
sub-network, and queries the probability that this packet passes through each
router in the model. As there are no continuous random variables involved, we
can model this task using either exact or approximate inference.

\begin{figure}[t]
	\begin{subfigure}[b]{0.45\linewidth}
		\centering
		\begin{tikzpicture}[node distance=0.75cm, baseline=-1em]
			\node[draw, circle                         ,minimum size=0.5cm] (x11) {};
			\node[draw, circle,            right of=x11,minimum size=0.5cm] (x12) {};
			\node[draw, circle,            right of=x12,minimum size=0.5cm] (x13) {};
			\node[draw, circle,            below of=x11,minimum size=0.5cm] (x21) {};
			\node[draw, circle,            right of=x21,minimum size=0.5cm] (x22) {};
			\node[draw, circle,            right of=x22,minimum size=0.5cm] (x23) {};
			\node[draw, circle,            below of=x21,minimum size=0.5cm] (x31) {};
			\node[draw, circle,            right of=x31,minimum size=0.5cm] (x32) {};
			\node[draw, circle, fill=gray, right of=x32,minimum size=0.5cm] (x33) {};
			\node[circle] at (0, -2.25) (phantom) {};

			\draw[->] (x11) -- (x12);
			\draw[->] (x12) -- (x13);
			\draw[->] (x11) -- (x21);
			\draw[->] (x12) -- (x22);
			\draw[->] (x13) -- (x23);
			\draw[->] (x21) -- (x22);
			\draw[->] (x22) -- (x23);
			\draw[->] (x21) -- (x31);
			\draw[->] (x22) -- (x32);
			\draw[->] (x23) -- (x33);
			\draw[->] (x31) -- (x32);
			\draw[->] (x32) -- (x33);
		\end{tikzpicture}
		\subcaption{Topology of a 9-node network}\label{fig:eval:grid:topology}
	\end{subfigure}
	\newcommand\nzz{\ensuremath{n_{00}}}
	\newcommand\nzo{\ensuremath{n_{01}}}
	\newcommand\nzt{\ensuremath{n_{02}}}
	\newcommand\noz{\ensuremath{n_{10}}}
	\newcommand\noo{\ensuremath{n_{11}}}
	\newcommand\ntz{\ensuremath{n_{20}}}
	\newcommand\nto{\ensuremath{n_{21}}}
	\newcommand\ntt{\ensuremath{n_{22}}}
	\begin{subfigure}[b]{0.45\linewidth}
		\footnotesize
		\begin{algorithmic}
			\State{\nzz $\sim$ Bern $\frac{1}{3}$}
			\State{\nzo $\sim$ if $\phantom{\neg}\nzz$ then Bern $\frac{1}{4}$ else Bern $\frac{1}{5}$} %
			\State{\noz $\sim$ if $\neg\nzz$           then Bern $\frac{1}{4}$ else Bern $\frac{1}{5}$}
			\State{$p \leftarrow \textrm{if} \phantom{\neg}\noz \vee \phantom{\neg}\nzo$ then $\frac{1}{6}$ else}
			\State{$\phantom{p \leftarrow \textrm{if} }\phantom{\neg}\noz \vee          \neg \nzo$ then $\frac{1}{7}$ else}
			\State{$\phantom{p \leftarrow \textrm{if} }\neg \noz \vee\phantom{\neg}\nzo$ then $\frac{1}{8}$ else $\frac{1}{9}$}
			\State{\noo $\sim$ Bern $p$}
			\State{observe \noo $= \top$}
			\State{\Return (\nzz, \nzo, \noz, \noo)}
		\end{algorithmic}
		\caption{Pseudocode of a 4-node model}\label{fig:eval:grid:pseudocode}
	\end{subfigure}
	\caption{
		An overview of the reliability task, with the topology of the 9-node network in \ref{fig:eval:grid:topology}: a packet is observed in the node shaded gray and all nodes are queried for their posterior distribution. In \ref{fig:eval:grid:pseudocode} we show the pseudocode for a 4-node reliability task, similar structure is used for networks with 9-, 36-, and 81-nodes.}\label{fig:eval:grid}
\end{figure}
\begin{figure}[t]
	\footnotesize
	\centering
	\begin{tabular}{c|cc|ccccccc}
		\toprule
		\multirow{2}{*}{\# Nodes} & PSI     & \host{} (\disc{})      & \multicolumn{2}{c}{Pyro} & \multicolumn{2}{c}{\host{} (\cont{})} & \multicolumn{2}{c}{\host{}}                                                           \\
		                          & Time(s) & Time(s)                & L1                       & Time(s)                               & L1                          & Time(s)               & L1                    & Time(s) \\
		\midrule
		9                         & 546.748 & \textbf{ 0.001       } & 0.080                    & 3.827                                 & 0.079                       & \textbf{0.067       } & \textbf{0.033       } & 0.098   \\
		36                        & t/o     & \textbf{ 0.089       } & 1.812                    & 14.952                                & 0.309                       & \textbf{0.277       } & \textbf{0.055       } & 1.169   \\
		81                        & t/o     & \textbf{40.728       } & 7.814                    & 33.199                                & 0.680                       & \textbf{0.887       } & \textbf{0.079       } & 81.300  \\
	\end{tabular}
	\caption{Exact and approximate results for models performing approximate inference%
	}\label{fig:eval:reliability}
\end{figure}

\begin{enumerate}[leftmargin=*]
	\item\textbf{Evidence:} The final node in the network topology observes a successful packet traversal.
	\item\textbf{Query:} The model queries for the marginal probability on all nodes in the network.
	\item\textbf{Boundary decisions:} \multippl{} programs (in column ``\host{}'' of table \cref{fig:eval:reliability}), model the minor upper- and lower- triangles of the network topology in \disc{} and perform interoperation along the minor diagonal to break the exact inference task into two parts. This maximizes the size of compiled BDDs while providing orders of magnitude improvement in sample efficiency.
	\item\textbf{Scaling:} This network scales in the size of the grid, scaling from 9- to 36- to 81- nodes.
	\item\textbf{Ground truth:} An equivalent \dice{} model was used as the ground truth for this model.
\end{enumerate}

The first two columns of \cref{fig:eval:reliability} show the results of exact
compilation; comparing PSI to \disc{} programs (column ``\host{} (\disc{})'').
Because of the nature of this evaluation, \disc{} can represent the exact
posterior of the model and produce perfect samples with competitive efficiency
for small programs. As the program grows in size, producing samples take
considerably longer: scaling with the size of the underlying
logical formula.

The partially-collapsed and fully-sampled \host{} programs are compared to Pyro in the remaining columns of
\cref{fig:eval:reliability}. \host{} programs (column
``\host{}'') are defined in \disc{} and model the minor diagonal of the
network's grid in \cont{}. Programs fully defined in \cont{} (column ``\host{} (\cont{})'')
sample each node individually in the same manner as Pyro.

In this evaluation \cont{} is more efficient and \host{} programs effectively
leverage \disc{}'s knowledge compilation to produce higher-quality samples. For
smaller models, the defined \host{} programs have efficiency competitive to
\cont{}. As the model scales, the overhead of knowledge compilation increases.
This can be seen by noting the single-sample efficiency of \disc{} programs from
our exact evaluation. As the \multippl{} program scales to 81 nodes the sample
efficiency decreases, suggesting an alternative collapsing
scheme may be preferable for larger programs.

This network reliability evaluation, alongside prior evaluations, demonstrates
that \multippl{} consistently produces higher quality samples compared to
alternatives in the approximate setting. Through these evaluations, we find that
\multippl{} does capture enough expressive power to represent interesting and
practical probabilistic structure while remaining competitive with other
languages. That said, the performance of \multippl{}'s inference poses a nuanced
landscape and we leave a full characterization of this design space to future
work.

\section{Related Work}
Multi-language interoperation between probabilistic programming languages builds
on a wide body of work spanning the programming language and the machine
learning communities. We situate our research in four categories: heterogeneous inference, programmable inference,
multi-language semantics, and the monadic semantics of probabilistic programming
languages.

\newcounter{relwork}
\setcounter{relwork}{1}
\newcommand{\relworksection}[1]{\vspace{0.5em}\noindent\textit{5.\therelwork{}\phantom{sp}\addtocounter{relwork}{1}~#1}}

\relworksection{Heterogeneous inference in probabilistic programming languages.}
There are existing probabilistic programming languages and systems that enable
users to blend different kinds of inference algorithms when performing inference
on a single probabilistic program. Particularly relevant are approaches that
leverage \emph{Rao-Blackwellization} in order to combine exact and approximate
inference strategies into a single system. Within this vein,
\citet{atkinson2022SemisymbolicInference} introduced \emph{semi-symbolic
	inference}, where the idea is to perform exact marginalization over
distributions whose posteriors can be determined to have some closed-form
solution. Other works that use variations of Rao-Blackwellization
\cite{murray2018DelayedSampling,gorinova2021conditional,obermeyer2019tensor} all seek to explicitly marginalize out
portions of the distribution by using closed-form exact posteriors when
feasible. The main difference between these approaches to Rao-Blackwellization
and our proposed approach is that these systems do not expose \emph{separate
	languages} that correspond to different phases of the inference algorithm: they
provide a single unified syntax in which the user programs. As a consequence,
they all rely on (semi-)automated means of automatically discovering which portions can
be feasibly Rao-Blackwellized; this process can be difficult to control and lead
to unpredictable performance. Our multi-language approach
has the following benefits: (1) predictable and interpretable performance due to the explicit
choice of inference algorithm that is exposed to the user; and (2) amenability to
modular formalization, since we can verify the correctness of each inference
strategy and verify the correctness of their composition on the boundary. %
We hope to incorporate the interesting ideas of these related works into \multippl{},
in particular closed-form approaches to exact marginalization of continuous distributions.

There is a broad literature on heterogeneous inference that we hope to
eventually draw on to build a richer vocabulary of sub-languages to add to
\multippl{}. \citet{friedman2018ApproximateKnowledge} described an approach to
collapsed approximate inference that dynamically blends exact inference via
knowledge compilation and approximate inference via sampling; we are curious if
this can be integrated with our system. We also look towards incorporating more
stateful inference algorithms such as Markov-Chain Monte Carlo into
\multippl{}, and aim to investigate this in future work.

\relworksection{Programmable inference.}
Programmable inference (or inference (meta-)programming)
provide probabilistic programmers with a meta-language for defining new inference algorithms within a single
PPL by offering language primitives that give direct access to the inference
runtime~\cite{mansinghka2014VentureHigherorder}. %
\citet{cusumano-towner2019GenGeneralpurpose} provides a black-box
interface to underlying inference algorithms alongside combinators to operate on
these interfaces, \citet{stites2021LearningProposals} designs a domain specific language (DSL) for inference which produces
correct-by-construction importance weights.

We see programmable inference as a viable means of designing new inference
algorithms which we can incorporate into a multi-language. Furthermore, a
multi-language setting can offer inference programmers the ability to abstract
away the nuances of the inference process, lowering the barrier to entry for
this type of development. One common thread through much of the work on
inference programming is core primitives which encapsulate the building blocks
for inference algorithms including resample-move sequential Monte Carlo,
variational inference, many other Markov chain Monte Carlo methods. These primitives
could be designed formally as DSLs, which would be a great addition to a
multi-language and something we look forward to developing in future work.

\relworksection{Nested inference.}%
~Nested inference enriches a probabilistic programming language with a
first-class \texttt{infer} or \texttt{normalize} construct that enables the
programmer to query for the probability of an event inside their probabilistic
programs~\citep{zhang2022reasoning,rainforth2018nesting,baker2009action,stuhlmuller2014reasoning,staton2017commutative}.
Nested inference is a useful programming construct that enables a variety of new
applications, such as in cognitive science where one agent may wish to reason
about the intent of another~\citep{stuhlmuller2014reasoning}.
Nested inference is similar in spirit to our multi-language approach in that it
gives the programmer control over when inference is performed
on their program and what inference algorithm is used.
A key difference between nested inference and our multi-language approach is
that the former provides access to the inference result
whereas MultiPPL's boundary forms do not. This difference is essential.
In our view, there is the following analogy to non-probabilistic programming:
performing nested inference is like invoking a compiler and inspecting the
resulting binary, whereas performing multi-language inference is like
interoperating through an FFI.
In the non-probabilistic setting, these two situations require distinct semantic models ---
compare, for example, formal models of introspection and dynamic code generation~\cite{smith1984reflection,malenfant1996semantics,benton2010step,davies2001modal,matthews2008operational}
with formal models of FFI-based interoperability~\cite{matthews2007operational,patterson2022semantic,sammler2023DimSum,gueneau2023melocoton,korkut2025verified} ---
and we believe the same is likely true of our probabilistic setting.

In the future, it would be interesting to consider integrating nested
inference within a multi-language setting and exploring the consequences
of this new feature on language interoperation.
It would also be quite interesting to investigate whether our multi-language inference
strategy could be compiled to, or expressed in terms of, rich nested inference constructs.
A preliminary analysis reveals a number of basic differences between \multippl{}'s
inference strategy and standard models of nested inference, so such a compilation scheme
would likely require significant modifications to nested inference --- for a detailed technical
discussion, see \cref{app:sec:comparison-with-nested-inference}.

\relworksection{Multi-language semantics.} Today, it is often the case that
probabilistic programming languages are embedded in a host, non-probabilistic
language~\cite{baydin2019EtalumisBringing}.
However, these PPLs assume their host semantics will not interfere with the
semantics of the PPL's inference process. This work is the first of its kind to
build on top of multi-language semantics to reason about inference algorithms.

Multi-language semantics, while new to the domain of probabilistic programming,
has had a large impact on the broader programming language community. They play
a fundamental role in reasoning about
interoperation~\citep{patterson2022semantic}, gradual
typing~\citep{new2020SemanticFoundation,tobin-hochstadt2008Design}, and
compositional compiler verification~\cite{sammler2023DimSum}. There are two
styles of calculi which represent the current approaches to multi-language
interoperation. These are the \emph{multi-languages} approach from
\citet{matthews2007operational} and the a more fine-grained approach by
\citet{siek2006GradualTyping} using a \emph{gradually typed lambda calculus}.

\citet{ye2023GradualProbabilistic} takes a traditional programming language
approach to the gradual typing of PPLs and defines a gradually typed
probabilistic lambda calculus which allows a user to migrate a PPL from an
untyped- to typed- language --- a nontrivial task involving a probabilistic
coupling argument for soundness. In contrast, our work centers on how
multi-languages can help the interoperation of inference algorithms across
semantic domains.

\citet{baydin2019EtalumisBringing} establishes, informally, a common interface
for PPLs to interact with scientific simulators across language boundaries. In
this work, the semantics of the simulator is a black-box distribution defined in
some language, which may or may not be probabilistic, and a separate PPL may
interact with the simulator during the inference process. While
\citet{baydin2019EtalumisBringing} works across language boundaries, they do not
reason about interoperation --- they only involve
one inference algorithm --- and they do not provide any soundness guarantees.
That said, \citet{baydin2019EtalumisBringing} demonstrates a simple boundary allowing for rapid
integration of many practical probabilistic programming languages, something we
also strive for.

\relworksection{Monadic semantics of PPLs.}
Numerous monads have been developed
for use as semantic domains
that capture the various notions of computation
used in probabilistic inference.
The fundamental building block for each of these models is the
probability monad $\giry$, along with its generalizations
to monads of subdistributions and measures~\cite{giry2006categorical}.
Using this probability monad to give semantics to probabilistic programs
goes back to at least
\citet{ramsey2002stochastic}, who further build on this basic setup
by introducing \emph{measure terms} to
efficiently answer expectation-based queries.
\citet{staton2016SemanticsProbabilistic}
make use of the writer monad transformer
applied to the monoid of weights
to obtain a monad suitable for modelling probabilistic programs
with score-based conditioning; we have made essential use of this
monad to define the two semantic models of \multippl{} presented in
\cref{sec:multippl}.
\citet{scibior2018FormallyJustified} use monad transformer stacks,
implemented in Haskell, to obtain a variety of
sampling-based inference algorithms in a compositional manner,
with each layer of the stack encompassing a different component of an inference
algorithm.
Our semantics of \multippl{}
builds on this line of
work in giving monadic semantics to probabilistic computations
by providing a model of exact inference via knowledge compilation
in terms of stateful manipulation of finite conditional
probability spaces and random variables.
In future work, we intend to investigate
whether this state-passing semantics can be packaged
into a monad of its own,
capturing the notion of
computation carried out when performing knowledge compilation,
by making use of recent constructions in
categorical probability~\cite{simpson2018category,simpson2017probability}.

\section{Conclusion}

Performing inference on models with a mix of continuous and discrete random variables is an important modeling challenge for
practical systems and \multippl{} offers a multi-language approach to tackle
this problem. In this work, we provide a sound denotational semantics that
generalizes for all exact inference algorithms and sampling-based
approximate inference that satisfy our semantic domains. We identify two
requirements to establish the correctness of the interoperation described: that
the exact PPL must maintain \emph{sample consistency} and that the approximate
sampling-based PPL must perform \emph{importance weighting}. We %
demonstrate that our implementation of \multippl{} benefits from the
expressiveness of \cont{} and %
makes practical problems representable and
additionally provides tractable inference from \disc{} for complex
discrete-structured probabilistic programs.

Ultimately, we hope that our multi-language perspective can lead to a clean
formal unification of many probabilistic program semantics and inference
strategies. For future work, we hope to extend our semantics to incorporate
local-search inference strategies such as sequential and Markov-Chain Monte
Carlo. With enough coverage across semantics, we also gain the opportunity to
look at probabilistic interoperation by inspecting a shared core calculus for
inference, and would draw on work
from \citet{patterson2022InteroperabilityRealizability}. Finally, by providing a
syntactic approach to inference interoperation, we also open up opportunities to
use static analysis to see when and how we might automatically insert boundaries
to further specialize a model's inference algorithm.

\section*{Acknowledgements}

This project was supported by the National Science Foundation under grant \#2220408.

\bibliographystyle{ACM-Reference-Format}
\bibliography{main}

\clearpage
\appendix
\section{Appendix}

\subsection{Syntax}

\begin{tabular}{rlrcl}
  \disc & Expressions     & $\eM,\eN$       & ::=   & $\eco X$ | $\etrue$ | $\efalse$ | $\eM\eand\eN$ | $\enot\eM$                                                          \\
        &                 &                 & |     & $\eunitelt$ | $\epair\eM\eN$ | $\efst~\eM$ | $\esnd~\eM$                                                              \\
        &                 &                 & |     & $\eret~\eM$ | $\eletin {\eco X}\eM\eN$ | $\eite\scoe\eM\eN$                                                           \\
        &                 &                 & |     & $\eflip~\scoe$ | $\eobserve~\eM$ | $\esample[\scoe]$                                                                  \\
        & Types           & $\eA,\eB$       & ::=   & $\eunit$ | $\ebool$ | $\eA\etimes\eB$                                                                                 \\
        & Contexts        & $\eDelta$       & ::=   & $\eco{X_{1}} : \eco{A_{1}},\dots,\eco{X_n} : \eco{A_n}$                                                               \\
  \\
  \cont & Expressions     & $\scoe$         & ::=   & $\sco x$ | $\strue$ | $\sfalse$ | $r$ | $\seone\sadd\setwo$ | $\sneg\scoe$ | $\seone\smul\setwo$ | $\seone\sle\setwo$ \\
        &                 &                 & |     & $\sunitelt$ | $\spair\seone\setwo$ | $\sfst~\scoe$ | $\ssnd~\scoe$                                                    \\
        &                 &                 & |     & $\sret~\scoe$ | $\sletin {\sco x}{\seone}{\setwo}$ | $\site{\seone}{\setwo}{\sethr}$                                  \\
        &                 &                 & |     & $\sco{d}$ | $\sco{\texttt{obs}(\scoe_{o}, d)}$  | $\sexact[\eco M]$                                               \\
        & Distributions   & $\sco{d}$       & ::=   & $\sflip~\scoe$ | $\sunif~\seone~\setwo$ | $\spois~\scoe$                                                             \\
        & Types           & $\ssigma,\stau$ & ::=   & $\sunit$ | $\sbool$ | $\sreal$ | $\ssigma\stimes\stau$                                                                \\
        & Contexts        & $\sGamma$       & ::=   & $\sco{x_1} : \sco{\tau_1},\dots,\sco{x_n} : \sco{\tau_n}$                                                             \\
        & Number literals & $r$             & $\in$ & $\R$                                                                                                                  \\
\end{tabular}

\subsection{Typing rules}

\subsubsection{Convertibility}

\begin{mathpar}
  \inferrule{~}{\eunit \convrel \sunit}
  \and
  \inferrule{~}{\ebool \convrel \sbool}
  \and
  \inferrule{\eA\convrel \stau \\ \eB\convrel\ssigma}
  {\eA\etimes\eB\convrel \stau\stimes\ssigma}
\end{mathpar}

\subsubsection{Pure exact sublanguage}

\begin{mathpar}
  \inferrule{\eDelta(\eco X) = \eA}
  {\epuretydefault{\eco X}{\eA}}
  \and
  \inferrule {~} {\epuretydefault{\etrue}{\ebool}}
  \and
  \inferrule {~} {\epuretydefault{\efalse}{\ebool}}
  \and
  \inferrule {\epuretydefault\eM\ebool \\ \epuretydefault\eN\ebool} {\epuretydefault{\eM\eand\eN}{\ebool}}
  \and
  \inferrule {\epuretydefault\eM\ebool} {\epuretydefault{\enot\eM}{\ebool}}
  \and
  \inferrule {~} {\epuretydefault{\eunitelt}{\eunit}}
  \and
  \inferrule
  {
    \epuretydefault{\eM}{\eA}
    \\
    \epuretydefault{\eN}{\eB}}
  {\epuretydefault{\epair\eM\eN}{\eA\etimes\eB}}
  \and
  \inferrule
  { \epuretydefault{\eM}{\eA\times\eB} }
  {\epuretydefault{\efst~\eM}{\eA}}
  \and
  \inferrule
  { \epuretydefault{\eM}{\eA\times\eB} }
  {\epuretydefault{\esnd~\eM}{\eB}}
\end{mathpar}

\subsubsection{Effectful exact sublanguage}

\begin{mathpar}
  \inferrule{\epuretydefault\eM\eA}{\ecomptydefault{\eret~\eM}{\eA}}
  \and
  \inferrule{
    \ecomptydefault\eM\eA
    \\
    \ecompty\sGamma{\eDelta,\eco X\ofty \eA}\eM\eB}
  {\ecomptydefault{\eletin{\eco X}\eM\eN}{\eB}}
  \and
  \inferrule {
    \spuretydefault\scoe\sbool
    \\
    \ecomptydefault\eM\eA
    \\
    \ecomptydefault\eN\eA
  } {\ecomptydefault{\eite\scoe\eM\eN}{\eA}}
  \and
  \inferrule {\spuretydefault\scoe\sreal}{\ecomptydefault{\eflip~\scoe}{\ebool}}
  \and
  \inferrule {\epuretydefault\eM\ebool} {\ecomptydefault{\eobserve~\eM}{\eunit}}
  \and
  \inferrule {\scomptydefault{\scoe}{\stau}
  \\ \eA\convrel\stau} {\ecomptydefault{\esample[\scoe]}{\eA}}
\end{mathpar}

\subsubsection{Pure sampling sublanguage}

\begin{mathpar}
  \inferrule{\sGamma(\sco x) = \stau}
  {\spuretydefault{\sco x}{\stau}}
  \and
  \inferrule {~} {\spuretydefault{\strue}{\sbool}}
  \and
  \inferrule {~} {\spuretydefault{\sfalse}{\sbool}}
  \and
  \inferrule {~} {\spuretydefault{r}{\sreal}}
  \and
  \inferrule {\spuretydefault{\seone}\sreal
  \\ \spuretydefault\setwo\sreal}
  {\spuretydefault{\seone\sadd\setwo}{\sreal}}
  \and
  \inferrule {\spuretydefault{\scoe}\sreal}
  {\spuretydefault{\sneg\scoe}{\sreal}}
  \and
  \inferrule {\spuretydefault{\seone}\sreal
  \\ \spuretydefault\setwo\sreal}
  {\spuretydefault{\seone\smul\setwo}{\sreal}}
  \and
  \inferrule {\spuretydefault{\seone}\sreal
  \\ \spuretydefault\setwo\sreal}
  {\spuretydefault{\seone\sle\setwo}{\sbool}}
  \and
  \inferrule {~} {\spuretydefault\sunitelt{\sunit}}
  \and
  \inferrule
  {
    \spuretydefault{{\seone}}{\ssigma}
    \\
    \spuretydefault{\setwo}{\stau}}
  {\spuretydefault{\spair\seone\setwo}{\ssigma\stimes\stau}}
  \and
  \inferrule
  {
    \spuretydefault{{\scoe}}{\ssigma\stimes\stau}
    }
  {\spuretydefault{\sfst~{\scoe}}{\stau}}
  \and
  \inferrule
  { \spuretydefault{{\scoe}}{\ssigma\stimes\stau} }
  {\spuretydefault{\ssnd~{\scoe}}{\stau}}
\end{mathpar}

\subsubsection{Effectful sampling sublanguage}

\begin{mathpar}
  \inferrule{\spuretydefault{\scoe}\stau}{\scomptydefault{\sret~{\scoe}}{\stau}}
  \and
  \inferrule{
    \scomptydefault{\seone}\ssigma
    \\
    \scompty{\sGamma,\sco x\ofty \ssigma}{\eDelta}{\setwo}\stau}
  {\scomptydefault{\sletin {\sco x}{\seone}{\setwo}}{\stau}}
  \and
  \inferrule {
    \spuretydefault{\seone}\sbool
    \\
    \scomptydefault{\setwo}\stau
    \\
    \scomptydefault{\sethr}\stau
  } {\scomptydefault{\site\seone\setwo\sethr}{\stau}}
  \and
  \inferrule {\spuretydefault\scoe\sreal}{\scomptydefault{\sflip~\scoe}{\sbool}}
  \and
  \inferrule {
    \spuretydefault{\seone}{\sreal}\\
    \spuretydefault{\setwo}{\sreal}
    }
  {\scomptydefault{\sunif~\seone~\setwo}{\sreal}}
  \and
  \inferrule {\spuretydefault\scoe\sreal} {\scomptydefault{\spois~\scoe}{\sreal}}
  \and
  \inferrule {\ecomptydefault\eM{\eA}
    \\ \eA\convrel\stau} {\scomptydefault{\sexact[\eM]}{\stau}}

  \and

  \inferrule {\spuretydefault{\seobs}\sbool  \\ \spuretydefault\seone\sreal}
             {\scomptydefault{\sobserve{\seobs}{\sflip~\seone}}{\sunit}}

  \and

  \inferrule {
    \spuretydefault{\seobs}{\sreal}\\
    \spuretydefault{\seone}{\sreal}\\
    \spuretydefault{\setwo}{\sreal}
    }
  {\scomptydefault{\sobserve{\seobs}{\sunif~\seone~\setwo}}{\sunit}}

  \and

  \inferrule
    {\spuretydefault{\seobs}{\sreal}\\
    \spuretydefault\scoe\sreal}
    {\scomptydefault{\sobserve{\seobs}{\spois~\scoe}}{\sunit}}
\end{mathpar}

\subsection{Semantics of types}

\subsubsection{Types}

\begin{align*}
  \llbrspc\eA &: \text{finite discrete measurable space} \\
  \llbrspc\eunit &= \text{the one-point space $\{\star\}$} \\
  \llbrspc\ebool &= \{\top,\bot\} \\
  \llbrspc{\eA\etimes\eB} &= \llbrspc\eA\times\llbrspc\eB
\end{align*}

\begin{align*}
  \llbr\stau &: \text{measurable space} \\
  \llbr\sunit &= \text{the one-point space $\{\star\}$} \\
  \llbr\sbool &= \text{the discrete two-point space $\{\top,\bot\}$} \\
  \llbr\sreal &= \R \\
  \llbr{\ssigma\stimes\stau} &= \llbr\ssigma\times\llbr\stau
\end{align*}

\subsubsection{Contexts}

\begin{align*}
  \llbr{\sGamma} &: \text{measurable space} \\
  \llbr{\sGamma} &= \prod_{\sco{x}\in\dom\sGamma}\llbr{\sGamma(\sco x)}
\end{align*}

\begin{align*}
  &\llbr{\eDelta} : \text{finite discrete measurable space} \\
  &\llbr{\eDelta} = \prod_{\eco{X}\in\dom\eDelta}\llbrspc{\eDelta(\eco X)} \\
\end{align*}

\subsubsection{Convertibility}

\begin{lemma}
  If $\eA\convrel\stau$ then $\llbr\eA = \llbr\stau$.
\end{lemma}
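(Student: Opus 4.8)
The plan is to proceed by structural induction on the derivation of $\eA\convrel\stau$, following the three rules defining convertibility in \cref{fig:convrel}. In each case I compare the \disc{} interpretation $\llbrspc{\eA}$ against the \cont{} interpretation $\llbr{\stau}$, read off from \cref{fig:high-level-types}, and check that the two measurable spaces coincide on the nose.

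For the two base cases the equality is immediate from the definitions. When the derivation ends in the rule for $\eunit\convrel\sunit$, both sides denote the one-point space $\{\star\}$, so $\llbrspc{\eunit} = \{\star\} = \llbr{\sunit}$. When it ends in the rule for $\ebool\convrel\sbool$, both sides denote the two-point space $\{\top,\bot\}$, so $\llbrspc{\ebool} = \{\top,\bot\} = \llbr{\sbool}$.

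The only inductive case is the product rule, where $\eA\etimes\eB\convrel\stau\stimes\ssigma$ is derived from subderivations of $\eA\convrel\stau$ and $\eB\convrel\ssigma$. The induction hypotheses give $\llbrspc{\eA} = \llbr{\stau}$ and $\llbrspc{\eB} = \llbr{\ssigma}$. Since both interpretations send a product type to the product of the interpretations of its components, I conclude $\llbrspc{\eA\etimes\eB} = \llbrspc{\eA}\times\llbrspc{\eB} = \llbr{\stau}\times\llbr{\ssigma} = \llbr{\stau\stimes\ssigma}$, closing the induction.

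I do not expect any real obstacle here; the argument is a routine structural induction. The one point worth keeping in mind is that the claimed equality is an equality of measurable spaces, not merely of underlying sets, so one should verify that the $\sigma$-algebras agree as well. This is automatic: every \disc{} type denotes a finite discrete measurable space, and a convertible \cont{} type is built only from $\sunit$, $\sbool$, and products --- never $\sreal$ --- so it likewise denotes a discrete measurable space on the same finite underlying set. Because a finite product of discrete measurable spaces is again discrete, the $\sigma$-algebras match at every step, and the set-level equality established above upgrades to the desired equality of measurable spaces.
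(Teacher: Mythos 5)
Your proof is correct and is exactly the paper's argument: the paper proves this lemma by the same induction on the derivation of $\eA\convrel\stau$, leaving the three cases implicit where you spell them out. Your extra remark about the $\sigma$-algebras agreeing (since convertible types never involve $\sreal$ and thus denote finite discrete spaces) is a sound and worthwhile clarification, but does not change the approach.
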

\begin{proof} By induction on $\eA\convrel\stau$.  \end{proof}

\subsection{Semantics of pure programs}

\begin{align*}
  \puresemarg{\epuretydefault{\eM}{\eA}}&: \llbrspc\eDelta\to\llbrspc\eA\\
  \puresemarg{\epuretydefaultsem{\eco X}{\eA}}(\delta) &= \delta(\eco X) \\
  \puresemarg{\epuretydefaultsem{\etrue}{\ebool}}(\delta) &= \top \\
  \puresemarg{\epuretydefaultsem{\efalse}{\ebool}}(\delta) &= \bot \\
  \puresemarg{\epuretydefaultsem{\eM\eand\eN}{\ebool}}(\delta) &=
    \begin{cases}
      \top, &
        \text{if }
        \puresemarg{\epuretydefaultsem{\eM}{\ebool}}(\delta)
        =
        \puresemarg{\epuretydefaultsem{\eN}{\ebool}}(\delta)
        = \top
        \\
      \bot, & \text{otherwise}
    \end{cases}
    \\
  \puresemarg{\epuretydefaultsem{\enot\eM}{\ebool}}(\delta) &=
    \begin{cases}
      \bot, &
        \text{if }
        \puresemarg{\epuretydefaultsem{\eM}{\ebool}}(\delta)
        = \top
        \\
      \top, & \text{otherwise}
    \end{cases}
    \\
  \puresemarg{\epuretydefaultsem{\eunitelt}{\eunit}}(\delta) &= \star \\
  \puresemarg{\epuretydefaultsem{\epair\eM\eN}{\eA\etimes\eB}}(\delta) &=
     (\puresemarg{\epuretydefaultsem\eM\eA}(\delta),
     \puresemarg{\epuretydefaultsem\eN\eB}(\delta)) \\
  \puresemarg{\epuretydefaultsem{\efst~\eM}{\eA}}(\delta) &=
     \pi_1(\puresemarg{\epuretydefaultsem{\eM}{\eA\etimes\eB}}(\delta))
    \\
  \puresemarg{\epuretydefaultsem{\esnd~\eM}{\eB}}(\delta) &=
     \pi_2(\puresemarg{\epuretydefaultsem{\eM}{\eA\etimes\eB}}(\delta))
\end{align*}

\begin{align*}
  \puresemarg{\spuretydefault{\scoe}{\stau}}&: \llbr\sGamma\mbleto{\llbr\stau}\\
  \puresemarg{\spuretydefaultsem{\sco x}{\stau}}(\gamma) &= \gamma(\sco x) \\
  \puresemarg{\spuretydefaultsem{\strue}{\sbool}}(\gamma) &= \top \\
  \puresemarg{\spuretydefaultsem{\sfalse}{\sbool}}(\gamma) &= \bot \\
  \puresemarg{\spuretydefaultsem{r}{\sreal}}(\gamma) &= r\\
  \puresemarg{\spuretydefaultsem{\seone\sadd\setwo}{\sreal}}(\gamma) &=
    \puresemarg{\spuretydefaultsem{\seone}\sreal}(\gamma)+
    \puresemarg{\spuretydefaultsem\setwo\sreal}(\gamma) \\
  \puresemarg{\spuretydefaultsem{\sneg\scoe}{\sreal}}(\gamma) &=
    -\puresemarg{\spuretydefaultsem{\scoe}\sreal}(\gamma)\\
  \puresemarg{\spuretydefaultsem{\seone\smul\setwo}{\sreal}}(\gamma) &=
    \puresemarg{\spuretydefaultsem{\seone}\sreal}(\gamma)\cdot
    \puresemarg{\spuretydefaultsem\setwo\sreal}(\gamma) \\
  \puresemarg{\spuretydefaultsem{\seone\sle\setwo}{\sbool}}(\gamma) &=
  \begin{cases}
    \top, & \text{if }\puresemarg{\spuretydefaultsem{\seone}\sreal}(\gamma)
             \le \puresemarg{\spuretydefaultsem\setwo\sreal}(\gamma) \\
    \bot, & \text{otherwise}
  \end{cases}\\
  \puresemarg{\spuretydefaultsem{\sunitelt}{\sunit}}(\gamma) &= \star \\
  \puresemarg{\spuretydefaultsem{\spair\seone\setwo}{\ssigma\stimes\stau}}(\gamma) &=
    (\puresemarg{\spuretydefaultsem{\seone}{\ssigma}}(\gamma),
    \puresemarg{\spuretydefaultsem{\setwo}{\stau}}(\gamma))
    \\
  \puresemarg{\spuretydefaultsem{\sfst~\scoe}{\ssigma}}(\gamma) &=
    \pi_1(\puresemarg{\spuretydefaultsem{\scoe}{\ssigma\stimes\stau}}(\gamma))
    \\
  \puresemarg{\spuretydefaultsem{\ssnd~\scoe}{\stau}}(\gamma) &=
    \pi_2(\puresemarg{\spuretydefaultsem{\scoe}{\ssigma\stimes\stau}}(\gamma))
    \\
\end{align*}

\subsection{Effectful programs: high-level model}

\begin{definition}
Let $\giry$ be the distribution monad defined over measurable spaces.
Let $\prmonad$ be the writer monad transformer applied
to $\giry$ and the monoid $(\weightmonoid,1)$ of nonegative reals under multiplication.
Concretely, $\prmonad$ sends a measurable space $A$
to the set $\giry(\weightmonoid\times A)$.
Let $\gret~x$
and $\gletin x\mu{f(x)}$ denote the usual monad operations with respect to $\prmonad$,
and let $\overline{(\cdot)}:\giry(A)\to\prmonad(A)$ be the usual lifting operation.
In addition to the usual operations,
\begin{itemize}[leftmargin=*]
  \item
    Let $\score : \weightmonoid\to\prmonad\{\star\}$ be the
    function that sends a weight $w$ to the Dirac
    distribution $\delta_{(w, \star)}$ centered at $(w,\star)$.
\end{itemize}
\end{definition}

\begin{definition}
  For $p\in \R$,
  let $\mathrm{flip}(p)$
  be the Bernoulli distribution with parameter $p$
  if $p\in [0,1]$,
  and a point mass at $\bot$ otherwise.
\end{definition}

\begin{definition}
  For $a,b\in \R$,
  let $\unif(a,b)$
  be the uniform distribution on the interval $[a,b]$
  if $a\le b$,
  and a point mass at $\min(a,b)$ otherwise.
\end{definition}

\begin{definition}
  For $\lambda\in \R$,
  let $\pois(\lambda)$
  be the Poisson distribution with parameter $\lambda$
  if $\lambda > 0$,
  and a point mass at $0$ otherwise.
\end{definition}

\subsubsection{Exact sublanguage}

\begin{align*}
  \erasesem{\ecomptydefault{\eM}{\eA}}
    &:\llbr\sGamma\times\llbr\eDelta
    \to \prmonad{\llbrspc\eA}\\
  \erasesemarg{\ecomptydefaultsem{\eret~\eM}{\eA}}(\gamma,\delta) &=
    \gret(\puresemarg{\epuretydefaultsem{\eM}{\eA}}(\delta))
    \\
  \erasesemarg{\ecomptydefaultsem{\eletin{\eco X}\eM\eN}{\eB}}(\gamma,\delta) &=
    \left(\begin{aligned}
      &x \gets \erasesemarg{\ecomptydefaultsem\eM\eA}(\gamma,\delta);\\
      &\erasesemarg{\ecomptysem\sGamma{\eDelta,\eco X\ofty \eA}\eN\eB}(\gamma,\delta[\eco X\mapsto x])
    \end{aligned}\right)
    \\
  \erasesemarg{\ecomptydefaultsem{\eite\scoe\eM\eN}{\eA}}(\gamma,\delta) &=
    \left(\begin{aligned}
      &\mathrm{if}~\puresemarg{\spuretydefaultsem\scoe\sbool}(\gamma)\\
      &\mathrm{then}~\erasesemarg{\ecomptydefaultsem\eM\eA}(\gamma,\delta)\\
      &\mathrm{else}~\erasesemarg{\ecomptydefaultsem\eN\eA}(\gamma,\delta)
    \end{aligned}\right)\\
  \erasesemarg{\ecomptydefaultsem{\eflip~\scoe}{\ebool}}(\gamma,\delta) &=
    \overline{\flip(\puresemarg{\spuretydefaultsem\scoe\sreal}(\gamma))} \\
  \erasesemarg{\ecomptydefaultsem{\eobserve~\eM}{\eunit}}(\gamma,\delta) &=
      \score(\Ind{\puresemarg{\epuretydefaultsem\eM\ebool}(\delta) = \top})
    \\
  \erasesemarg{\ecomptydefaultsem{\esample[\scoe]}{\eA}}(\gamma,\delta) &=
    \valstoe\eA\stau{\erasesemarg{\scomptydefaultsem{\scoe}{\stau}}(\gamma,\delta)}
\end{align*}

\subsubsection{Sampling sublanguage}

\begin{align*}
  \erasesem{\scomptydefault{\scoe}{\stau}} &:
    \llbr\sGamma\times\llbr\eDelta
    \to \prmonad{\llbr\stau}
    \\
  \erasesemarg{\scomptydefaultsem{\sret~{\scoe}}{\stau}}(\gamma,\delta) &=
    \gret(\puresemarg{\spuretydefaultsem{\scoe}\stau}(\gamma))
    \\
  \erasesemarg{\scomptydefaultsem{\sletin {\sco x}{\seone}{\setwo}}{\stau}}(\gamma,\delta) &=
    \left(\begin{aligned}
      &x\gets \erasesemarg{\scomptydefaultsem{\seone}\ssigma}(\gamma,\delta);\\
      &\erasesemarg{\scomptysem{\sGamma,\sco x\ofty \ssigma}\eDelta\setwo\stau}(\gamma[\sco x\mapsto x],\delta)\\
    \end{aligned}\right)
    \\
  \erasesemarg{\scomptydefaultsem{\site\seone\setwo\sethr}{\stau}}(\gamma,\delta) &=
    \left(\begin{aligned}
      &\mathrm{if}~{\puresemarg{\spuretydefaultsem{\seone}\sbool}(\gamma)}\\
      &\mathrm{then}~{\erasesemarg{\scomptydefaultsem{\setwo}\stau}(\gamma,\delta)}\\
      &\mathrm{else}~{\erasesemarg{\scomptydefaultsem{\sethr}\stau}(\gamma,\delta)}
    \end{aligned}\right)
    \\
  \erasesemarg{\scomptydefaultsem{\sflip~\scoe}{\sreal}}(\gamma,\delta) &=
    \overline{\flip(\puresemarg{\spuretydefaultsem\scoe\sreal}(\gamma))}\\
  \erasesemarg{\scomptydefaultsem{\sunif~\seone~\setwo}{\sreal}}(\gamma,\delta) &= \overline{\unif(
    \puresemarg{\spuretydefaultsem\seone\sreal}(\gamma),
        \puresemarg{\spuretydefaultsem\setwo\sreal}(\gamma))}\\
  \erasesemarg{\scomptydefaultsem{\spois~\scoe}{\sreal}}(\gamma,\delta) &=
    \overline{\pois(\puresemarg{\spuretydefaultsem\scoe\sreal}(\gamma))}\\
  \erasesemarg{\scomptydefaultsem{\sexact[\eM]}{\stau}}(\gamma,\delta) &=
      \valetos\eA\stau{\erasesemarg{\ecomptydefaultsem\eM{\eA}}(\gamma,\delta)}\\
  \erasesemarg{\scomptydefaultsem{\sobserve{\seobs}{\sflip~\seone}}{\sreal}}(\gamma,\delta) &=
      \score\left(
          \flip(\puresemarg{\spuretydefaultsem\seone\sreal}(\gamma))
          (\puresemarg{\spuretydefaultsem{\seobs}\sbool}(\gamma))
      \right) \\
  \erasesemarg{\scomptydefaultsem{\sobserve{\seobs}{\spois~\seone}}{\sreal}}(\gamma,\delta) &=
      \score\left(
          \pois(\puresemarg{\spuretydefaultsem\scoe\sreal}(\gamma))
          (\puresemarg{\spuretydefaultsem{\seobs}\sreal}(\gamma))
      \right) \\
  \erasesemarg{\scomptydefaultsem{\sobserve{\seobs}{\sunif~\seone~\setwo}}{\sreal}}(\gamma,\delta) &=
      \score\left(
          \unif(
            \puresemarg{\spuretydefaultsem\seone\sreal}(\gamma),
                \puresemarg{\spuretydefaultsem\setwo\sreal}(\gamma))
          (\puresemarg{\spuretydefaultsem{\seobs}\sbool}(\gamma))
      \right) \\
\end{align*}

\subsection{Effectful programs: low-level model}

\begin{definition}
  A \emph{finite conditional probability space}
  is a triple $(\Omega,\mu,E)$
  where
    (1) $\Omega$ is a nonempty finite prefix of $\N$,
    (2) $\mu : \Omega\to[0,1]$ is a discrete probability distribution,
    and (3) $E$ is a subset of $\Omega$.
  Let $\fcps$ be the set of finite conditional probability spaces.
\end{definition}
For readability, finite conditional probability spaces
$(\Omega,\mu,E)$ will be written $\Omega$ unless disambiguation is needed.

\begin{lemma} \label{app:lem:fcps-mblespc}
  $\fcps$ is a measurable space.
\end{lemma}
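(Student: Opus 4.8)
The plan is to exhibit $\fcps$ as a countable disjoint union of familiar measurable spaces and equip it with the corresponding coproduct $\sigma$-algebra. The first observation is that, since a finite conditional probability space $(\Omega,\mu,E)$ has $\Omega$ constrained to be a nonempty finite prefix of $\N$, the set $\Omega$ is completely determined by its cardinality $n\geq 1$. This yields a canonical partition of $\fcps$ indexed by $n$: writing $\Omega_n$ for the $n$-element prefix of $\N$, the block at index $n$ consists of all triples whose underlying set is $\Omega_n$.

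Next I would identify each block with a concrete measurable space. A triple $(\Omega_n,\mu,E)$ is determined by the pair $(\mu,E)$, where $\mu$ ranges over the probability simplex
\[
	\mathcal{D}_n := \Bigl\{\mu\in[0,1]^{\Omega_n} : \textstyle\sum_{\omega\in\Omega_n}\mu(\omega)=1\Bigr\}\subseteq\R^n,
\]
and $E$ ranges over the power set $\mathcal{P}(\Omega_n)$. The simplex $\mathcal{D}_n$ is a closed, hence Borel, subset of $\R^n$, so it becomes a measurable space under the subspace $\sigma$-algebra inherited from the Borel $\sigma$-algebra on $\R^n$; the power set $\mathcal{P}(\Omega_n)$ is finite and carries its discrete $\sigma$-algebra. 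Their product $\mathcal{D}_n\times\mathcal{P}(\Omega_n)$ is therefore a measurable space, and we obtain a bijection $\fcps\cong\coprod_{n\geq 1}\bigl(\mathcal{D}_n\times\mathcal{P}(\Omega_n)\bigr)$.

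Finally I would equip $\fcps$ with the coproduct $\sigma$-algebra transported along this bijection: a subset $S\subseteq\fcps$ is declared measurable precisely when its intersection with each block $\mathcal{D}_n\times\mathcal{P}(\Omega_n)$ is measurable. Since taking preimages along the coproduct injections commutes with complementation and countable union, this family is closed under those operations and so forms a genuine $\sigma$-algebra, making $\fcps$ a measurable space.

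The step requiring the most care is not any calculation but pinning down the $\sigma$-algebra itself. One must verify that $\mathcal{D}_n$ is Borel (immediate from closedness) and that the index set $\{n : n\geq 1\}$ is countable; the latter guarantees that the coproduct is in fact a \emph{countable} disjoint union of standard Borel spaces, hence itself standard Borel. This stronger property --- rather than bare measurability --- is what the $\giry$-based $\prmonad$ constructions in the low-level model (\cref{fig:low-level-effectful}) ultimately depend on, so it is worth recording here that each block is standard Borel, as this streamlines the subsequent measurability obligations for the indexed coproducts $\coprod_{\Omega'\in\fcps}(\dots)$ appearing there.
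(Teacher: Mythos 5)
Your proof is correct and is essentially the paper's own argument in different clothing: the paper encodes $(\{0,\dots,n-1\},\mu,E)$ as an injection into $\N\times\mathrm{list}(\R)\times\mathcal P_{\rm fin}\N$ with measurable image and pulls the $\sigma$-algebra back, which yields exactly your countable coproduct $\coprod_{n\ge 1}\bigl(\mathcal D_n\times\mathcal P(\Omega_n)\bigr)$ of simplex-times-finite-set blocks. Your closing observation that each block is standard Borel (and hence so is the countable coproduct) is a sound strengthening the paper leaves implicit, and it is indeed what the Giry-monad constructions downstream rely on.
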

\begin{proof}
  There is an injective function
  $i : \fcps \hookrightarrow \N \times \mathrm{list}(\R) \times \mathcal P_{\rm fin}\N$
  sending a finite conditional probability space
  $(\{0,\dots,n-1\},\mu,E)$
  to the triple
  $(n, [\mu(0),\dots,\mu(n-1)], E)$.
  The codomain of $i$ is a measurable space with $\sigma$-algebra defined in the
  standard way, and
  the image of $i$ is a measurable subset of this space.
  The injection $i$ identifies the image of $i$ with $\fcps$,
  making $\fcps$ a measurable space by taking preimages along $i$.
\end{proof}

\begin{definition}
  A \emph{map of finite conditional probability spaces}
  $f : (\Omega,\mu,E)\to(\Omega',\mu',E')$
  is a measure-preserving map $f : (\Omega,\mu)\to(\Omega',\mu')$
  such that $E\subseteq f^{-1}(E')$.
  For two finite conditional probability spaces
  $(\Omega,\mu,E)$ and $(\Omega',\mu',E')$,
  let $(\Omega,\mu,E)\xrightarrow{\fcps}(\Omega',\mu',E')$
  be the set of maps from
  $(\Omega,\mu,E)$ to $(\Omega',\mu',E')$.
\end{definition}

\begin{definition}
  For every $\eco A$ and $\sco \tau$
  let $\esemcomp A$ and $\ssemcomp\tau$ be the following
  $\fcps$-indexed families of sets:
\begin{align*}
  \esemcomp A &: \fcps\to\Set \\
  \esemcomp A\Omega &= \prmonad\left(\sum_{\Omega'\in\fcps} (\Omega'\xrightarrow{\fcps}\Omega)\times (\Omega'\to\llbr{\eco A}) \right)
\end{align*}
\begin{align*}
  \ssemcomp \tau &: \fcps\to\Set \\
  \ssemcomp \tau \Omega &= \prmonad\left(\sum_{\Omega'\in\fcps} (\Omega'\xrightarrow\fcps\Omega)\times \llbr{\sco \tau}\right)
\end{align*}
\end{definition}
\begin{proof}
  For these to be well-defined, the arguments to $\prmonad$
  must be measurable spaces.
  Elements of
  $\sum_{\Omega'\in\fcps} (\Omega'\xrightarrow\fcps\Omega)\times (\Omega'\to\llbr{\eco A})$
  are triples $(\Omega', f, X)$ where $\Omega'\in\fcps$ and $f$ is a map of
  finite conditional probability spaces and $X : \Omega'\to\llbr{\eco A}$.
  There is a canonical injective function $i$ from this set to the set
    $\fcps \times (\N\xrightarrow{\rm fin}\N) \times (\N\xrightarrow{\rm fin} \llbr{\eco A})$
  whose elements are triples $(\Omega', f, X)$
  where $\Omega'\in\fcps$ and $f,X$ are partial functions of type $\N\rightharpoonup\N$
  and $\N\rightharpoonup\llbr{\eco A}$ respectively with finite domain.
  This set is a measurable space by \cref{app:lem:fcps-mblespc}
  and by putting the discrete $\sigma$-algebras on
    $\N\xrightarrow{\rm fin}\N$ and $\N\xrightarrow{\rm fin} \llbr{\eco A}$.
  The image of $i$ is a measurable subset of this space,
  making
  $\sum_{\Omega'\in\fcps} (\Omega'\xrightarrow\fcps\Omega)\times (\Omega'\to\llbr{\eco A})$
  a measurable space too.
  The analogous argument also makes
  $\sum_{\Omega'\in\fcps} (\Omega'\xrightarrow\fcps\Omega)\times \llbr{\stau}$
  into a measurable space.
\end{proof}

\begin{definition}
  For two finite conditional probability spaces
  $(\Omega,\mu,E)$
  and
  $(\Omega',\mu',E')$,
  their \emph{tensor product},
  written $(\Omega,\mu,E)\otimes(\Omega',\mu',E')$,
  is
  $(f(\Omega\times\Omega'), \nu \circ f^{-1}, f(E\times E'))$
  where $\nu : \Omega\times\Omega'\to[0,1]$ is the measure
  $\nu(\omega,\omega') = \mu(\omega)\mu'(\omega')$
  and $f$ is an arbitrary isomorphism $\Omega\times\Omega'\to \{1,\dots,|\Omega| |\Omega'|\}$
  (such as $f(\omega,\omega') = |\Omega'|\omega + \omega'$).
  There are canonical projection maps
  \[
    \pi_1 : (\Omega,\mu,E)\otimes(\Omega',\mu',E') \to (\Omega,\mu,E)
    \quad\text{and}\quad
    \pi_2 : (\Omega,\mu,E)\otimes(\Omega',\mu',E') \to (\Omega',\mu',E').
    \]
  The tensor product has a unit, written $\fcpsunit$,
  defined as $(\Omega_\fcpsunit,\mu_\fcpsunit,E_\fcpsunit)$
  where $\Omega_\fcpsunit=E_\fcpsunit=\{0\}$
  and $\mu(0) = 1$.
\end{definition}

\begin{definition}
  To model sampling from finite conditional probability spaces,
  \begin{itemize}[leftmargin=*]
  \item
    For any nonempty
    set $\Omega$, let $\przero_\Omega : \prmonad(A)$ be the Dirac
    distribution at $(0,\nu) : \weightmonoid\times\Omega$ where $\nu$ is an
    arbitrary element of $\Omega$.
  \item
    For any finite conditional probability
    space $(\Omega,\mu,E)$, let $\mu|_E :\prmonad(\Omega)$
    be $\przero_{\Omega}$ if $\mu(E) = 0$, and the lifting of the
    conditional distribution of $\mu$ given $E$ if $\mu(E) > 0$.
  \end{itemize}
\end{definition}

\subsubsection{Exact sublanguage}

\begin{align*}
  \implsemarg{\ecomptydefault{\eM}{\eA}}(\Omega)
    &:\llbr\sGamma\times(\Omega\to\llbr\eDelta)
    \to \esemcomp\eA{(\Omega)}\\
  \implsemarg{\ecomptydefaultsem{\eret~\eM}{\eA}}(\Omega)(\gamma,D) &=
    \gret(\Omega,\idfn,\puresem{\epuretydefaultsem{\eM}{\eA}}\circ D)
    \\
  \implsemarg{\ecomptydefaultsem{\eletin{\eco X}\eM\eN}{\eB}}(\Omega)(\gamma,D) &=
    \left(\begin{aligned}
      &(\Omega_1,f_1,X) \gets \implsemarg{\ecomptydefaultsem\eM\eA}(\Omega)(\gamma,D);\\
      &(\Omega_2,f_2,Y) \gets \implsemarg{\eM}(\Omega_1)(\gamma,(D\circ f_1)[\eco X\mapsto X]);\\
      &\gret(\Omega_2, f_1\circ f_2, Y)
    \end{aligned}\right)
    \\
  \implsemarg{\ecomptydefaultsem{\eite\scoe\eM\eN}{\eA}}(\Omega)(\gamma,D) &=
    \left(\begin{aligned}
      &\mathrm{if}~\puresemarg{\spuretydefaultsem\scoe\sbool}(\gamma)\\
      &\mathrm{then}~\implsemarg{\ecomptydefaultsem\eM\eA}(\Omega)(\gamma,D)\\
      &\mathrm{else}~\implsemarg{\ecomptydefaultsem\eN\eA}(\Omega)(\gamma,D)
    \end{aligned}\right)\\
  \implsemarg{\ecomptydefaultsem{\eflip~\scoe}{\ebool}}(\Omega)(\gamma,D) &=
    \left(\begin{aligned}
      &p := \ite{\puresemarg{\spuretydefaultsem\scoe\sreal}(\gamma)\in[0,1]}
        {\puresemarg{\spuretydefaultsem\scoe\sreal}(\gamma)}
        {0}
        ;\\
      &\Omega_{\rm flip} := (\{0,1\},\mu,\{0,1\})\text{ where $\mu(1) = p$};\\
      &\Omega' := \Omega\otimes\Omega_{\rm flip}; \\
      &X := \omega'\mapsto \mathrm{if}~\pi_2(\omega')=1~\mathrm{then}~\top~\mathrm{else}~\bot;\\
      &\gret(\Omega',\pi_1,X)
    \end{aligned}\right)
    \\
  \implsemarg{\ecomptydefaultsem{\eobserve~\eM}{\eunit}}(\Omega,\mu,E)(\gamma,D) &=
    \left(\begin{aligned}
        &F := (\puresemarg{\epuretydefaultsem\eM\ebool}(\Omega)(D))^{-1}(\top);\\
        &\score(\mu|_E(F));\\
        &\gret((\Omega,\mu,E\cap F),\idfn,\star)
    \end{aligned}\right)
    \\
  \implsemarg{\ecomptydefaultsem{\esample[\scoe]}{\eA}}(\Omega)(\gamma,D) &=
  \left(\begin{aligned}
    &(\Omega',f,x) \gets \implsemarg{\scomptydefaultsem{\scoe}{\stau}}(\Omega)(\gamma,D);\\
    &\gret(\Omega',f,\_\mapsto \valstoe\eA\stau x)
  \end{aligned}\right)
\end{align*}

\subsubsection{Sampling sublanguage}

\begin{align*}
  \implsemarg{\scomptydefault{\scoe}{\stau}}(\Omega) &:
    \llbr\sGamma\times(\Omega\to\llbr\eDelta)
    \to \ssemcomp\stau(\Omega)
    \\
  \implsemarg{\scomptydefaultsem{\sret~{\scoe}}{\stau}}(\Omega)(\gamma,D) &=
    \gret(\Omega, \idfn,\puresemarg{\spuretydefaultsem{\scoe}\stau}(\gamma))
    \\
  \implsemarg{\scomptydefaultsem{\sletin {\sco x}{\seone}{\setwo}}{\stau}}(\Omega)(\gamma,D) &=
    \left(\begin{aligned}
      &(\Omega_1,f_1,x)\gets\implsemarg{\scomptydefaultsem{\seone}\ssigma}(\Omega)(\gamma,D)\\
      &(\Omega_2,f_2,y)\gets\implsemarg{\setwo}(\Omega_1)(\gamma[\sco x\mapsto x],D\circ f_1)\\
      &\gret(\Omega_2,f_1\circ f_2,y)
    \end{aligned}\right)
    \\
  \implsemarg{\scomptydefaultsem{\site\seone\setwo\sethr}{\stau}}(\Omega)(\gamma,D) &=
    \left(\begin{aligned}
      &\mathrm{if}~{\puresemarg{\spuretydefaultsem{\seone}\sbool}(\gamma)}\\
      &\mathrm{then}~{\implsemarg{\scomptydefaultsem{\setwo}\stau}(\Omega)(\gamma,D)}\\
      &\mathrm{else}~{\implsemarg{\scomptydefaultsem{\sethr}\stau}(\Omega)(\gamma,D)}
    \end{aligned}\right)
    \\
  \implsemarg{\scomptydefaultsem{\sflip~\scoe}{\sbool}}(\Omega)(\gamma,D) &=
    \left(\begin{aligned}
      &x\gets \overline{\flip(\puresemarg{\spuretydefaultsem\scoe\sreal}(\gamma))}\\
      &\gret(\Omega,\idfn,x)
    \end{aligned}\right)
    \\
  \implsemarg{\scomptydefaultsem{\sunif~\seone~\setwo}{\sreal}}(\Omega)(\gamma,D) &=
    \left(\begin{aligned}
      &x\gets \overline{\unif(\puresemarg{\spuretydefaultsem\seone\sreal}(\gamma),
        \puresemarg{\spuretydefaultsem\setwo\sreal}(\gamma))}\\
      &\gret(\Omega,\idfn,x)
    \end{aligned}\right)
    \\
  \implsemarg{\scomptydefaultsem{\spois~\scoe}{\sreal}}(\Omega)(\gamma,D) &=
    \left(\begin{aligned}
      &x\gets \overline{\pois(\puresemarg{\spuretydefaultsem\scoe\sreal}(\gamma))}\\
      &\gret(\Omega,\idfn,x)
    \end{aligned}\right)
    \\
  \implsemarg{\scomptydefaultsem{\sobserve{\seobs}{\sflip~\seone}}{\sunit}}(\Omega)(\gamma,D) &=
    \left(\begin{aligned}
        &w := \flip(\puresemarg{\spuretydefaultsem\seone\sreal}(\gamma))\left(\puresemarg{\spuretydefaultsem\seobs\sbool}(\gamma)\right)\\
        &\score(w);\\
        &\gret(\Omega,\idfn,\star)
    \end{aligned}\right)
    \\
  \implsemarg{\scomptydefaultsem{\sobserve{\seobs}{\sunif~\seone~\setwo}}{\sunit}}(\Omega)(\gamma,D) &=
    \left(\begin{aligned}
        &w := \unif(\puresemarg{\spuretydefaultsem\seone\sreal}(\gamma),
          \puresemarg{\spuretydefaultsem\setwo\sreal}(\gamma))
          \left(\puresemarg{\spuretydefaultsem\seobs\sbool}(\gamma)\right)\\
        &\score(w);\\
        &\gret(\Omega,\idfn,\star)
    \end{aligned}\right)
    \\
  \implsemarg{\scomptydefaultsem{\sobserve{\seobs}{\spois~\seone}}{\sunit}}(\Omega)(\gamma,D) &=
    \left(\begin{aligned}
        &w := \pois(\puresemarg{\spuretydefaultsem\seone\sreal}(\gamma))
          \left(\puresemarg{\spuretydefaultsem\seobs\sbool}(\gamma)\right)\\
        &\score(w);\\
        &\gret(\Omega,\idfn,\star)
    \end{aligned}\right)
    \\
  \implsemarg{\scomptydefaultsem{\sexact[\eM]}{\stau}}(\Omega)(\gamma,D) &=
    \left(\begin{aligned}
      &((\Omega',\mu',E'),f,X)\gets \implsemarg{\ecomptydefaultsem\eM{\eA}}(\Omega)(\gamma,D)\\
      &x\gets \big(\omega'\gets \mu'|_{E'};~ \gret(\valetos\eA\stau{X(\omega')})\big)\\
      &\gret((\Omega',\mu',E'\cap X^{-1}(x)), f, x)
    \end{aligned}\right)
\end{align*}

\subsection{Soundness} \label{app:soundness}

\begin{definition}
  Two computations $\mu,\nu : \prmonad A$
  are \emph{equal as importance samplers},
  written $\mu\preq\nu$,
  if for all bounded integrable $k : A\to\R$
  it holds that $\Ex_{(a,x)\sim \mu}[a\cdot k(x)]
  =\Ex_{(b,y)\sim \mu}[b \cdot k(y)]$.
\end{definition}

\begin{lemma}
  The equivalence relation $\preq$ is a congruence
  for the monad $\prmonad$:
  if $\mu\preq\nu : \prmonad A$
  and $f,g : A\to\prmonad B$
  with $f(x)\preq g(x)$
  for all $x$ in $A$
  then $(x\gets \mu; f(x))\preq(x\gets\nu;g(x))$.
\end{lemma}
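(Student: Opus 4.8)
The plan is to reformulate $\preq$ in terms of an induced measure and then reduce the congruence to a Fubini-style interchange. For $\mu : \prmonad A$, write $\mu^\flat$ for the measure on $A$ characterized by $\int_A k\,d\mu^\flat = \Ex_{(w,a)\sim\mu}[w\cdot k(a)]$ for bounded measurable $k$; concretely $\mu^\flat(S)=\Ex_{(w,a)\sim\mu}[w\cdot\Ind{a\in S}]$. With this notation $\mu\preq\nu$ is exactly the statement $\mu^\flat=\nu^\flat$: testing against indicator functions $a\mapsto\Ind{a\in S}$ (bounded, and integrable against the finite measure $\mu^\flat$) recovers equality on every measurable set, and conversely equal measures agree on all bounded integrable test functions. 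I would first record this equivalence, together with the standing integrability assumption $\Ex_{(w,a)\sim\mu}[w]<\infty$ that makes $\mu^\flat$ a genuine finite measure.

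The crux is then the identity expressing the $\flat$ of a bind as a Kleisli composition of measures,
\[
  (x\gets\mu;\,f(x))^\flat \;=\; \int_A f(a)^\flat\,d\mu^\flat(a),
\]
read as $(x\gets\mu;f(x))^\flat(S)=\int_A f(a)^\flat(S)\,d\mu^\flat(a)$ for each measurable $S$. To prove this I would unfold the $\prmonad$ bind, which samples $(w,a)\sim\mu$, then $(w',b)\sim f(a)$, and returns $(w\cdot w',b)$; the weight multiplication of the writer transformer is precisely what is needed. Testing against a bounded measurable $k:B\to\R$ gives
\[
  \Ex_{(c,b)\sim(x\gets\mu;f(x))}[c\cdot k(b)]
  = \Ex_{(w,a)\sim\mu}\Bigl[w\cdot \Ex_{(w',b)\sim f(a)}[w'\cdot k(b)]\Bigr]
  = \int_A\Bigl(\int_B k\,df(a)^\flat\Bigr)d\mu^\flat(a),
\]
where the interchange is justified by Fubini--Tonelli under the integrability assumption, and measurability of the inner integrand $a\mapsto\int_B k\,df(a)^\flat$ follows from $f$ being a measurable Kleisli arrow into $\giry$.

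With the composition identity in hand the congruence is immediate by substitution: the hypothesis $f(a)\preq g(a)$ for every $a$ gives $f(a)^\flat=g(a)^\flat$ pointwise, and $\mu\preq\nu$ gives $\mu^\flat=\nu^\flat$, so
\[
  (x\gets\mu;f(x))^\flat = \int_A f(a)^\flat\,d\mu^\flat
  = \int_A g(a)^\flat\,d\mu^\flat
  = \int_A g(a)^\flat\,d\nu^\flat
  = (x\gets\nu;g(x))^\flat.
\]
I expect the main obstacle to lie in the last chain rather than in the algebra: the integrand $a\mapsto g(a)^\flat(S)$ is in general \emph{unbounded}, so the step $\int_A g(a)^\flat\,d\mu^\flat=\int_A g(a)^\flat\,d\nu^\flat$ cannot be read off directly from the definition of $\preq$, which quantifies only over bounded test functions. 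The fix is to use that $\mu^\flat=\nu^\flat$ as \emph{measures} implies agreement of integrals of all nonnegative measurable integrands: split $k$ into positive and negative parts to reduce to the nonnegative case, approximate the nonnegative integrand by its truncations $\min(\,\cdot\,,n)$, and pass to the limit by monotone convergence. This is exactly why the reformulation through $\mu^\flat$ pays off: it converts the bounded-function definition of $\preq$ into equality of finite measures, for which the extension from bounded to nonnegative integrands is standard.
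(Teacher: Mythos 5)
Your proof is correct, but it takes a genuinely different route from the paper's. The paper argues directly at the level of expectations: unfold the bind to get $\Ex_{(c,y)\sim(x\gets\mu;f(x))}[c\cdot k(y)] = \Ex_{(a,x)\sim\mu}\bigl[\Ex_{(b,y)\sim f(x)}[a\cdot b\cdot k(y)]\bigr]$, replace $f$ by $g$ inside using the pointwise hypothesis $f(x)\preq g(x)$, and then apply $\mu\preq\nu$ with the test function $x\mapsto\Ex_{(b,y)\sim g(x)}[b\cdot k(y)]$, using linearity of expectation throughout --- two lines, no auxiliary machinery. Your detour through the flattened measures $\mu^\flat$ and the Kleisli-composition identity buys extra rigor at exactly the point you flag: the paper's second step silently treats the inner expectation as a legitimate bounded test function. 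Under the main text's definition $\prmonad A = \giry([0,1]\times A)$, where weights lie in $[0,1]$, this is automatic, since $\bigl|\Ex_{(b,y)\sim g(x)}[b\cdot k(y)]\bigr|\le\sup|k|$, and the paper's direct argument is complete as stated; under the appendix's variant, where the weight monoid is the nonnegative reals and scores can be densities exceeding $1$ (e.g.\ a uniform on a short interval), the inner integrand is indeed potentially unbounded, and your reformulation of $\preq$ as equality of finite measures, followed by truncation and monotone convergence, is precisely what is needed to make that substitution legitimate. The cost is your standing assumption $\Ex_{(w,a)\sim\mu}[w]<\infty$, required for $\mu^\flat$ to be a finite measure --- vacuous when weights live in $[0,1]$, but a genuine side condition in the general setting (one that holds for the semantics at hand but that the paper never needs to state). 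In short: both proofs share the same skeleton (unfold the bind, substitute pointwise, then substitute the outer distribution), but the paper's is the shorter expectation-level computation, while yours makes explicit, and correctly discharges, an integrability subtlety that the paper's one-liner elides.
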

\begin{proof}
  If $k : B\to \R$ bounded integrable
  then
  \begin{align*}
    \Ex_{(c,y)\sim(x\gets \mu;f(x))}[ck(y)]
    &=
    \Ex_{(a,x)\sim \mu}\left[\Ex_{(b,y)\sim f(x)}[a \cdot b \cdot k(y)]\right]
    \stackrel{(1)}=
    \Ex_{(a,x)\sim \mu}\left[\Ex_{(b,y)\sim g(x)}[a \cdot b \cdot k(y)]\right]\\
    &\stackrel{(2)}=
    \Ex_{(a,x)\sim \nu}\left[\Ex_{(b,y)\sim g(x)}[a \cdot b \cdot k(y)]\right]
    =
    \Ex_{(c,y)\sim(x\gets \nu;g(x))}[ck(y)]
  \end{align*}
  where $(1)$ follows from $f(x)\simeq g(x)$
  and $(2)$ from $\mu\simeq\nu$, using linearity of expectation throughout
  as needed.
\end{proof}

\begin{lemma} \label{app:lem:cond-twice}
  If $(\Omega,\mu,E)\in\fcps$
  and $F\subseteq\Omega$
  then $\mu|_E|_F = \mu|_{E\cap F}$.
\end{lemma}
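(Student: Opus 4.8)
The plan is to unfold the definition of $\mu|_E$ and reduce the claim to the elementary chain rule for conditional probability, $\mu(\,\cdot\mid E)(\,\cdot\mid F)=\mu(\,\cdot\mid E\cap F)$, while carefully tracking the degenerate cases in which a conditioning event has measure zero. Recall that for an fcps $(\Omega,\mu,E)$ the object $\mu|_E\in\prmonad(\Omega)$ is the lifting $\overline{\mu(\,\cdot\mid E)}$ of the conditional distribution $\mu(A\mid E)=\mu(A\cap E)/\mu(E)$ when $\mu(E)>0$, and the zero-weight placeholder $\przero_\Omega$ when $\mu(E)=0$. Since $F\subseteq\Omega$ is also an event on the space underlying $\mu|_E$, the expression $\mu|_E|_F$ is well-typed, and I would prove the identity by a three-way case analysis driven by the values of $\mu(E)$ and $\mu(E\cap F)$.

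In the principal case $\mu(E\cap F)>0$ (which forces $\mu(E)>0$, as $E\cap F\subseteq E$), both sides are genuine lifted conditional distributions. Here I would first note $\mu|_E(F)=\mu(E\cap F)/\mu(E)>0$, so that $\mu|_E|_F$ is the lifting of $\mu(\,\cdot\mid E)$ conditioned further on $F$, and then verify on an arbitrary event $A\subseteq\Omega$ that
\[
  \frac{\mu(A\cap F\mid E)}{\mu(F\mid E)}
  = \frac{\mu(A\cap E\cap F)/\mu(E)}{\mu(E\cap F)/\mu(E)}
  = \frac{\mu(A\cap E\cap F)}{\mu(E\cap F)}
  = \mu(A\mid E\cap F),
\]
so that $\mu|_E|_F=\overline{\mu(\,\cdot\mid E\cap F)}=\mu|_{E\cap F}$. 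This is the only step involving any calculation, and it is routine.

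The two degenerate cases both yield $\przero_\Omega$ on each side. If $\mu(E)=0$ then $\mu|_E=\przero_\Omega$, and conditioning the zero-weight placeholder on $F$ leaves it unchanged; meanwhile $\mu(E\cap F)=0$ as well, so $\mu|_{E\cap F}=\przero_\Omega$. If instead $\mu(E)>0$ but $\mu(E\cap F)=0$, then $\mu|_E(F)=\mu(E\cap F)/\mu(E)=0$, so $\mu|_E|_F=\przero_\Omega$ by definition, and again $\mu|_{E\cap F}=\przero_\Omega$. The hard part here is bookkeeping rather than analysis: I must check that these zero-measure branches collapse to the \emph{same} placeholder on the nose, so that the conclusion is honest equality in $\prmonad(\Omega)$ rather than mere equality as importance samplers ($\preq$). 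This hinges on fixing a single canonical witness for the ``arbitrary element'' of $\Omega$ used in $\przero_\Omega$ (for instance the least element of the finite prefix $\Omega\subseteq\N$), so that every occurrence of $\przero_\Omega$ denotes one and the same Dirac distribution; with that convention the equalities above hold verbatim.
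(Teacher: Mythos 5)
Your proposal is correct and follows essentially the same route as the paper's proof: the principal case is the identical chain-rule computation $(\mu|_E|_F)(A)=\mu(A\cap E\cap F)/\mu(E\cap F)=\mu|_{E\cap F}(A)$, and the degenerate situation collapses on both sides. The only differences are cosmetic: you split the paper's single case $\mu(E\cap F)=0$ into two subcases and explicitly flag the need for a canonical choice of witness in $\przero_\Omega$ so the conclusion is on-the-nose equality --- a bookkeeping point the paper elides by simply declaring both sides ``the zero measure,'' implicitly fixing $\przero_\Omega$ once and for all in its definition.
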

\begin{proof}
  If $\mu(E\cap F) = 0$ then both sides are the zero measure.
  Otherwise for all $G$ have
  $(\mu|_E|_F)(G)
  = \mu|_E(F\cap G)/\mu|_E(F)
  = \mu(E\cap F\cap G)/\mu(E\cap F)
  = \mu|_{E\cap F}(G)$.
\end{proof}

\begin{lemma} \label{app:lem:score-interchange}
  If $(\Omega,\mu,E)\in\fcps$ then
  $\left(\begin{aligned}
    &\score(\mu(E));\\
    &\omega\gets \mu|_E;\\
    &\gret~\omega
  \end{aligned}\right)
  \preq
  \left(\begin{aligned}
    &\omega\gets \mu;\\
    &\score(\Ind{\omega\in E});\\
    &\gret~\omega
  \end{aligned}\right)$.
\end{lemma}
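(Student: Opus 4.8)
The plan is to unfold the definition of $\preq$ and directly compute the weighted expectation $\Ex_{(a,\omega)}[a\cdot k(\omega)]$ of each side against an arbitrary bounded integrable $k:\Omega\to\R$, showing that both evaluate to the same quantity $\sum_{\omega\in E}\mu(\omega)\,k(\omega)$. Since $k$ is arbitrary, this establishes $\preq$.

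First I would handle the right-hand side, which is the easier direction. The sample $\omega$ is drawn from $\mu$ carrying weight $1$, and $\score(\Ind{\omega\in E})$ then multiplies the accumulated weight by $\Ind{\omega\in E}$; because the weight monoid is $(\weightmonoid,\times,1)$, the emitted pair is $(\Ind{\omega\in E},\omega)$ with $\omega\sim\mu$. Integrating against $k$ gives $\Ex_{\omega\sim\mu}[\Ind{\omega\in E}\,k(\omega)] = \sum_{\omega\in E}\mu(\omega)\,k(\omega)$ directly.

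Next I would treat the left-hand side, splitting on whether $\mu(E)>0$. When $\mu(E)>0$, the statement $\score(\mu(E))$ sets the accumulated weight to the constant $\mu(E)$, and the subsequent draw $\omega\sim\mu|_E$ carries weight $1$, so the emitted pair is $(\mu(E),\omega)$ with $\omega$ distributed according to $\mu|_E$. Using $\mu|_E(\omega)=\mu(\omega)/\mu(E)$ for $\omega\in E$, the weighted expectation becomes $\mu(E)\cdot\sum_{\omega\in E}\tfrac{\mu(\omega)}{\mu(E)}k(\omega)=\sum_{\omega\in E}\mu(\omega)\,k(\omega)$, matching the right-hand side.

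The one point requiring genuine care---and the main obstacle---is the degenerate case $\mu(E)=0$, where $\mu|_E$ is not an honest conditional distribution but is instead defined to be $\przero_\Omega$, the Dirac mass at $(0,\nu)$ for some \emph{arbitrary} $\nu\in\Omega$. Here the emitted weight is identically $0$, so the left-hand weighted expectation vanishes regardless of the arbitrary choice of $\nu$; simultaneously, $\mu(E)=\sum_{\omega\in E}\mu(\omega)=0$ together with $\mu\ge 0$ forces $\mu(\omega)=0$ for every $\omega\in E$, so $\sum_{\omega\in E}\mu(\omega)\,k(\omega)=0$ as well. Thus both sides agree in this case too, and the boundedness of $k$ guarantees all expectations are finite so that these manipulations are justified.
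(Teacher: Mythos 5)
Your proposal is correct and follows essentially the same route as the paper's proof: unfold $\preq$ and compute the weighted expectation of both sides against an arbitrary $k$, finding each equal to $\sum_{\omega\in E}\mu(\omega)\,k(\omega)$. The only difference is cosmetic --- you split out the degenerate case $\mu(E)=0$ explicitly, whereas the paper absorbs it silently into the identity $\mu|_E(\omega)\,\mu(E)=\mu(\{\omega\}\cap E)$, which vanishes on both sides when $\mu(E)=0$; your explicit treatment of $\przero_\Omega$ is a fair (arguably more careful) rendering of the same step.
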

\begin{proof}
  For all $k : \Omega\to \R$ have
  \begin{align*}
    \Ex_{(a,\omega)\sim\text{LHS}}[k(\omega)]
    &=
    \Ex_{\omega\sim \mu|_E}[\mu(E)k(\omega)]
    =
    \sum_{\omega\in\Omega}\mu|_E(\omega)\mu(E)k(\omega)
    =
    \sum_{\omega\in\Omega}\mu(\omega \cap E)k(\omega)\\
    &=
    \sum_{\omega\in\Omega}\Ind{\omega\in E}\mu(\omega)k(\omega)
    =
    \Ex_{(a,\omega)\sim\text{RHS}} [a \cdot k(\omega)].
  \end{align*}
\end{proof}

\begin{lemma} \label{app:lem:cond-nested-integral}
  If $(\Omega,\mu,E)\in\fcps$ and $X : \Omega\to A$ with $A$ finite
  then \[\left(\begin{aligned}
    &x\gets \big(\omega\gets \mu; \gret(X\omega)\big);\\
    &\omega'\gets \mu|_{X^{-1}(x)};\\
    &\gret(x,\omega)
  \end{aligned}\right)
  =
  \left(\begin{aligned}
    &\omega'\gets \mu;\\
    &\gret(X\omega',\omega')
  \end{aligned}\right).\]
\end{lemma}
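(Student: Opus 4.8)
The plan is to establish this as a genuine equality in $\prmonad(A\times\Omega)=\giry(\weightmonoid\times(A\times\Omega))$ by reducing to a pointwise comparison of probability mass functions. The first step is to notice that neither side ever performs a $\score$: the only weight other than $1$ that can arise is the $0$ produced by $\przero$ in the definition of $\mu|_{X^{-1}(x)}$, and this occurs only when $\mu(X^{-1}(x))=0$, i.e.\ on a branch the outer sample draws with probability $0$. Hence, up to a $\mu$-null set of branches, both computations are supported on $\{1\}\times(A\times\Omega)$, so proving the stated equality amounts to showing that the two induced probability distributions on the finite set $A\times\Omega$ coincide. Since $A$ is finite and $\Omega$ is a finite conditional probability space, it suffices to compare the mass each side assigns to an arbitrary pair $(a,\omega)\in A\times\Omega$.

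The right-hand side is immediate: sampling $\omega'\sim\mu$ and returning $(X\omega',\omega')$ assigns mass $\mu(\omega)\,\Ind{X\omega=a}$ to $(a,\omega)$. For the left-hand side I would unfold the two sampling steps. The outer draw $x\gets(\omega\gets\mu;\gret(X\omega))$ is distributed as the pushforward of $\mu$ along $X$, so $x=a$ with probability $\mu(X^{-1}(a))$; and when $\mu(X^{-1}(a))>0$, the inner draw $\omega'\sim\mu|_{X^{-1}(a)}$ returns $\omega$ with probability $\mu(\omega)\,\Ind{X\omega=a}/\mu(X^{-1}(a))$. Multiplying these and cancelling the normalizing factor $\mu(X^{-1}(a))$ yields mass $\mu(\omega)\,\Ind{X\omega=a}$ at $(a,\omega)$, matching the right-hand side exactly.

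The one place where care is genuinely required, and which I expect to be the only real obstacle, is the degenerate branch $\mu(X^{-1}(a))=0$, where $\mu|_{X^{-1}(a)}$ is $\przero$ rather than a conditional distribution. Here the argument is that such an $a$ is produced by the outer sample with probability exactly $\mu(X^{-1}(a))=0$, so this branch contributes nothing to the resulting distribution; symmetrically, the right-hand side also assigns mass $0$ to every $(a,\omega)$ with $X\omega=a$ in this case, since $X\omega=a$ forces $\omega\in X^{-1}(a)$ and hence $\mu(\omega)\le\mu(X^{-1}(a))=0$. Thus the two sides agree on the null branch as well, the pointwise masses coincide everywhere, and the two distributions — and therefore the two $\prmonad$ computations — are equal. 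Conceptually this is nothing more than the finite, discrete instance of disintegration, i.e.\ the chain rule $\mu(X^{-1}(a))\cdot\mu|_{X^{-1}(a)}(\,\cdot\,)=\mu(\,\cdot\,\cap X^{-1}(a))$, so no measure-theoretic machinery beyond finite sums is needed.
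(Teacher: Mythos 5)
Your proof is correct and takes essentially the same approach as the paper's: both reduce the claimed equality in $\prmonad$ to a pointwise comparison of masses on the finite set $A\times\Omega$, with the key step being the discrete disintegration identity $\mu(X^{-1}(a))\cdot\mu|_{X^{-1}(a)}(\{b\})=\mu(X^{-1}(a)\cap\{b\})$. If anything, you are more explicit than the paper about two points it leaves implicit --- that no nontrivial weights arise on either side, and that the degenerate $\przero$ branch where $\mu(X^{-1}(a))=0$ is reached with probability zero and so contributes nothing.
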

\begin{proof}
  All distributions involved are discrete, so it suffices to show
  LHS and RHS give the same probability to pairs $(a,b)$.
  \begin{align*}
    \text{LHS}(a,b)
    &= \sum_{\omega\in \Omega}\mu(\omega)\sum_{\omega\in\Omega}\mu|_{X^{-1}(X\omega)}(\omega')
      \Ind{(X\omega,\omega')=(a,b)}
    = \sum_{\omega\in \Omega,\omega'\in\Omega,X\omega=a,\omega'=b}
      \mu(\omega)
      \mu|_{X^{-1}(X\omega)}(\omega')\\
    &= \sum_{\omega\in\Omega,X\omega=a}\mu(\omega)\mu|_{X^{-1}(a)}(b)
    = \mu|_{X^{-1}(a)}(b)\mu(X^{-1}(a))
    = \mu(X^{-1}(a)\cap b)
    = \text{RHS}(a,b)
  \end{align*}
\end{proof}

\begin{theorem} \label{app:thm:semantic-soundness}
  The following hold:
  \begin{enumerate}
    \item If $\ecomptydefault\eM\eA$ then the following holds
      for all $\Omega,\mu,E,\gamma,D$:
      \[\left(\begin{aligned}
        &((\Omega',\mu',E'),f,X)\gets\implsemarg{\eM}(\Omega,\mu,E)(\gamma,D); \\
        &\omega'\gets \mu'|_{E'};\\
        &\gret(D(f(\omega')),{X\omega'})
      \end{aligned}\right)
      \preq
      \left(\begin{aligned}
        &\omega \gets \mu|_E; \\
        &x\gets \erasesemarg\eM(\gamma,D\omega);\\
        &\gret(D\omega,x)
      \end{aligned}\right)\]
    \item If $\scomptydefault\scoe\stau$ then the following holds
      for all $\Omega,\mu,E,\gamma,D$:
      \[\left(\begin{aligned}
        &((\Omega',\mu',E'),f,x)\gets\implsemarg{\scoe}(\Omega,\mu,E)(\gamma,D); \\
        &\omega'\gets \mu'|_{E'};\\
        &\gret(D(f(\omega)),x)
      \end{aligned}\right)
      \preq
      \left(\begin{aligned}
        &\omega \gets \mu|_E; \\
        &x\gets \erasesemarg\scoe(\gamma,D\omega);\\
        &\gret(D\omega, x)
      \end{aligned}\right)\]
  \end{enumerate}
\end{theorem}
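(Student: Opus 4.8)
The plan is to prove both statements simultaneously by mutual structural induction on the typing derivations $\ecomptydefault{\eM}{\eA}$ and $\scomptydefault{\scoe}{\stau}$, since the two sublanguages refer to one another through the boundary forms $\esample[\scoe]$ and $\sexact[\eM]$. Throughout I would lean on four facts already in hand: that $\preq$ is a congruence for $\prmonad$-bind, \cref{app:lem:cond-twice} (namely $\mu|_E|_F=\mu|_{E\cap F}$), \cref{app:lem:score-interchange} (score interchange), and \cref{app:lem:cond-nested-integral} (sampling once from the joint equals marginal-then-conditional). I would also invoke the natural-embedding lemma so that the boundary coercions act as the identity on underlying sets whenever $\eA\convrel\stau$, which lets me freely identify $\llbr{\eA}$ with $\llbr{\stau}$ when forming expressions like $X^{-1}(x)$ across a boundary.

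The primitive and pure cases are dispatched by unfolding both sides. For $\eret~\eM$ and $\sret~\scoe$ the low-level semantics returns the identity map and a deterministic random variable, so after sampling $\omega\gets\mu|_E$ both sides reduce literally to returning $D\omega$ paired with the pure denotation of the body at $D\omega$. For $\eflip~\scoe$, and likewise $\sflip$, $\sunif$, and $\spois$, the low-level semantics tensors a fresh space $\Omega_{\rm flip}$ onto $\Omega$ with $f=\pi_1$; conditioning the product measure on $E\times\Omega_{\rm flip}$ factorizes as $(\mu|_E)\otimes\mu_{\rm flip}$, so the fresh coordinate is distributed exactly as the corresponding high-level primitive while the old coordinate is distributed as $\mu|_E$ and is carried back unchanged by $\pi_1$, matching the right-hand side (weights are $1$ on both sides). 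The branching cases $\eite\scoe\eM\eN$ and $\site{\seone}{\setwo}{\sethr}$ follow by case-splitting on the pure, deterministic guard and applying the induction hypothesis to the branch taken.

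The genuinely interesting cases are $\eobserve$ and the two boundaries. For $\eobserve~\eM$, writing $F$ for the event on which $\eM$ holds, the low-level side weights by $\mu|_E(F)$ and conditions on $E\cap F$, whereas the high-level side samples $\omega\gets\mu|_E$ and weights by $\Ind{\omega\in F}$; instantiating \cref{app:lem:score-interchange} at the measure $\mu|_E$ and event $F$, then rewriting $(\mu|_E)|_F=\mu|_{E\cap F}$ by \cref{app:lem:cond-twice}, makes the two sides $\preq$-equal once $D$ is pushed through by congruence. For $\esample[\scoe]$ I would unfold to the underlying \cont{} computation, apply the \cont{} induction hypothesis, and push the boundary coercion through both sides by congruence. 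For $\sexact[\eM]$ --- the case requiring sample consistency --- I would unfold the low-level semantics, which first draws $x$ as $\valetos\eA\stau{X\omega'}$ with $\omega'\gets\mu'|_{E'}$ and then shrinks the conditioning set to $E'\cap X^{-1}(x)$; applying \cref{app:lem:cond-nested-integral} with base measure $\mu'|_{E'}$ and random variable $\omega'\mapsto\valetos\eA\stau{X\omega'}$, together with \cref{app:lem:cond-twice} to absorb the extra conditioning into $\mu'|_{E'\cap X^{-1}(x)}$, collapses the double sampling back into a single draw from $\mu'|_{E'}$, reducing the goal to the \disc{} induction hypothesis for $\eM$ modulo the coercion.

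The main obstacle will be the let-binding cases $\eletin{\eco X}\eM\eN$ and $\sletin{\sco x}{\seone}{\setwo}$, which carry all of the bookkeeping. Here I would apply the induction hypothesis to the first subterm, then apply it again to the continuation instantiated at the enlarged probability space $\Omega_1$ and the updated environment $(D\circ f_1)[\eco X\mapsto X]$, stitching the two applications together using the monad laws and the congruence of $\preq$. The delicate point is tracking which sample space each random variable inhabits: one must verify that the composite map $f_1\circ f_2$ and the identity $(D\circ f_1)(f_2\omega)=D(f_1(f_2\omega))$ make the environments agree, and then project away the freshly bound variable $\eco X$ at the very end so the returned substitution has the correct type. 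This is exactly where stating the theorem in terms of the full substitution $D\omega$ rather than only the output value earns its keep, since otherwise the continuation would lack the sampled value it needs for the second appeal to the induction hypothesis.
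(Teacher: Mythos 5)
Your proposal is correct and follows essentially the same route as the paper's proof: mutual induction on the typing derivations, with the congruence of $\preq$, \cref{app:lem:cond-twice}, \cref{app:lem:score-interchange} for the $\eobserve$ case, and \cref{app:lem:cond-nested-integral} (combined with \cref{app:lem:cond-twice}) for the $\sexact[\eM]$ case, exactly as in \cref{app:thm:semantic-soundness}; your remarks on the let cases and on why the theorem must return the full substitution $D\omega$ match the paper's bookkeeping precisely. The only omission is the routine $\sobs$ cases, which the paper dispatches by commuting the constant $\score$ factor past the draw from $\mu|_E$ --- the same unfolding argument you already use for the other \cont{} primitives.
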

\begin{proof}
  By induction on the typing rules. \begin{enumerate}[leftmargin=*]
  \item ~\begin{itemize}[align=left]
    \item[$({\ecomptydefault{\eret~\eM}{\eA}})$]
      \begin{align*}
        &\left(\begin{aligned}
            &((\Omega',\mu',E'),f,X)\gets\implsemarg{\eret~\eM}(\Omega,\mu,E)(\gamma,D); \\
            &\omega'\gets \mu'|_{E'};\\
            &\gret(D(f(\omega')),{X\omega'})
        \end{aligned}\right)\\
        &=\left(\begin{aligned}
            &((\Omega',\mu',E'),f,X)\gets \gret((\Omega,\mu,E),\idfn,\puresem{\epuretydefaultsem{\eM}{\eA}}\circ D);\\
            &\omega'\gets \mu'|_{E'};\\
            &\gret(D(f(\omega')),{X\omega'})
        \end{aligned}\right)\\
        &=
        \left(\begin{aligned}
            &\omega \gets \mu|_E; \\
            &\gret(D\omega,\puresemarg{\eM}(D\omega))
        \end{aligned}\right)
        =
        \left(\begin{aligned}
            &\omega \gets \mu|_E; \\
            &x\gets \erasesemarg{\eret~\eM}(\gamma,D\omega);\\
            &\gret(D\omega,x)
        \end{aligned}\right)
      \end{align*}
    \item[$({\ecomptydefault{\eletin{\eco X}\eM\eN}{\eB}})$]
      In this case we work backwards, rearranging RHS into LHS:
      \begin{align*}
        &\left(\begin{aligned}
            &\omega\gets \mu|_E;\\
            &y\gets \erasesemarg{\eletin{\eco X}\eM\eN}(\gamma,D\omega);\\
            &\gret(D\omega,y)
            \end{aligned}\right)
        =
        \left(\begin{aligned}
            &\omega\gets \mu|_E;\\
            &x\gets \erasesemarg\eM(\gamma,D\omega);\\
            &y\gets \erasesemarg\eN(\gamma,D\omega[\eco X\mapsto x]);\\
            &\gret(D\omega,y)
            \end{aligned}\right)\\
        &=
        \left(\begin{aligned}
            &(\delta,x)\gets\\
            &\hspace{1em}\begin{aligned}
                &\omega\gets \mu|_E;\\
                &x\gets \erasesemarg\eM(\gamma,D\omega);\\
                &\gret(D\omega,x)
            \end{aligned}\\
            &y\gets \erasesemarg\eN(\gamma,\delta[\eco X\mapsto x]);\\
            &\gret(D\omega,y)
            \end{aligned}\right)
        \IHpreq
        \left(\begin{aligned}
            &(\delta,x)\gets\\
            &\hspace{1em}\begin{aligned}
              &((\Omega_1,\mu_1,E_1),f,X)\gets \implsemarg{\eM}(\Omega,\mu,E)(\gamma,D);\\
              &\omega_1\gets \mu_1|_{E_1};\\
              &\gret(D(f(\omega_1)),X\omega_1)
            \end{aligned}\\
            &y\gets \erasesemarg\eN(\gamma,\delta[\eco X\mapsto x]);\\
            &\gret(D\omega,y)
            \end{aligned}\right)
          \\
        &=
        \left(\begin{aligned}
            &((\Omega_1,\mu_1,E_1),f,X)\gets \implsemarg{\eM}(\Omega,\mu,E)(\gamma,D);\\
            &\omega_1\gets \mu_1|_{E_1};\\
            &y\gets \erasesemarg\eN(\gamma,((D\circ f_1)[\eco X\mapsto X])(\omega_1));\\
            &\gret(D\omega,y)
            \end{aligned}\right)
        \\
        &\IHpreq
        \left(\begin{aligned}
        &((\Omega_1,\mu_1,E_1),f_1,X)\gets \implsemarg{\eM}(\Omega,\mu,E)(\gamma,D);\\
        &((\Omega_2,\mu_2,E_2),f_2,Y)\gets \implsemarg{\eN}(\Omega_1,\mu_1,E_1)(\gamma,(D\circ f_1)[\eco X\mapsto X]);\\
        &\omega_2\gets \mu_2|_{E_2};\\
        &\gret(D(f_1(f_2(\omega))),Y\omega_2)
        \end{aligned}\right)
        \\
        &=
        \left(\begin{aligned}
        &((\Omega',\mu',E'),f,Z)\gets \implsemarg{\eletin{\eco X}\eM\eN}(\Omega,\mu,E)(\gamma,D);\\
        &\omega'\gets \mu'|_{E'};\\
        &\gret(D(f(\omega)),Z\omega')
        \end{aligned}\right)
      \end{align*}
    \item[$({\ecomptydefault{\eite\scoe\eM\eN}{\eA}})$]
      \begin{align*}
        &\left(\begin{aligned}
            &((\Omega',\mu',E'),f,X)\gets\implsemarg{\eite\scoe\eM\eN}(\Omega,\mu,E)(\gamma,D); \\
            &\omega'\gets \mu'|_{E'};\\
            &\gret(D(f(\omega')),{X\omega'})
        \end{aligned}\right) \\
        &=\left(\begin{aligned}
            &((\Omega',\mu',E'),f,X)\gets\\
            &\hspace{1em}\begin{aligned}
              &\mathrm{if}~\puresemarg{\scoe}(\gamma)\\
              &\mathrm{then}~\implsemarg{\eM}(\Omega,\mu,E)(\gamma,D);\\
              &\mathrm{else}~\implsemarg{\eN}(\Omega,\mu,E)(\gamma,D);\\
            \end{aligned}\\
            &\omega'\gets \mu'|_{E'};\\
            &\gret(D(f(\omega')),{X\omega'})
        \end{aligned}\right)
        =\left(\begin{aligned}
            &\mathrm{if}~\puresemarg{\scoe}(\gamma)~\mathrm{then}\\
            &\hspace{1em}\begin{aligned}
                &((\Omega',\mu',E'),f,X)\gets\implsemarg{\eM}(\Omega,\mu,E)(\gamma,D);\\
                &\omega'\gets \mu'|_{E'};\\
                &\gret(D(f(\omega')),{X\omega'})
            \end{aligned}\\
            &\mathrm{else}\\
            &\hspace{1em}\begin{aligned}
                &((\Omega',\mu',E'),f,X)\gets\implsemarg{\eN}(\Omega,\mu,E)(\gamma,D);\\
                &\omega'\gets \mu'|_{E'};\\
                &\gret(D(f(\omega')),{X\omega'})
            \end{aligned}
        \end{aligned}\right)
        \\
        &\stackrel{\text{IH}\times2}\preq
        \left(\begin{aligned}
            &\mathrm{if}~\puresemarg{\scoe}(\gamma)~\mathrm{then}\\
            &\hspace{1em}\begin{aligned}
                &\omega\gets \mu|_E;\\
                &x\gets\erasesemarg{\eM}(\gamma,D\omega);\\
                &\gret(D\omega,x)
            \end{aligned}\\
            &\mathrm{else}\\
            &\hspace{1em}\begin{aligned}
                &\omega\gets \mu|_E;\\
                &x\gets\erasesemarg{\eN}(\gamma,D\omega);\\
                &\gret(D\omega,x)
            \end{aligned}
        \end{aligned}\right)
        =
        \left(\begin{aligned}
            &\omega\gets \mu|_E;\\
            &x\gets\\
            &\hspace{1em}\begin{aligned}
              &\mathrm{if}~\puresemarg{\scoe}(\gamma)\\
              &\mathrm{then}~\erasesemarg{\eM}(\gamma,D\omega)\\
              &\mathrm{else}~\erasesemarg{\eN}(\gamma,D\omega)
            \end{aligned}
            \\
            &\gret(D\omega,x)
        \end{aligned}\right)\\
        &=
        \left(\begin{aligned}
            &\omega\gets \mu|_E;\\
            &x\gets \erasesemarg{\eite\scoe\eM\eN}(\gamma,D\omega);\\
            &\gret(D\omega,x)
        \end{aligned}\right)
      \end{align*}
    \item[$({\ecomptydefault{\eflip~\scoe}{\ebool}})$]
      Let $p = (\ite{\puresemarg{\spuretydefaultsem\scoe\sreal}(\gamma)\in[0,1]}{\puresemarg{\spuretydefaultsem\scoe\sreal}(\gamma)}0)$.
      \begin{align*}
        &\left(\begin{aligned}
            &((\Omega',\mu',E'),f,X)\gets\implsemarg{\eflip~\scoe}(\Omega,\mu,E)(\gamma,D); \\
            &\omega'\gets \mu'|_{E'};\\
            &\gret(D(f(\omega')),{X\omega'})
        \end{aligned}\right)
        =
        \left(\begin{aligned}
            &(\omega,b)\gets (\mu\otimes\flip(p))|_{E\times\llbrspc\ebool};\\
            &\gret(D\omega,b)
        \end{aligned}\right)
        \\
        &=
        \left(\begin{aligned}
            &\omega \gets \mu|_E; \\
            &b\gets \flip(p);\\
            &\gret(D\omega,b)
        \end{aligned}\right)
        =
        \left(\begin{aligned}
            &\omega \gets \mu|_E; \\
            &x\gets \erasesemarg{\eflip~\scoe}(\gamma,D\omega);\\
            &\gret(D\omega,x)
        \end{aligned}\right)
      \end{align*}
    \item[$({\ecomptydefault{\eobserve~\eM}{\eunit}})$]
      Let $F$ be the subset $(\llbr{\eM}(\Omega,\mu,E)\circ D)^{-1}(\top)$ of $\Omega$.
      \begin{align*}
      &\left(\begin{aligned}
        &((\Omega',\mu',E'),f,X)\gets\implsemarg{\eobserve~\eM}(\Omega,\mu,E)(\gamma,D); \\
        &\omega'\gets \mu'|_{E'};\\
        &\gret(D(f(\omega')),{X\omega'})
      \end{aligned}\right)
      =\left(\begin{aligned}
        &\score(\mu|_E(F));\\
        &\omega\gets \mu|_{E\cap F};\\
        &\gret(D\omega,\star)
      \end{aligned}\right)\\
      &\stackcref{app:lem:cond-twice}=\left(\begin{aligned}
        &\score(\mu|_E(F));\\
        &\omega\gets \mu|_E|_F;\\
        &\gret(D\omega,\star)
      \end{aligned}\right)
      \stackcref{app:lem:score-interchange}\preq
      \left(\begin{aligned}
        &\omega \gets \mu|_E; \\
        &\score(\Ind{\omega\in F});\\
        &\gret(D\omega,\star)
        \end{aligned}\right)
      \\
      &=
      \left(\begin{aligned}
        &\omega \gets \mu|_E; \\
        &\score(\Ind{\puresemarg{\eM}(D\omega) = \top});\\
        &\gret(D\omega,\star)
        \end{aligned}\right)
      =
      \left(\begin{aligned}
        &\omega \gets \mu|_E; \\
        &x\gets \erasesemarg{\eobserve~\eM}(\gamma,D\omega);\\
        &\gret(D\omega,x)
        \end{aligned}\right)
      \end{align*}
    \item[$({\ecomptydefault{\esample[\scoe]}{\eA}})$]
      \begin{align*}
      &\left(\begin{aligned}
        &((\Omega',\mu',E'),f,X)\gets\implsemarg{\esample[\scoe]}(\Omega,\mu,E)(\gamma,D); \\
        &\omega'\gets \mu'|_{E'};\\
        &\gret(D(f(\omega')),X\omega')
      \end{aligned}\right)
      \\
      &=
      \left(\begin{aligned}
        &((\Omega',\mu',E'),f,X)\gets\\
        &\hspace{1em}
            \left(\begin{aligned}
                &((\Omega',\mu',E'),f,x) \gets \implsemarg\scoe(\Omega,\mu,E)(\gamma,D);\\
                &\gret((\Omega',\mu',E'),f,\_\mapsto \valstoe\eA\stau{x})
            \end{aligned}\right);
            \\
        &\omega'\gets \mu'|_{E'};\\
        &\gret(D(f(\omega')),X\omega')
      \end{aligned}\right)
      \\
      &=
      \left(\begin{aligned}
        &((\Omega',\mu',E'),f, x) \gets \implsemarg\scoe(\Omega,\mu,E)(\gamma,D);\\
        &\omega'\gets \mu'|_{E'};\\
        &\gret(D(f(\omega')),\valstoe\eA\stau x)
      \end{aligned}\right)
      \\
      &\IHpreq
      \left(\begin{aligned}
        &\omega \gets \mu|_E; \\
        &x\gets \erasesemarg\scoe(\gamma,D\omega);\\
        &\gret(D\omega,\valstoe\eA\stau x)
      \end{aligned}\right)
      =
      \left(\begin{aligned}
        &\omega \gets \mu|_E; \\
        &x\gets \erasesemarg{\esample[\scoe]}(\gamma,D\omega);\\
        &\gret(D\omega,x)
      \end{aligned}\right)
    \end{align*}
  \end{itemize}
  \item ~\begin{itemize}[align=left]
    \item[$({\scomptydefault{\sret~{\scoe}}{\stau}})$]
      \begin{align*}
        &\left(\begin{aligned}
            &((\Omega',\mu',E'),f,x)\gets\implsemarg{\sret~\scoe}(\Omega,\mu,E)(\gamma,D); \\
            &\omega'\gets \mu'|_{E'};\\
            &\gret(D(f(\omega)),x)
        \end{aligned}\right)
        \\
        &=\left(\begin{aligned}
            &\omega \gets \mu|_E; \\
            &\gret(D\omega, \puresemarg{\scoe}(\gamma))
        \end{aligned}\right)
        =
        \left(\begin{aligned}
            &\omega \gets \mu|_E; \\
            &x\gets \erasesemarg{\sret~\scoe}(\gamma,D\omega);\\
            &\gret(D\omega, x)
        \end{aligned}\right)
      \end{align*}
    \item[$({\scomptydefault{\sletin {\sco x}{\seone}{\setwo}}{\stau}})$]
      In this case we work backwards, rearranging RHS into LHS:
      \begin{align*}
        &\left(\begin{aligned}
            &\omega \gets \mu|_E; \\
            &x\gets \erasesemarg{\sletin{\sco x}\seone\setwo}(\gamma,D\omega);\\
            &\gret(D\omega,x)
        \end{aligned}\right)
        =
        \left(\begin{aligned}
            &\omega \gets \mu|_E; \\
            &x\gets \erasesemarg{\seone}(\gamma,D\omega);\\
            &y\gets \erasesemarg{\setwo}(\gamma[\sco x\mapsto x],D\omega);\\
            &\gret(D\omega,y)
        \end{aligned}\right)
        \\
        &=
        \left(\begin{aligned}
            &(\delta,x)\gets\\
            &\hspace{1em} \left(\begin{aligned}
                &\omega\gets \mu|_E; \\
                &x\gets \erasesemarg{\seone}(\gamma,D\omega);\\
                &\gret(D\omega,x)
              \end{aligned}\right);\\
            &y\gets \erasesemarg{\setwo}(\gamma[\sco x\mapsto x],\delta);\\
            &\gret(\delta,y)
        \end{aligned}\right)
        \IHpreq
        \left(\begin{aligned}
            &(\delta,x)\gets\\
            &\hspace{1em} \left(\begin{aligned}
                &((\Omega_1,\mu_1,E_1),f_1,x)\gets\implsemarg{\seone}(\Omega,\mu,E)(\gamma,D); \\
                &\omega_1\gets \mu_1|_{E_1}; \\
                &\gret(D(f_1(\omega_1)),X\omega_1)
              \end{aligned}\right);\\
            &y\gets \erasesemarg{\setwo}(\gamma[\sco x\mapsto x],\delta);\\
            &\gret(\delta,y)
        \end{aligned}\right)
        \\
        &=
        \left(\begin{aligned}
            &((\Omega_1,\mu_1,E_1),f_1,x)\gets\implsemarg{\seone}(\Omega,\mu,E)(\gamma,D); \\
            &\omega_1\gets \mu_1|_{E_1}; \\
            &\delta := D(f_1(\omega_1));\\
            &y\gets \erasesemarg{\setwo}(\gamma[\sco x\mapsto x],\delta);\\
            &\gret(\delta,y)
        \end{aligned}\right)\\
        &\IHpreq
        \left(\begin{aligned}
            &((\Omega_1,\mu_1,E_1),f_1,x)\gets\implsemarg{\seone}(\Omega,\mu,E)(\gamma,D); \\
            &((\Omega_2,\mu_2,E_2),f_2,y)\gets\implsemarg{\setwo}
              (\Omega_1,\mu_1,E_1)(\gamma[\sco x\mapsto x],D\circ f_1); \\
            &\omega_2\gets \mu_2|_{E_2};\\
            &\gret(D(f_1(f_2(\omega_2))),y)
        \end{aligned}\right)
        \\
        &=
        \left(\begin{aligned}
            &((\Omega',\mu',E'),f,z)\gets\implsemarg{\sletin{\sco x}\seone\setwo}(\Omega,\mu,E)(\gamma,D); \\
            &\omega'\gets \mu'|_{E'};\\
            &\gret(D(f(\omega')),z)
        \end{aligned}\right)
      \end{align*}
    \item[$({\scomptydefault{\site\seone\setwo\sethr}{\stau}})$]
      \begin{align*}
        &\left(\begin{aligned}
            &((\Omega',\mu',E'),f,x)\gets\implsemarg{\site\seone\setwo\sethr}(\Omega,\mu,E)(\gamma,D); \\
            &\omega'\gets \mu'|_{E'};\\
            &\gret(D(f(\omega')),x)
        \end{aligned}\right)\\
        &=\left(\begin{aligned}
            &((\Omega',\mu',E'),f,x)\gets\\
            &\hspace{1em}\begin{aligned}
              &\mathrm{if}~\puresemarg{\seone}(\gamma)\\
              &\mathrm{then}~\implsemarg{\setwo}(\Omega,\mu,E)(\gamma,D);\\
              &\mathrm{else}~\implsemarg{\sethr}(\Omega,\mu,E)(\gamma,D);\\
            \end{aligned}\\
            &\omega'\gets \mu'|_{E'};\\
            &\gret(D(f(\omega')),x)
        \end{aligned}\right)
        =\left(\begin{aligned}
            &\mathrm{if}~\puresemarg{\seone}(\gamma)~\mathrm{then}\\
            &\hspace{1em}\begin{aligned}
                &((\Omega',\mu',E'),f,x)\gets\implsemarg{\setwo}(\Omega,\mu,E)(\gamma,D);\\
                &\omega'\gets \mu'|_{E'};\\
                &\gret(D(f(\omega')),x)
            \end{aligned}\\
            &\mathrm{else}\\
            &\hspace{1em}\begin{aligned}
                &((\Omega',\mu',E'),f,x)\gets\implsemarg{\sethr}(\Omega,\mu,E)(\gamma,D);\\
                &\omega'\gets \mu'|_{E'};\\
                &\gret(D(f(\omega')),x)
            \end{aligned}
        \end{aligned}\right)
        \\
        &\stackrel{\text{IH}\times2}\preq
        \left(\begin{aligned}
            &\mathrm{if}~\puresemarg{\seone}(\gamma)~\mathrm{then}\\
            &\hspace{1em}\begin{aligned}
                &\omega\gets \mu|_E;\\
                &x\gets\erasesemarg{\setwo}(\gamma,D\omega);\\
                &\gret(D\omega,x)
            \end{aligned}\\
            &\mathrm{else}\\
            &\hspace{1em}\begin{aligned}
                &\omega\gets \mu|_E;\\
                &x\gets\erasesemarg{\sethr}(\gamma,D\omega);\\
                &\gret(D\omega,x)
            \end{aligned}
        \end{aligned}\right)
        =
        \left(\begin{aligned}
            &\omega\gets \mu|_E;\\
            &x\gets\\
            &\hspace{1em}\begin{aligned}
              &\mathrm{if}~\puresemarg{\seone}(\gamma)\\
              &\mathrm{then}~\erasesemarg{\setwo}(\gamma,D\omega)\\
              &\mathrm{else}~\erasesemarg{\sethr}(\gamma,D\omega)
            \end{aligned}
            \\
            &\gret(D\omega,x)
        \end{aligned}\right)
        \\
        &=
        \left(\begin{aligned}
            &\omega\gets \mu|_E;\\
            &x\gets \erasesemarg{\site\seone\setwo\sethr}(\gamma,D\omega);\\
            &\gret(D\omega,x)
        \end{aligned}\right)
      \end{align*}
    \item[$({\scomptydefault\scoe{\stau}}$ for $\scoe = \sflip~\seone$ or $\scoe = \sunif~\seone~\setwo$ or $\scoe = \spois~{\seone}$)]
      ~\linebreak
      If $\scoe = \sflip~{\seone}$ for some $\seone$,
      then let $\nu$
      be the distribution $\flip(\puresemarg{\spuretydefaultsem{\seone}\sreal}(\gamma))$.

      If $\scoe = \sunif~\seone~\setwo$ then let
      $\nu$ be the distribution $\unif(\puresemarg{\spuretydefaultsem\seone\sreal}(\gamma),
        \puresemarg{\spuretydefaultsem\seone\sreal}(\gamma))$.

      If $\scoe = \spois~{\seone}$ for some $\seone$,
      then let $\nu$
      be the distribution $\pois(\puresemarg{\spuretydefaultsem{\seone}\sreal}(\gamma))$.

      In all cases,
      \begin{align*}
        &\left(\begin{aligned}
            &((\Omega',\mu',E'),f,x)\gets\implsemarg{\scoe}(\Omega,\mu,E)(\gamma,D); \\
            &\omega'\gets \mu'|_{E'};\\
            &\gret(D(f(\omega')),x)
        \end{aligned}\right)
        =\left(\begin{aligned}
            &((\Omega',\mu',E'),f,x)\gets\\
            &\hspace{1em}\left(\begin{aligned}
                &x\gets\overline\nu;\\
                &\gret((\Omega,\mu,E),\idfn,x)
            \end{aligned}\right);\\
            &\omega'\gets \mu'|_{E'};\\
            &\gret(D(f(\omega')),x)
        \end{aligned}\right)\\
        &=
        \left(\begin{aligned}
            &x\gets \overline\nu;\\
            &\omega\gets \mu|_{E};\\
            &\gret(D\omega,x)
        \end{aligned}\right)
        =
        \left(\begin{aligned}
            &\omega\gets \mu|_{E};\\
            &x\gets \erasesemarg{\scoe}(\gamma,D\omega);\\
            &\gret(D\omega,x)
        \end{aligned}\right)
      \end{align*}
    \item[$({\scomptydefault{\sobserve{\seobs}{e}} {\sunit}}$ for $e = \sflip~\seone$ or $e = \sunif~\seone~\setwo$ or $e = \spois~{\seone}$)] %
      ~\linebreak
      If $e = \sflip~{\seone}$ for some $\seone$,
      then let $\nu$
      be the distribution $\flip(\puresemarg{\spuretydefaultsem{\seone}\sreal}(\gamma))$.

      If $e = \sunif~\seone~\setwo$ then let
      $\nu$ be the distribution $\unif(\puresemarg{\spuretydefaultsem\seone\sreal}(\gamma),
        \puresemarg{\spuretydefaultsem\seone\sreal}(\gamma))$.

      If $e = \spois~{\seone}$ for some $\seone$,
      then let $\nu$
      be the distribution $\pois(\puresemarg{\spuretydefaultsem{\seone}\sreal}(\gamma))$.

      In all cases,
      \begin{align*}
        &\left(\begin{aligned}
            &((\Omega',\mu',E'),f,x)\gets\implsemarg{\sobserve{\seobs}{\scoe}}(\Omega,\mu,E)(\gamma,D); \\
            &\omega'\gets \mu'|_{E'};\\
            &\gret(D(f(\omega')),x)
        \end{aligned}\right)
        =\left(\begin{aligned}
            &\score(\overline\nu(\puresemarg\seobs(\gamma)));\\
            &\omega \gets \mu|_{E};\\
            &\gret(D\omega,\star)
        \end{aligned}\right)\\
      \end{align*}
      \cont{}'s $\score$ does not effect the FCPS and we commute with $\mu|_E$,
      \begin{align*}
        &\overset{\textrm{comm}}{=}\left(\begin{aligned}
            &\omega\gets \mu|_{E};\\
            &\score(\overline\nu(\puresemarg\seobs(\gamma)));\\
            &\gret(D\omega,\star)
        \end{aligned}\right)
        =
        \left(\begin{aligned}
            &\omega\gets \mu|_{E};\\
            &x\gets \erasesemarg{\scoe}(\gamma,D\omega);\\
            &\gret(D\omega,x)
        \end{aligned}\right)
      \end{align*}
    \item[$({\scomptydefault{\sexact[\eM]}{\stau}})$]
      \begin{align*}
        &\left(\begin{aligned}
            &((\Omega',\mu',E'),f,x)\gets\implsemarg{\sexact[\eM]}(\Omega,\mu,E)(\gamma,D); \\
            &\omega'\gets \mu'|_{E'};\\
            &\gret(D(f(\omega')),x)
        \end{aligned}\right)
        \\
        &=\left(\begin{aligned}
            &((\Omega',\mu',E'),f,x)\gets\\
            &\hspace{1em}\left(\begin{aligned}
                    &((\Omega',\mu',E'),f,X)\gets \implsemarg{\ecomptydefaultsem\eM{\eA}}(\Omega,\mu,E)(\gamma,D)\\
                &x\gets \big(\omega'\gets \mu'|_{E'};~ \gret(\valetos\eA\stau{X\omega'})\big)\\
                &\gret((\Omega',\mu',E'\cap X^{-1}(x)), x)
                \end{aligned}\right)\\
            &\omega'\gets \mu'|_{E'};\\
            &\gret(D(f(\omega')),x)
        \end{aligned}\right)
        \\
        &=
        \left(\begin{aligned}
            &((\Omega',\mu',E'),f,X)\gets \implsemarg{\ecomptydefaultsem\eM{\eA}}(\Omega,\mu,E)(\gamma,D)\\
            &x\gets \big(\omega'\gets \mu'|_{E'};~ \gret(\valetos\eA\stau{X\omega'})\big)\\
            &E'' := E'\cap X^{-1}(x);\\
            &\omega'\gets \mu'|_{E''};\\
            &\gret(D(f(\omega')),x)
        \end{aligned}\right)\\
        &\stackcref{app:lem:cond-twice,app:lem:cond-nested-integral}=
        \left(\begin{aligned}
            &((\Omega',\mu',E'),X)\gets \implsemarg{\ecomptydefaultsem\eM{\eA}}(\Omega,\mu,E)(\gamma,D)\\
            &\omega'\gets \mu'|_{E'};\\
            &\gret(D(f(\omega')),\valetos\eA\stau{X\omega'})
        \end{aligned}\right)\\
      &\IHpreq\left(\begin{aligned}
        &\omega \gets \mu|_E; \\
        &x\gets \erasesemarg{\eM}(\gamma,D\omega);\\
        &\gret(D\omega, \valetos\eA\stau x)
      \end{aligned}\right)
      =\left(\begin{aligned}
        &\omega \gets \mu|_E; \\
        &x\gets \erasesemarg{\sexact[\eM]}(\gamma,D\omega);\\
        &\gret(D\omega, x)
      \end{aligned}\right)
      \end{align*}
  \end{itemize}
  \end{enumerate}
\end{proof}

\begin{definition}
  For a closed program $\scompty{\sco{\pmb\cdot}}{\eco{\pmb\cdot}}\scoe\stau$,
  let $\sevalprog(e) $ be the computation
  \[\left(\begin{aligned}
    &(\_,\_,x)\gets \implsemarg{\scoe}(\fcpsunit)(\emptyset,\emptyset);\\
    &\gret~ x
  \end{aligned}\right) : \prmonad{\llbr\stau}\]
  where $\emptyset$ denotes the empty substitution.
  Let $\seraseevalprog(e)$ be the computation
  $\erasesemarg{\scoe}(\emptyset,\emptyset) : \prmonad{\llbr\stau}$.
\end{definition}

\begin{theorem}[\cont{} soundness] \label{app:thm:toplevel-soundness}
  If $\scompty{\sco\cdot}{\eco\cdot}\scoe\stau$
  then
  $\sevalprog(\scoe) \preq \seraseevalprog(\scoe)$.
\end{theorem}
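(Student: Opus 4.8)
The plan is to deduce \cref{app:thm:toplevel-soundness} as a corollary of the open-term soundness statement \cref{app:thm:semantic-soundness}(2), which already absorbs the entire inductive argument (and with it the two interchange lemmas \cref{app:lem:score-interchange} and \cref{app:lem:cond-nested-integral} governing $\eobserve$ and $\sexact[\eM]$). Since the program $\scoe$ here is closed, the only remaining work is to instantiate that theorem at the trivial data $\Omega=\fcpsunit$, $\gamma=\emptyset$, $D=\emptyset$ and to recognize the two resulting computations as $\sevalprog(\scoe)$ and $\seraseevalprog(\scoe)$.

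First I would specialize the right-hand side of \cref{app:thm:semantic-soundness}(2). Because $\sGamma$ and $\eDelta$ are empty, $\llbr\eDelta$ is the one-point space, so $D$ is forced to be the constant $\_\mapsto\star$ and the leading draw $\omega\gets\mu|_E$ over the unit space contributes unit weight and a fixed point. The right-hand side therefore reduces to \[\big(x\gets\erasesemarg{\scoe}(\emptyset,\emptyset);\ \gret(\star,x)\big),\] which is exactly $\seraseevalprog(\scoe)$ paired with the constant $\star$. As $\preq$ only tests against bounded integrable $k$ and the first coordinate is constant, this pairing is immaterial, so the right-hand side is $\preq$-equivalent to $\seraseevalprog(\scoe)$.

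The delicate half is the left-hand side. After specialization it reads \[\big(((\Omega',\mu',E'),f,x)\gets\implsemarg{\scoe}(\fcpsunit)(\emptyset,\emptyset);\ \omega'\gets\mu'|_{E'};\ \gret(\star,x)\big),\] whereas $\sevalprog(\scoe)$ simply discards the space and returns $x$ with no terminal draw. Since $x$ and $\star$ are independent of $\omega'$, that draw acts purely on the weight: binding $\mu'|_{E'}$ multiplies the accumulated weight by the total mass of $\mu'|_{E'}$, namely $1$ when $\mu'(E')>0$ and $0$ when $\mu'(E')=0$ (where $\mu'|_{E'}=\przero_{\Omega'}$). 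Consequently, testing against any bounded integrable $k$, the left-hand side and $\sevalprog(\scoe)$ differ by exactly $\Ex[\,a\cdot\Ind{\mu'(E')=0}\cdot k(x)\,]$, with $a$ the weight emitted by $\implsemarg{\scoe}$. The main obstacle is to show this discrepancy vanishes.

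I would dispatch it with a short \emph{support invariant}, proved by a separate induction on the typing of effectful terms: adopting the natural convention that $\mu|_E(F)=0$ once $\mu(E)=0$ (so a null conditioning event collapses the score to $0$, consistent with $\mu|_E=\przero$), every execution of $\implsemarg{\eM}$ or $\implsemarg{\scoe}$ \emph{from a non-null space} that terminates with $\mu'(E')=0$ emits weight $a=0$. The interesting cases mirror the main proof: $\eobserve~\eM$ scores $\mu|_E(F)=\mu(E\cap F)/\mu(E)$ and shrinks the event to $E\cap F$, so a null resulting event forces the score to $0$; $\eflip$ and the sampling forms leave $\mu'(E')=\mu(E)>0$ and so never trigger the hypothesis; $\sexact[\eM]$ resamples $\omega'$ as a positive-mass atom of $E'$ and intersects with $X^{-1}(x)$, keeping the event non-null exactly when the incoming weight is nonzero; and \texttt{let}/\texttt{if} compose because a first sub-block ending null already carries weight $0$, zeroing the product regardless of the second block. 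Starting from $\fcpsunit$, whose event has mass $1$, the invariant applies and forces $\Ex[\,a\cdot\Ind{\mu'(E')=0}\cdot k(x)\,]=0$. Hence $\sevalprog(\scoe)\preq(\text{LHS})$, and chaining with $(\text{LHS})\preq(\text{RHS})\preq\seraseevalprog(\scoe)$ via transitivity of $\preq$ yields the theorem.
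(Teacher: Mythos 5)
Your proposal is correct, and it follows the same route as the paper: the paper's entire proof of this theorem is the single line ``Apply \cref{app:thm:semantic-soundness}'', i.e., instantiate the open-term soundness statement at $\Omega=\fcpsunit$ and $\gamma=D=\emptyset$ and read off the two sides. Your reduction of the right-hand side (dropping the leading draw over the unit space and the immaterial pairing with $\star$) is routine and matches what the paper implicitly does. What you add---and what the one-line proof silently elides---is the bridging step on the left: the instantiated left-hand side of \cref{app:thm:semantic-soundness}(2) still carries the terminal draw $\omega'\gets\mu'|_{E'}$, which $\sevalprog(\scoe)$ lacks, and since $\mu'|_{E'}=\przero_{\Omega'}$ contributes weight $0$ when $\mu'(E')=0$, the two computations agree as importance samplers only up to the discrepancy $\Ex[a\cdot\Ind{\mu'(E')=0}\cdot k(x)]$ that you isolate. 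Your support invariant---any terminating run of $\implsem-$ from a non-null space that ends with a null conditioning set has already accumulated weight $0$---is exactly the right way to kill this term, and your case analysis is sound: $\eobserve$ is the only construct that shrinks $E$, and it scores $\mu|_E(F)$ at precisely that moment; $\eflip$ preserves $\mu(E)$ under the tensor; the pure sampling forms leave the space untouched; and $\sexact[\eM]$ intersects $E'$ with $X^{-1}(x)$ for an $x$ whose preimage contains a positive-mass atom of $\mu'|_{E'}$ (when the incoming space is non-null; when it is null, the resampling bind already zeroes the weight). You are also right that a convention such as $\mu|_E(F)=0$ when $\mu(E)=0$ is needed to make the observe case well-defined, a point the paper's definitions leave implicit. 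In short, yours is a strictly more careful rendering of the paper's argument rather than a different one: the paper treats the passage from the open-term statement to the closed-program statement as immediate, but making it rigorous does require your invariant (or an equivalent almost-sure nullity argument), so nothing in your proposal is superfluous.
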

\begin{proof} Apply \cref{app:thm:semantic-soundness}. \end{proof}

\subsection{Evaluation details}
In this section, we include three example programs from our evaluation to
showcase the syntax of our implementation. These include the 15-node arrival model with tree-topology, the 9-node reachability model with grid topology.

We additionally provide our evaluations, inclusive of standard error over the 100 runs.

\begin{figure}[H]
\begin{lstlisting}[
 basicstyle=\footnotesize\ttfamily,
  language=custom,mathescape=true,caption={\textsc{Reachability-9}},label={lst:reliability:apx},escapechar=|]
exact {
  let x00 = flip 1.0 / 3.0 in
  let x01 = if x00 then flip 1.0 / 4.0 else flip 1.0 / 5.0 in
  let x10 = if x00 then flip 1.0 / 4.0 else flip 1.0 / 5.0 in
  let diag = sample {
    x02 ~ if x01 then bern(1.0 / 4.0) else bern(1.0 / 5.0);
    x20 ~ if x10 then bern(1.0 / 4.0) else bern(1.0 / 5.0);
    x11 ~ if  x10 &&  x01 then bern(1.0 / 6.0)
     else if  x10 && !x01 then bern(1.0 / 7.0)
     else if !x10 &&  x01 then bern(1.0 / 8.0)
                          else bern(1.0 / 9.0);
    (x20, x11, x02)
  } in
  let x20 = diag[0] in
  let x11 = diag[1] in
  let x02 = diag[2] in

  let x12 = if  x11 &&  x02 then flip 1.0 / 6.0
       else if  x11 && !x02 then flip 1.0 / 7.0
       else if !x11 &&  x02 then flip 1.0 / 8.0
                            else flip 1.0 / 9.0 in
  let x21 = if  x20 &&  x11 then flip 1.0 / 6.0
       else if  x20 && !x11 then flip 1.0 / 7.0
       else if !x20 &&  x11 then flip 1.0 / 8.0
                            else flip 1.0 / 9.0 in
  let x22 = if  x21 &&  x12 then flip 1.0 / 6.0
       else if  x21 && !x12 then flip 1.0 / 7.0
       else if !x21 &&  x12 then flip 1.0 / 8.0
                            else flip 1.0 / 9.0 in
  observe x22 in
  ( x00, x01, x02
  , x10, x11, x12
  , x20, x21, x22
  )
}
\end{lstlisting}
\caption{The network reachability program for the 9-node grid topology}
\end{figure}

\begin{figure}[H]
\begin{lstlisting}[
 basicstyle=\footnotesize\ttfamily,
 language=custom,mathescape=true,caption={\textsc{Arrival-15}},label={lst:arrival:apx},escapechar=|]
exact fn network () -> Bool {
  let n30r = true in
  let n20r = if  n30r then flip 1.0 / 2.0 else false in
  let n31r = if !n20r then flip 1.0 / 2.0 else false in

  let n10r = if  n20r then flip 1.0 / 2.0 else false in
  let n21r = if !n10r then flip 1.0 / 2.0 else false in
  let n32r = if  n21r then flip 1.0 / 2.0 else false in
  let n33r = if !n21r then flip 1.0 / 2.0 else false in

  let n0   = n10r in

  let n10l = if  n0   then flip 1.0 / 2.0 else false in

  let n20l = if  n10l then flip 1.0 / 2.0 else false in
  let n21l = if !n10l then flip 1.0 / 2.0 else false in

  let n30l = if  n20l then flip 1.0 / 2.0 else false in
  let n31l = if !n20l then flip 1.0 / 2.0 else false in
  let n32l = if  n21l then flip 1.0 / 2.0 else false in
  let n33l = if !n21l then flip 1.0 / 2.0 else false in
  observe n32l in
  n0
}

sample {
  ix ~ poisson(3.0);
  npackets <- 0;
  while ix > 0 {
    traverses <- exact(network());
    npackets <- if traverses { npackets + 1 } else { npackets };
    ix <- ix - 1;
    true
  };
  npackets
}
\end{lstlisting}
\caption{The network reachability program for the 9-node grid topology}
\end{figure}

\begin{figure}[H]
\begin{lstlisting}[
 basicstyle=\footnotesize\ttfamily,
 language=custom,mathescape=true,caption={\textsc{Gossip-4}},label={lst:gossip:apx}]
sample fn forward(ix : Int) -> Int {
  s ~ discrete(1.0 / 3.0, 1.0 / 3.0, 1.0 / 3.0);
  if s < ix { s } else { s + 1 }
}

exact fn node(nid : Int) -> (Int, Int) {
  let p1 = sample(forward(nid)) in
  let p2 = sample(forward(nid)) in
  (p1, p2)
}

exact fn network_step(
  n0 : Bool, n1 : Bool, n2: Bool, n3 : Bool, next : Int
) -> (Bool, Bool, Bool, Bool, Int, Int) {
  let n0 = n0 || (next == 0) in
  let n1 = n1 || (next == 1) in
  let n2 = n2 || (next == 2) in
  let n3 = n3 || (next == 3) in
  let fwd = node(next) in
  (n0, n1, n2, n3, fwd[0], fwd[1])
}

sample fn as_num(b : Bool) -> Float {
  if (b) { 1.0 } else { 0.0 }
}

sample {
  p <- exact(node(0));
  p1 <- p[0];  p2 <- p[1];
  i0 <- true; i1 <- false; i2 <- false; i3 <- false;
  q  <- []; q  <- push(q, p1); q  <- push(q, p2);
  num_steps ~ discrete(0.25,0.25,0.25,0.25);
  num_steps <- num_steps + 4;
  while (num_steps > 0) {
    nxt <- head(q);
    q   <- tail(q);
    state <- exact(network_step(i0, i1, i2, i3, nxt));
    i0 <- state[0]; i1 <- state[1]; i2 <- state[2]; i3 <- state[3];
    q  <- push(q, state[4]);
    q  <- push(q, state[5]);
    num_steps <- num_steps - 1;
    true
  };
  n0 <- as_num(i0);
  n1 <- as_num(i1);
  n2 <- as_num(i2);
  n3 <- as_num(i3);
  (n0 + n1 + n2 + n3)
}
\end{lstlisting}
\caption{The network reachability program for the 9-node grid topology}
\end{figure}

\begin{figure}[H]
\begin{sideways}
  \begin{minipage}{\textheight}
\centering
\begin{tabular}{c|ccccccccc}
  \toprule
        \multirow{2}{*}{Model} & \multicolumn{2}{c}{PSI} &              \multicolumn{2}{c}{Pyro}         &            \multicolumn{2}{c}{\host{} (\cont{})} &                \multicolumn{2}{c}{\host{}} \\
                               &       L1 &      Time(s) &                    L1 &              Time(s)  &                    L1 &                 Time(s)  &                      L1 &         Time(s)  \\
\midrule
 arrival/tree-15               &        --- &            --- &          0.365 $\pm$ 0.004 &         12.713 $\pm$0.025 &          0.355 $\pm$0.004 &    \textbf{0.247 $\pm$0.001} &   \textbf{0.337 $\pm$0.003} &     0.349 $\pm$0.002 \\
 arrival/tree-31               &        --- &            --- &          0.216 $\pm$0.002 &         26.366 $\pm$0.054 &          0.218 $\pm$0.002 &    \textbf{0.561 $\pm$0.004} &   \textbf{0.179 $\pm$0.002} &     0.754 $\pm$0.002 \\
 arrival/tree-63               &        --- &            --- &          0.118 $\pm$0.002 &         53.946 $\pm$0.086 &          0.120 $\pm$0.002 &    \textbf{1.469 $\pm$0.003} &   \textbf{0.093 $\pm$0.002} &     1.912 $\pm$0.004 \\
\midrule
 alarm                        &      t/o &          t/o &           1.290 $\pm$0.056 &         16.851 $\pm$0.024 &          1.173 $\pm$0.049 & \textbf{0.433 $\pm$0.002} &  \textbf{0.364 $\pm$0.015} & 14.444 $\pm$0.008 \\
 insurance                    &      t/o &          t/o &           0.149 $\pm$0.008 &         13.724 $\pm$0.020 &          0.144 $\pm$0.007 & \textbf{1.104 $\pm$0.012} &  \textbf{0.099 $\pm$0.006} & 11.406 $\pm$0.015 \\
\midrule
 gossip/4                     &        -- &            -- & \textbf{0.119 $\pm$0.002} & \textbf{6.734 $\pm$0.027} & \textbf{0.119 $\pm$0.001} &             0.720 $\pm$0.002 &            0.118 $\pm$0.002 &         0.812 $\pm$0.014  \\
 gossip/10                    &        -- &            -- &         0.533 $\pm$0.003  &         6.786 $\pm$0.009  &         0.531 $\pm$0.003  &             1.561 $\pm$0.006 &   \textbf{0.524 $\pm$0.003} & \textbf{1.373 $\pm$0.004} \\
 gossip/20                    &        -- &            -- &         0.747 $\pm$0.003  &         7.064 $\pm$0.010  & \textbf{0.745 $\pm$0.003} &             3.565 $\pm$0.005 &            0.750 $\pm$0.003 & \textbf{2.888 $\pm$0.003} \\
\midrule
\end{tabular}
\caption{%
Empirical results of our benchmarks of the arrival, hybrid Bayesian
network, and gossip tasks. ``\host{} (\cont{})'' shows the evaluation of a baseline \cont{} program
with no boundary crossings into \disc{}, evaluations under the ``\host{}''
column performs interoperation. ``t/o'' indicates a timeout beyond 30 minutes,
and ``---'' indicates that the problem is not expressible in PSI because of an
unbounded loop. %
}\label{apx:eval:approx}

  \end{minipage}
\end{sideways}
\end{figure}

\begin{figure}[H]
\begin{sideways}
  \begin{minipage}{\textheight}
  \centering
  \begin{tabular}{c|cc|ccccccc}
      \toprule
  \multirow{2}{*}{\# Nodes} &               PSI &      \host{} (\disc{})   &   \multicolumn{2}{c}{Pyro}                                  &  \multicolumn{2}{c}{\host{} (\cont{})}         &  \multicolumn{2}{c}{\host{}}                     \\
                            &           Time(s) &         Time(s)          &          L1                 &      Time(s)                  &                     L1 &       Time(s)         &                  L1        &       Time(s)       \\
    \midrule
             9              & 546.748 $\pm$ 8.018 & \textbf{ 0.001 $\pm$0.000} &    0.080 $\pm$0.002           &         3.827 $\pm$0.008          &           0.079 $\pm$0.002 & \textbf{0.067 $\pm$0.001} &      \textbf{0.033 $\pm$0.001} &        0.098 $\pm$0.001 \\
             36             &               t/o & \textbf{ 0.089 $\pm$0.002} &    1.812 $\pm$0.009           &        14.952 $\pm$0.025          &           0.309 $\pm$0.004 & \textbf{0.277 $\pm$0.002} &      \textbf{0.055 $\pm$0.002} &        1.169 $\pm$0.004 \\
             81             &               t/o & \textbf{40.728 $\pm$0.276} &    7.814 $\pm$0.017           &        33.199 $\pm$0.049          &           0.680 $\pm$0.005 & \textbf{0.887 $\pm$0.002} &      \textbf{0.079 $\pm$0.002} &       81.300 $\pm$0.278 \\
  \end{tabular}
  \caption{%
    Exact and approximate results for models performing approximate inference}%
  \label{apx:eval:reliability}
  \end{minipage}
\end{sideways}
\end{figure}

\section{Comparison with Nested Inference} \label{app:sec:comparison-with-nested-inference}

It is interesting to contemplate the relationship between the nested inference
approach and \multippl{}.  A crisp comparison --- for instance, a formal
expressivity result establishing that it is not possible to represent our
multi-language interoperation using nested inference --- is difficult, due to
(1) the large variety of different approaches to nested inference, and (2) the
fact that such expressivity results are very hard even for very restricted
languages, let alone rich general-purpose probabilistic programming languages.
It would be very interesting to investigate the relative expressivity of
multi-language interoperation and nested inference, but
such an investigation is beyond the scope of this paper.

At the very least, what we can say is that \multippl{}'s low-level denotational
semantics, and hence also its inference strategy, is markedly different from the
standard measure-theoretic semantics of nested inference, such as
\citet{staton2017commutative}'s model of nested queries.
In \citet{staton2017commutative}, probabilistic computations denote measure-theoretic
kernels. The computation \(\mathsf{normalize}(t)\)
represents a nested query: its
takes in a probabilistic
computation \(t\) of type \(A\)
and produces a deterministic computation
that can yield one of three possible outcomes:
\begin{enumerate}
\item a tuple \((1,(e,d))\) consisting of a normalizing constant
  \(e\) and a distribution \(d\) over elements of type \(A\),
\item a tuple \((2,())\) signalling that the normalizing constant
  was zero,
\item or a tuple \((3,())\) signalling that the normalizing constant
  was infinity.
\end{enumerate}
Soundness of nested inference is then justified by the
following equational reasoning principle, reproduced here from
\citet{staton2017commutative}:
\begin{equation} \label{eqn:nested-query}
  \llbr{t} =
  \llbr{\begin{aligned}
    \mathsf{case~normalize}(t)~\mathsf{of}
      ~&(1,(e,d))\Rightarrow\mathsf{score}(e);\mathsf{sample}(d)\\
      |~&(2,())\Rightarrow\mathsf{score}(0);t\\
      |~&(3,())\Rightarrow t
  \end{aligned}}
\end{equation}
Disregarding the edge cases where the normalizing constant
is zero or infinity,
\cref{eqn:nested-query} says
that running a probabilistic computation \(t\)
is the same as (1) computing a representation of
the distribution of \(t\), either by exact inference
or by ``freezing a simulation'' and examining ``a histogram that has been built''
in the words of \citet{staton2017commutative},
and then (2) resampling from this distribution
and scoring by the normalizing constant.

There are a number of points that prevent this model of nested inference
from being applied directly to justify correctness of \multippl{}, which
may help to clarify the difference between nested inference and \multippl{}'s
inference interoperation:
\begin{itemize}
\item It is about a kernel-based model where programs take deterministic values
as input, but MultiPPL's \disc{} programs take random variables as input. This
is an important difference: because \disc{} programs take random variables as
input, a \disc{} program that simply makes use of a free variable in context
produces a random variable, not a fixed deterministic value. In contrast, such
a program always denotes a deterministic point mass under a kernel-based
semantics.

\item \multippl{}'s semantics of \disc{} programs is stateful, and this is
necessary to model how exact inference works in our implementation.
Contrastingly, the kernel-based model of \citet{staton2017commutative} is not
stateful in this way, and so would not have been sufficient for establishing
our main soundness theorem.

\item Because the model of \citet{staton2017commutative} is not stateful,
  it can't account for the stateful updates to the probability space that \multippl{}
  performs in order to ensure sample consistency.

\item Finally, \cref{eqn:nested-query} suggests to use importance reweighting via
  \(\mathsf{score}\) to ensure sound
  nesting of exact inference within an approximate-inference context, by properly
  taking the normalizing constant produced by the nested query into account.
  This is quite different from how \multippl{}'s \(\sexact\) boundary form handles
  the nesting of \disc{} subterms in \cont{} contexts --- as shown in \cref{fig:low-level-effectful},
  the low-level semantics of \(\sexact\) does not perform importance reweighting.
  Instead, importance reweighting occurs in the semantics of \disc{}-observe
  statements. Thus \cref{eqn:nested-query} does not explain \multippl{}'s importance reweighting scheme.
\end{itemize}
Together, these points show that multi-language inference
is distinct enough from nested inference that the standard measure-theoretic
model of nested queries from \citet{staton2017commutative}
cannot be used directly to justify key aspects of \multippl{}'s inference strategy,
such as the need for sample consistency and when importance reweighting is performed.
Though \citet{staton2017commutative} is just one approach to modelling nested inference,
it being a relatively well-established approach suggests that there are indeed
fundamental differences between nested inference and \multippl{}'s multilanguage inference.

\end{document}